\theoremstyle{plain}
\newtheorem{theorem}{Theorem}[section]
\newtheorem{proposition}[theorem]{Proposition}
\theoremstyle{definition}
\numberwithin{equation}{section}
\newcommand{\DPM}{{\scalebox{0.5}{\mbox{DPM}}}}
\newcommand{\Wishart}{\mathrm{Wishart}}
\newcommand{\Ga}{\mathrm{Gamma}}
\newcommand{\T}{\mathtt{T}}
\newcommand{\R}{\mathbb{R}}
\newcommand{\Z}{\mathbb{Z}}
\newcommand{\C}{\mathcal{C}}
\newcommand{\M}{\mathcal{M}}
\newcommand{\N}{\mathcal{N}}
\newcommand{\X}{\mathcal{X}}
\renewcommand{\emptyset}{\varnothing}
\newcommand{\iid}{\stackrel{\mathrm{iid}}{\sim}}
\newcommand{\matrixsmall}[1]{\bigl(\begin{smallmatrix}#1\end{smallmatrix} \bigr)}
\newcommand{\branch}[4]{
\left\{
	\begin{array}{ll}
		#1  & \mbox{if } #2 \\
		#3 & \mbox{if } #4
	\end{array}
\right.
}
\title{Mixture models with a prior on the number of components}
\author{Jeffrey W. Miller and Matthew T. Harrison}
\begin{document}
\begin{abstract}
A natural Bayesian approach for mixture models with an unknown number of components is to take the usual finite mixture model with Dirichlet weights, and put a prior on the number of components---that is, to use a mixture of finite mixtures (MFM).  While inference in MFMs can be done with methods such as reversible jump Markov chain Monte Carlo, it is much more common to use Dirichlet process mixture (DPM) models because of the relative ease and generality with which DPM samplers can be applied.  In this paper, we show that, in fact, many of the attractive mathematical properties of DPMs are also exhibited by MFMs---a simple exchangeable partition distribution, restaurant process, random measure representation, and in certain cases, a stick-breaking representation. Consequently, the powerful methods developed for inference in DPMs can be directly applied to MFMs as well.  We illustrate with simulated and real data, including high-dimensional gene expression data.  
\end{abstract}
%Like DPMs, MFMs have been shown to be consistent for the density, and indeed, to concentrate at the same rate, in cases studied so far.  Further, MFMs are consistent for the mixing distribution and the number of components when the data comes from a finite mixture in the assumed family; this is a key advantage since in most applications the primary use of mixture models is for clustering rather than density estimation. 

\maketitle

\section{Introduction}
\label{section:intro}

Mixture models are used in a wide range of applications, including
population structure \citep{Pritchard_2000}, 
document modeling \citep{Blei_2003b},
speaker recognition \citep{Reynolds_2000},
computer vision \citep{stauffer1999adaptive},
phylogenetics \citep{pagel2004phylogenetic}, and
gene expression profiling \citep{Yeung_2001}, to name a few prominent examples.
A common issue with finite mixtures is that it can be difficult to choose an appropriate number of mixture components, and many methods have been proposed for making this choice \citep[e.g., ][]{Henna_1985,Keribin_2000,Leroux_1992,Ishwaran_2001,James_2001}. %,Henna_2005,Woo_2006,Woo_2007}.

From a Bayesian perspective, perhaps the most natural approach is to treat the number of components like any other unknown parameter and put a prior on it.  For short, we refer to such a model as a mixture of finite mixtures (MFM).
% This type of model has found many applications, including 
% neuron spike classification \citep{Nguyen_2003},
% survival analysis \citep{Marin_2005},
% DNA restriction mapping \citep{Lee_1998}, and
% gene expression profiling \citep{Vogl_2005}.
Several inference methods have been proposed for this type of model \citep{Nobile_1994,Phillips_1996,Richardson_1997,Stephens_2000,Nobile_2007}, the most commonly-used method being reversible jump Markov chain Monte Carlo \citep{green1995reversible,Richardson_1997}.  Reversible jump is a very general technique, and has been successfully applied in many contexts, but it is perceived to be difficult to use, and applying it to new situations requires one to design good reversible jump moves, which can be nontrivial, particularly in high-dimensional parameter spaces.

Meanwhile, infinite mixture models such as Dirichlet process mixtures (DPMs) have become popular, partly due to the existence of generic Markov chain Monte Carlo (MCMC) algorithms that can easily be adapted to new applications \citep{Neal_1992,Neal_2000,MacEachern_1994,MacEachern_1998b,MacEachern_1998,Bush_1996,West_1992,West_1994,Escobar_1995,Liu_1994,Dahl_2003,Dahl_2005,Jain_2004,Jain_2007}.
% \citep{Escobar_1988,MacEachern_1994,Escobar_1995,West_1992,West_1994,Neal_1992,Liu_1994,Bush_1996, MacEachern_1998,MacEachern_1998b,Escobar_1998,MacEachern_1999,Neal_2000}.  
These algorithms are made possible by the fact that the Dirichlet process has a variety of elegant mathematical properties---an exchangeable partition distribution, the Blackwell--MacQueen urn process (a.k.a.\ the Chinese restaurant process), a random discrete measure formulation, and the Sethuraman--Tiwari stick-breaking representation \citep{Ferguson_1973,Antoniak_1974,Blackwell_1973,aldous1985exchangeability,Pitman_1995,Pitman_1996,Sethuraman_1994,Sethuraman_1981}.

The purpose of this paper is to show that in fact, MFMs typically exhibit many of these same appealing properties---an exchangeable partition distribution, urn/restaurant process, random discrete measure formulation, and in certain cases, a simple stick-breaking representation---and consequently, that many of the inference techniques developed for DPMs can be directly applied to MFMs.
In particular, these properties enable one to do inference in MFMs without using reversible jump.
Interestingly, the key properties of MFMs hold for any choice of prior distribution on the number of components.

There has been a large amount of research on efficient inference methods for DPMs, and an immediate consequence of the present work is that most of these methods can also be used for MFMs.  Since many DPM sampling algorithms (for both conjugate and non-conjugate priors) are designed to have good mixing properties across a wide range of applications---for instance, the Jain--Neal split-merge samplers \citep{Jain_2004,Jain_2007}, coupled with incremental Gibbs moves \citep{MacEachern_1994,Neal_1992,Neal_2000}---this greatly simplifies the use of MFMs in new applications.

%This advantage is particularly salient when using conjugate priors in high-dimensional parameter spaces, since the MFM restaurant process enables one to use the conjugate split-merge sampler \citep{Jain_2004}, which operates on the space of partitions, and can provide significantly faster mixing than sampling from the full parameter space (as in reversible jump).

This work resolves an open problem discussed by \citet{Green_2001}, who noted that it would be interesting to be able to apply DPM samplers to MFMs: 
\begin{quote}
\emph{``In view of the intimate correspondence between DP and [MFM] models discussed above, it is 
interesting to examine the possibilities of using either class of MCMC methods for the other 
model class. We have been unsuccessful in our search for incremental Gibbs samplers for the 
[MFM] models, but it turns out to be reasonably straightforward to implement reversible jump split/merge methods for DP models.''}
\end{quote}

The paper is organized as follows. In the remainder of this section, we motivate this work with an overview of the similarities and differences between MFMs and DPMs, illustrated by a simulation example. In Sections \ref{section:model} and \ref{section:partitions}, we formally define the MFM and show that it gives rise to a simple exchangeable partition distribution closely paralleling that of the Dirichlet process. In Section \ref{section:representations}, we present the P\'olya urn scheme (restaurant process), random discrete measure formulation, and stick-breaking representation for the MFM.  In Section \ref{section:asymptotics}, we establish some asymptotic results for MFMs.  In Section \ref{section:inference}, we show how the properties in Sections \ref{section:partitions} and \ref{section:representations} lead to efficient inference algorithms for the MFM, and in Section \ref{section:empirical}, we apply the model to the galaxy dataset (a standard benchmark) and to high-dimensional gene expression data used to discriminate cancer subtypes.  We close with a brief discussion.

\subsection{Background: Similarities and differences between MFMs and DPMs}
\label{section:comparison}

In many respects, MFMs and DPMs are quite similar, but there are some important differences.
In general, we would not say that one model is uniformly better than the other; rather, one should choose the model which is best suited to the application at hand.

\subsubsection*{Density estimation}
\begin{figure}
\centering
\includegraphics[trim=1cm 0.5cm 1cm 0.2cm, clip=true, width=0.49\textwidth]{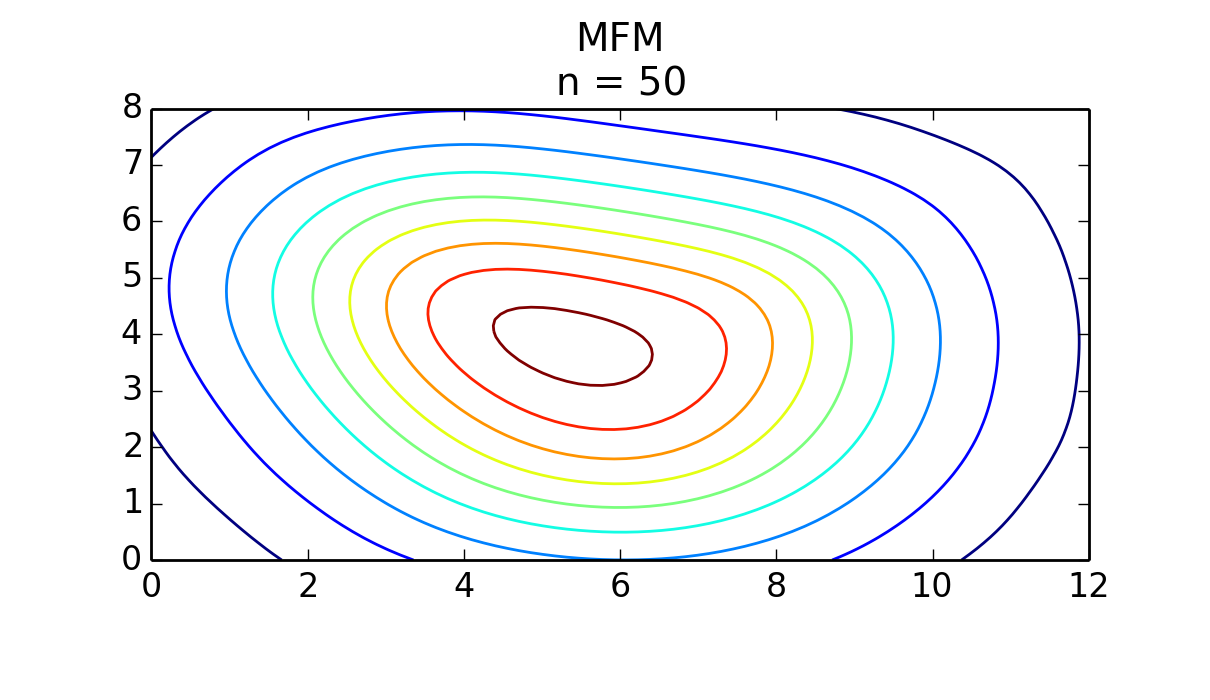}
\includegraphics[trim=1cm 0.5cm 1cm 0.2cm, clip=true, width=0.49\textwidth]{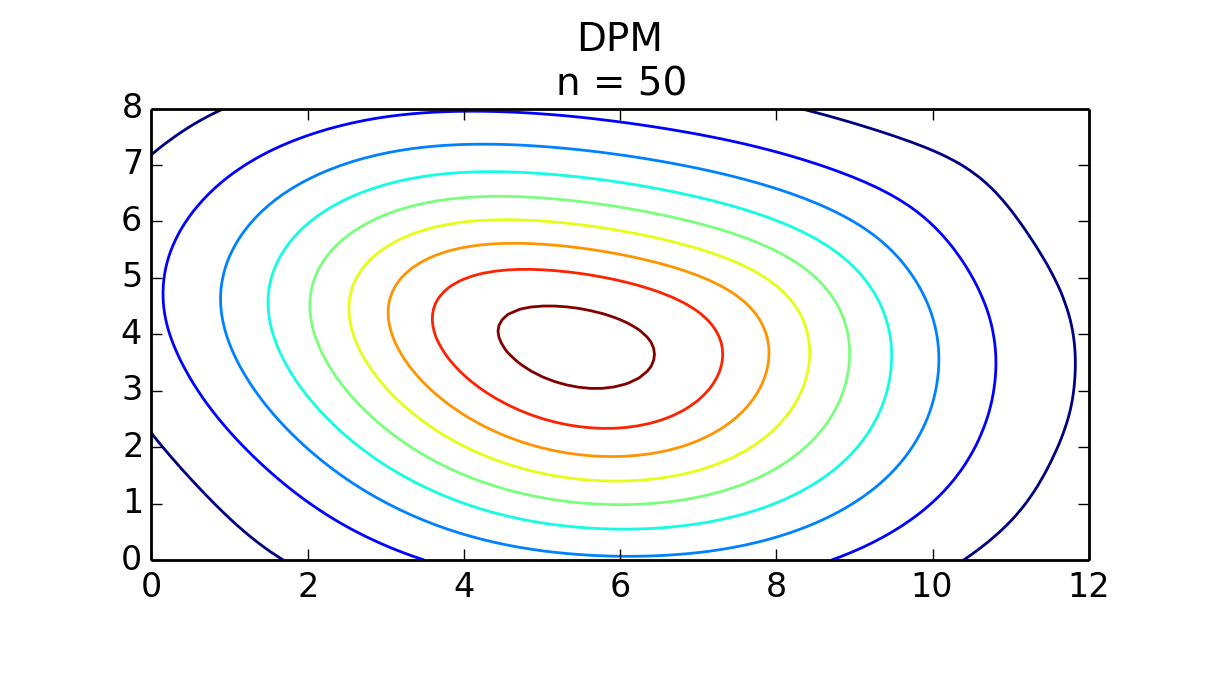}
\includegraphics[trim=1cm 0.5cm 1cm 0.2cm, clip=true, width=0.49\textwidth]{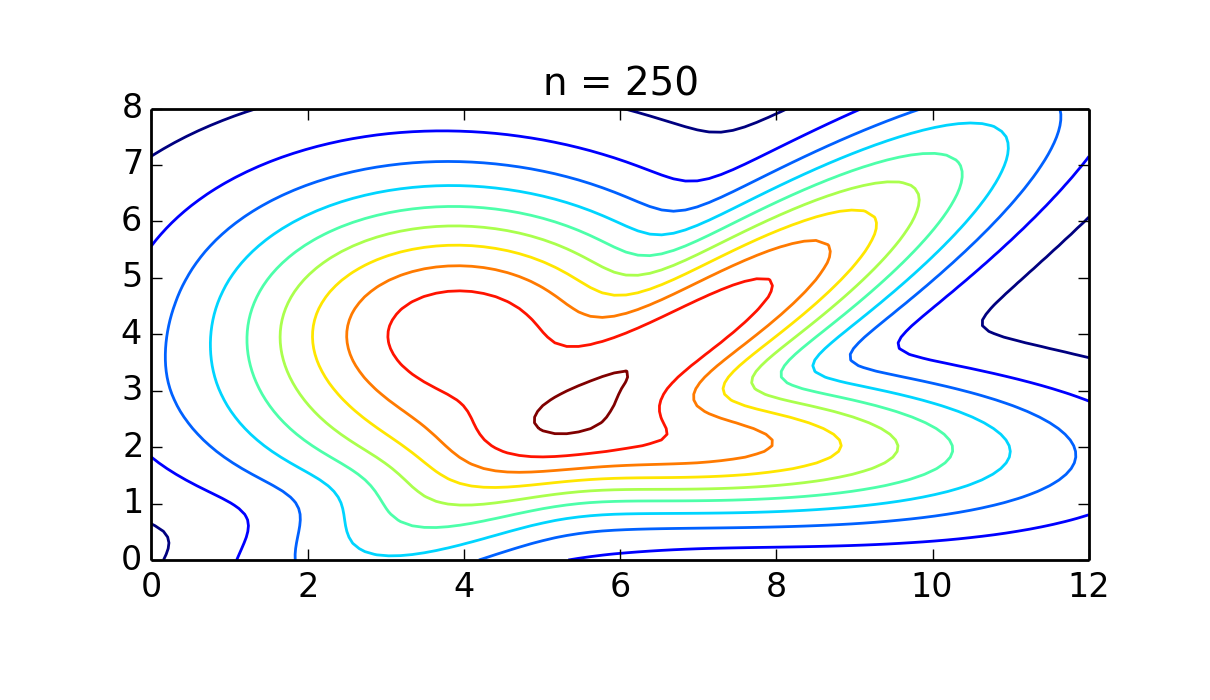}
\includegraphics[trim=1cm 0.5cm 1cm 0.2cm, clip=true, width=0.49\textwidth]{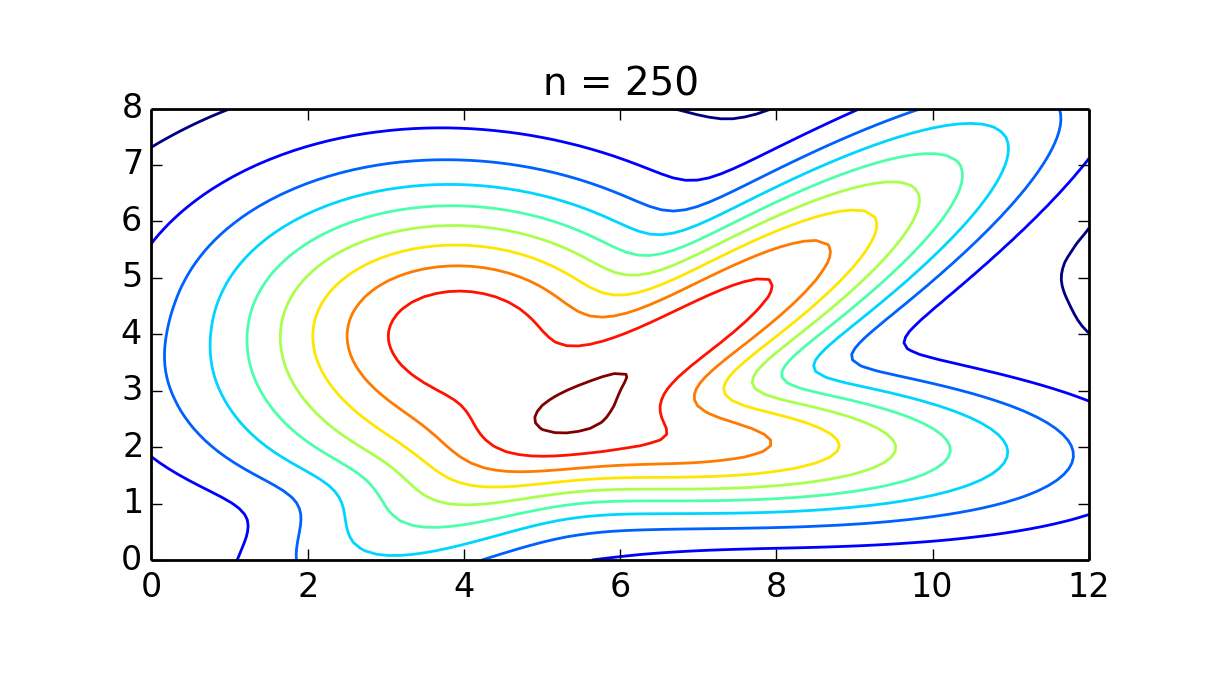}
\includegraphics[trim=1cm 0.5cm 1cm 0.2cm, clip=true, width=0.49\textwidth]{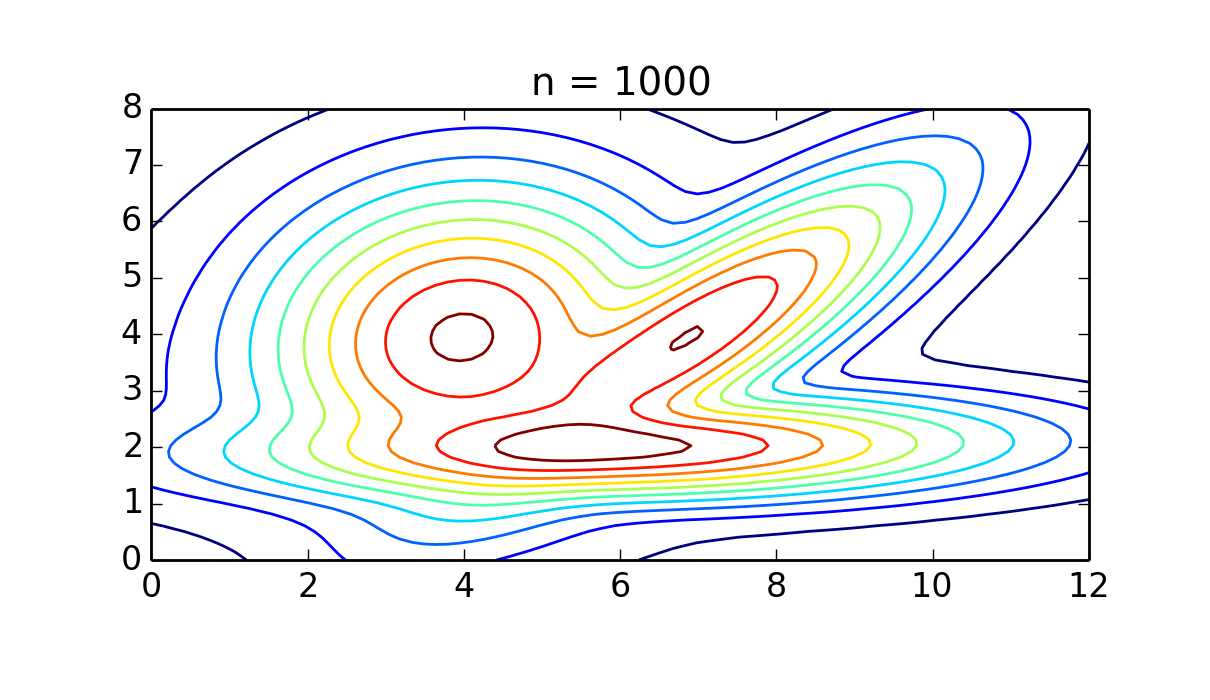}
\includegraphics[trim=1cm 0.5cm 1cm 0.2cm, clip=true, width=0.49\textwidth]{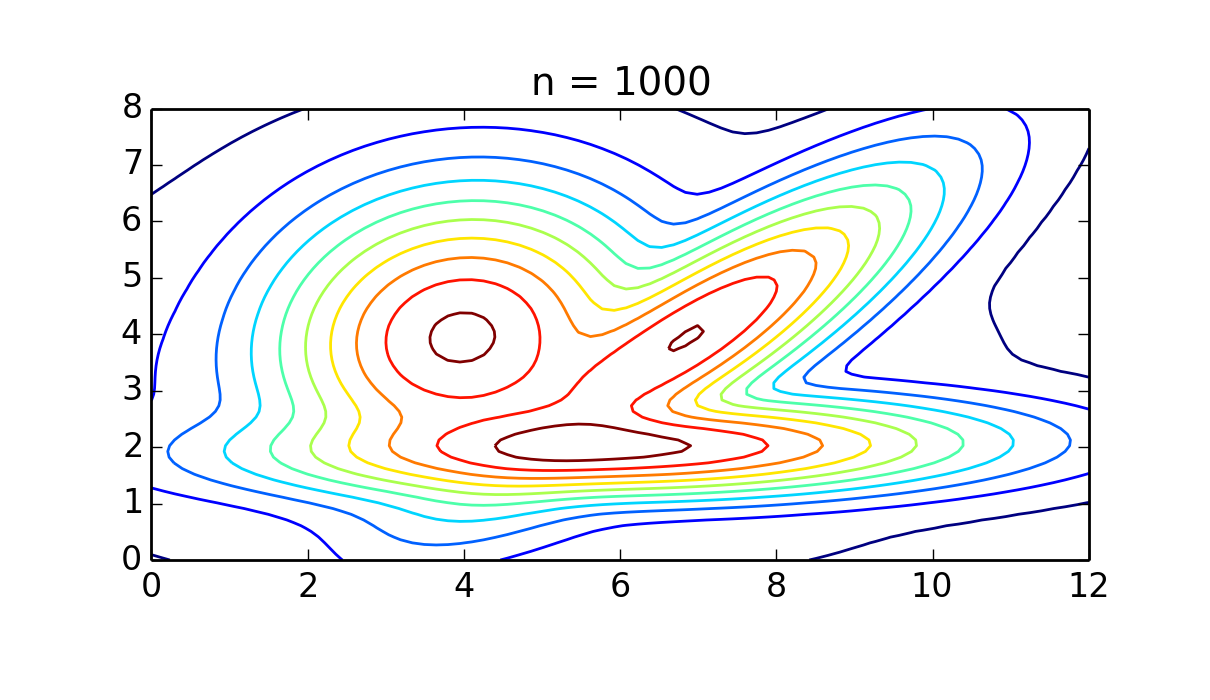}
\includegraphics[trim=1cm 0.5cm 1cm 0.2cm, clip=true, width=0.49\textwidth]{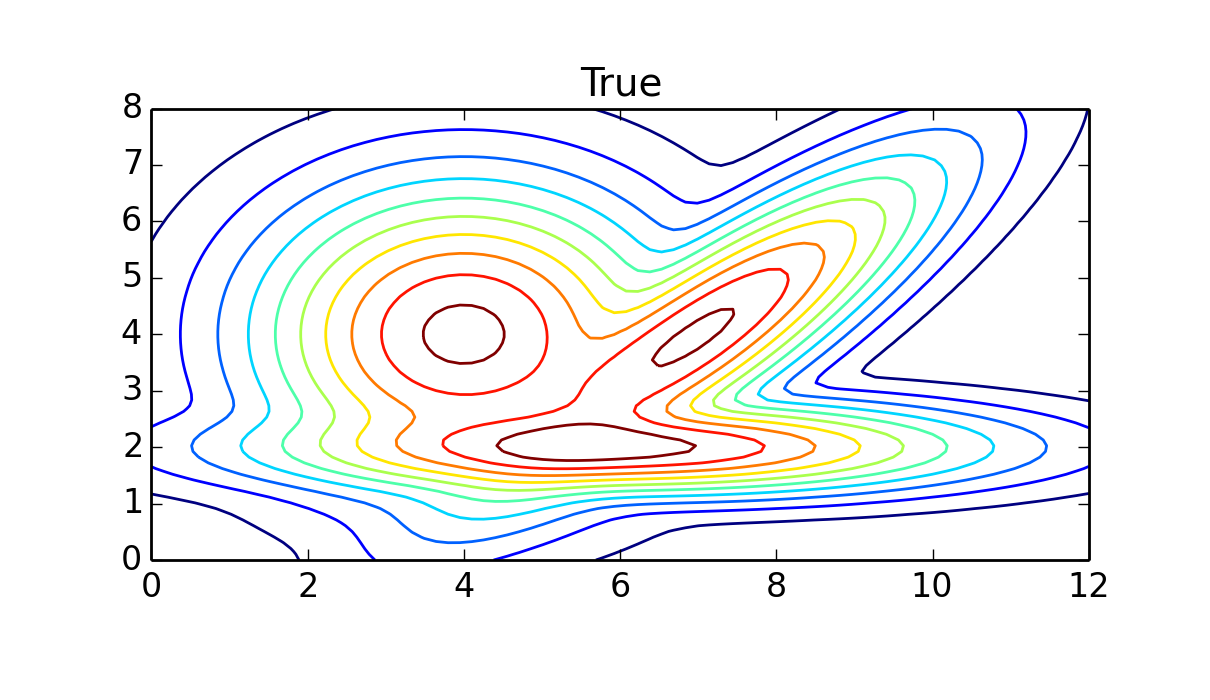}
\caption{Density estimates for MFM (left) and DPM (right) on increasing amounts of data from a three-component Gaussian mixture (bottom). As $n$ increases, the estimates appear to be converging to the true density, as expected. 
See Section \ref{section:simulation} for details.}
\label{figure:density-estimates}
\end{figure}

\begin{figure}
\centering
\includegraphics[trim=0cm 0.5cm 0cm 0.2cm, clip=true, width=0.75\textwidth]{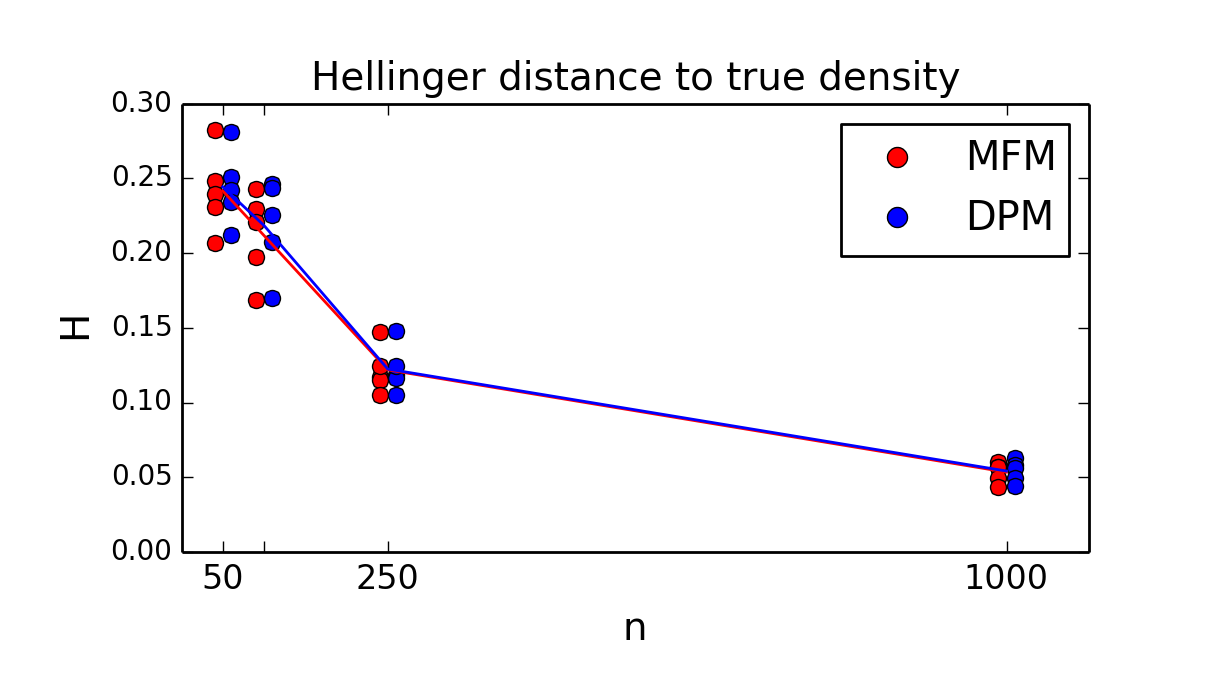}
\caption{Hellinger distance to the true density, for MFM (red, left) and DPM (blue, right) density estimates, on data from a three-component Gaussian mixture. For each $n\in\{50,100,250,1000\}$, five independent datasets of size $n$ were used, and the lines connect the averages of the distances for each $n$.
See Section \ref{section:simulation} for details.}
\label{figure:Hellinger}
\end{figure}

For certain nonparametric density estimation problems, both models have been shown to exhibit posterior consistency at the minimax optimal rate, up to logarithmic factors \citep{Kruijer_2010,Ghosal_2007}.
%  Note to self: These results involve location-scale families using a common scale parameter with a particular prior distribution, which (for certain results) may depend on n.  There are conditions on the prior on the locations, particularly the tail behavior of the prior. There are regularity conditions on the true density, particularly its smoothness and tail behavior.
Even for small sample sizes, we observe empirically that density estimates under the two models are remarkably similar.  As a toy example, Figure \ref{figure:density-estimates} compares their density estimates on data from a bivariate Gaussian mixture with three components.  As the amount of data increases, these density estimates appear to be converging to the true density, as expected; indeed, Figure \ref{figure:Hellinger} indicates that the Hellinger distance to the true density is going to zero.

\begin{figure}
\centering
\includegraphics[trim=1cm 0.5cm 1cm 0.2cm, clip=true, width=0.49\textwidth]{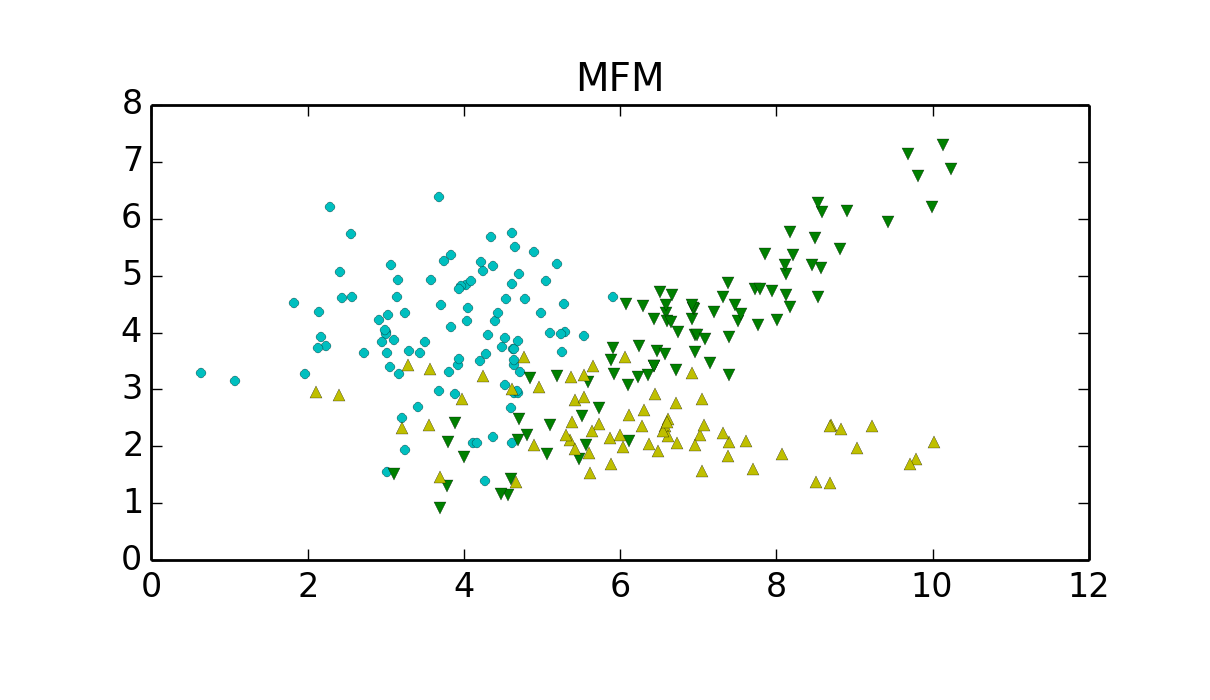}
\includegraphics[trim=1cm 0.5cm 1cm 0.2cm, clip=true, width=0.49\textwidth]{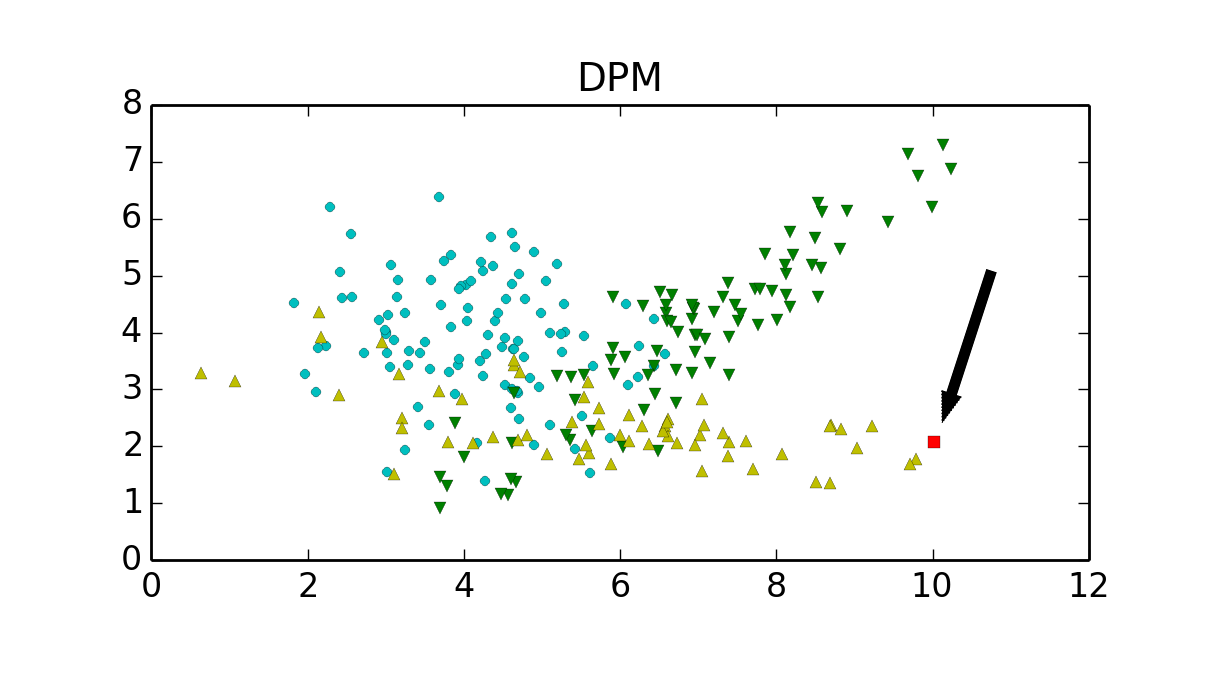}
\includegraphics[trim=1cm 0.5cm 1cm 0.2cm, clip=true, width=0.49\textwidth]{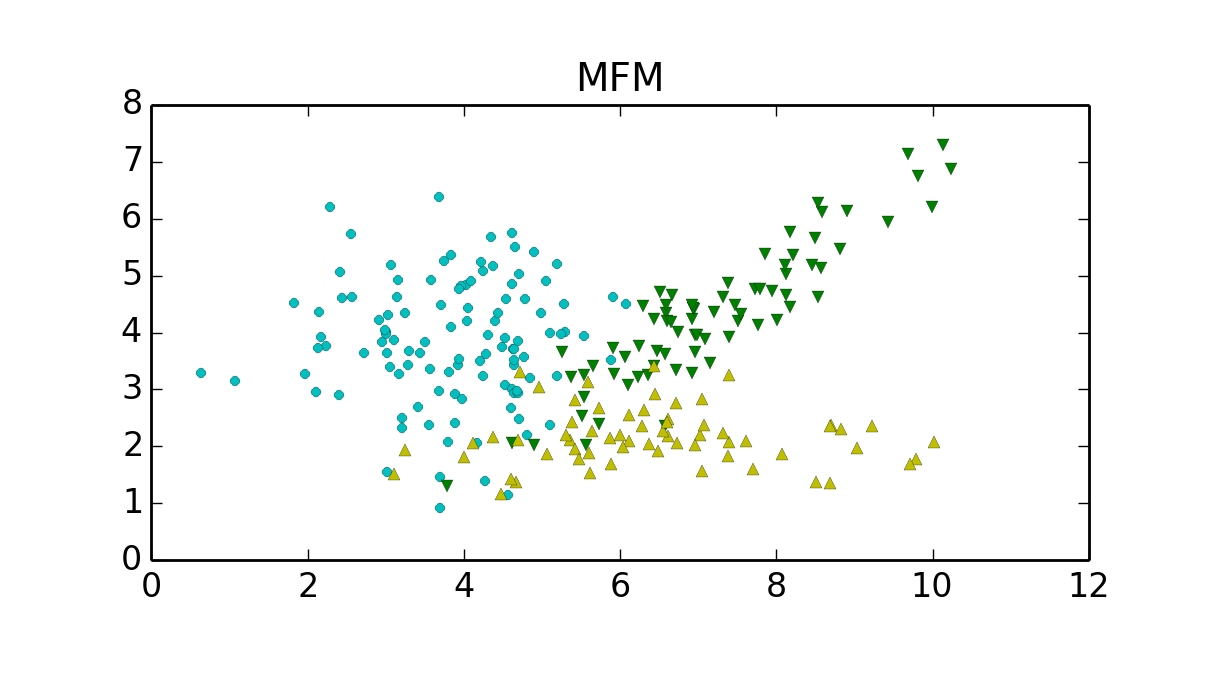}
\includegraphics[trim=1cm 0.5cm 1cm 0.2cm, clip=true, width=0.49\textwidth]{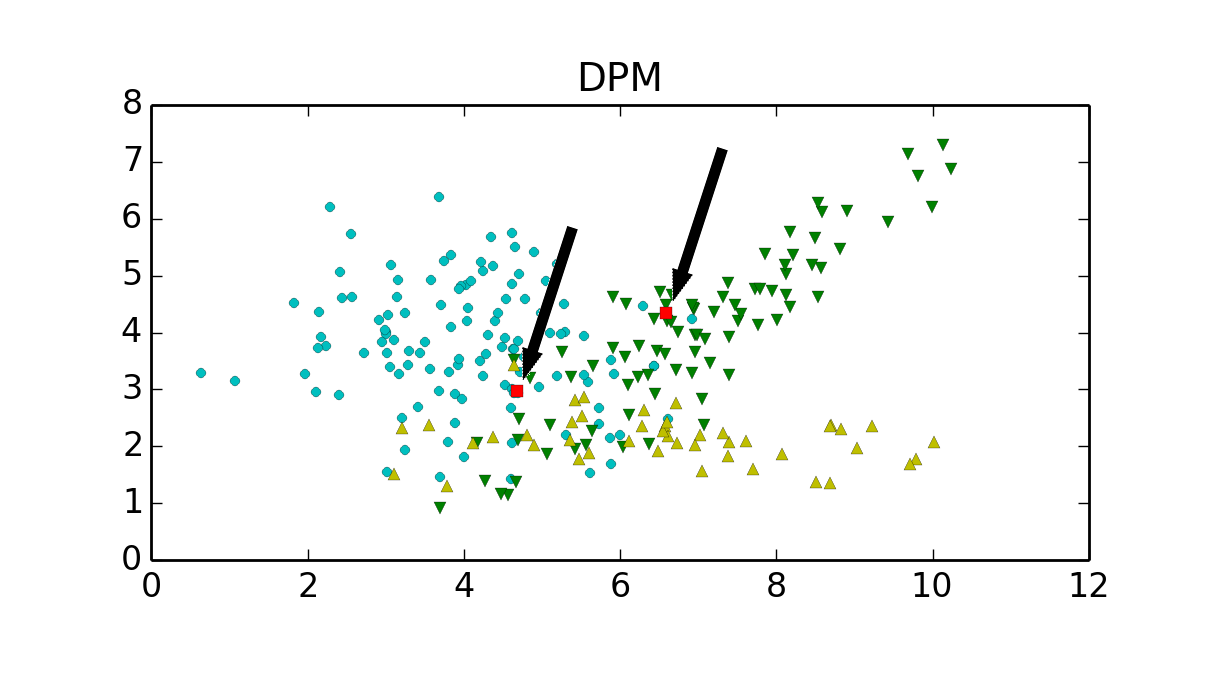}
\includegraphics[trim=1cm 0.5cm 1cm 0.2cm, clip=true, width=0.49\textwidth]{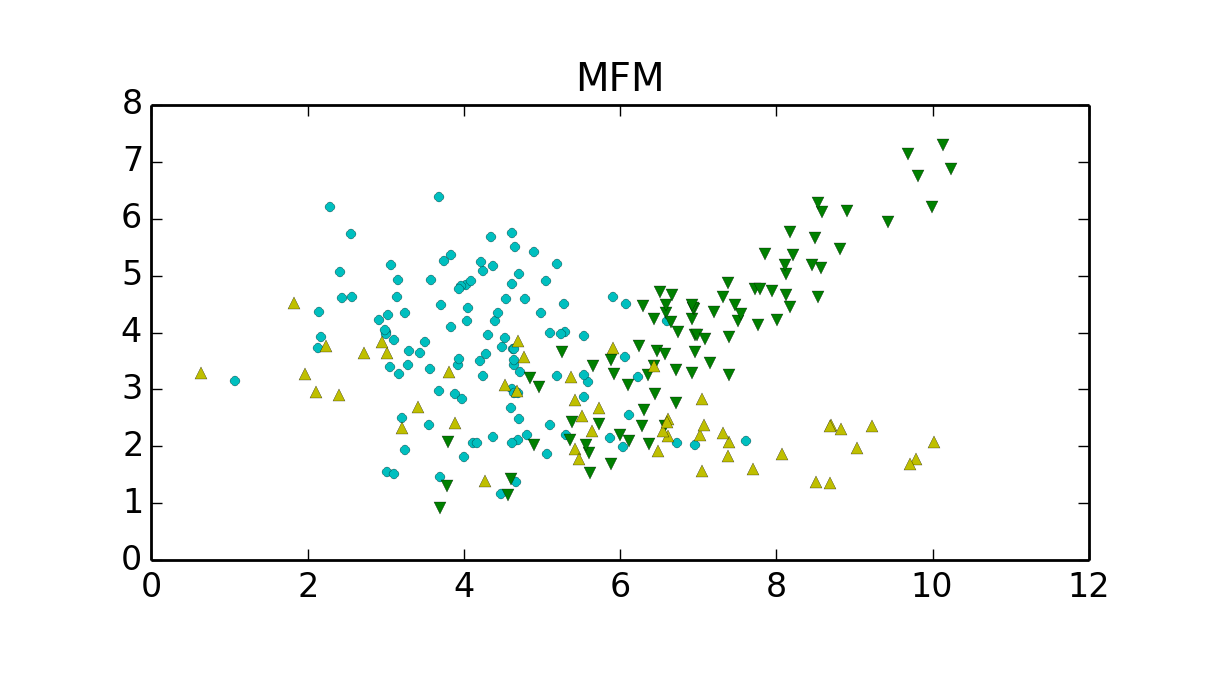}
\includegraphics[trim=1cm 0.5cm 1cm 0.2cm, clip=true, width=0.49\textwidth]{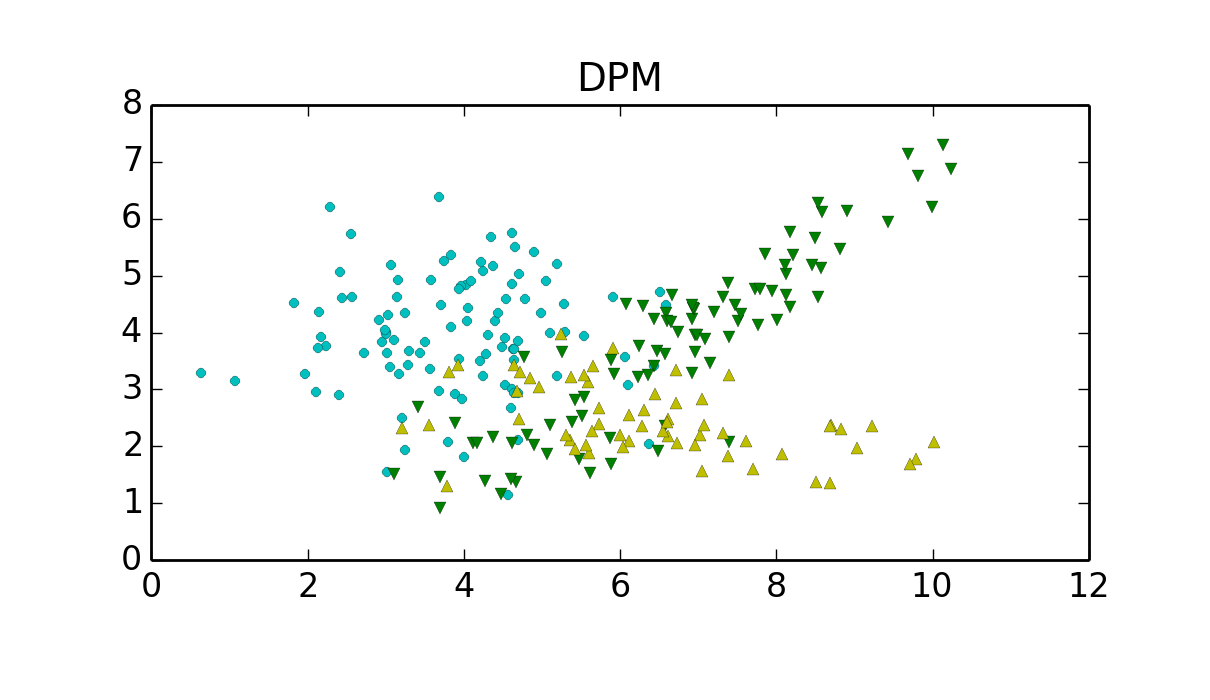}
\includegraphics[trim=1cm 0.5cm 1cm 0.2cm, clip=true, width=0.49\textwidth]{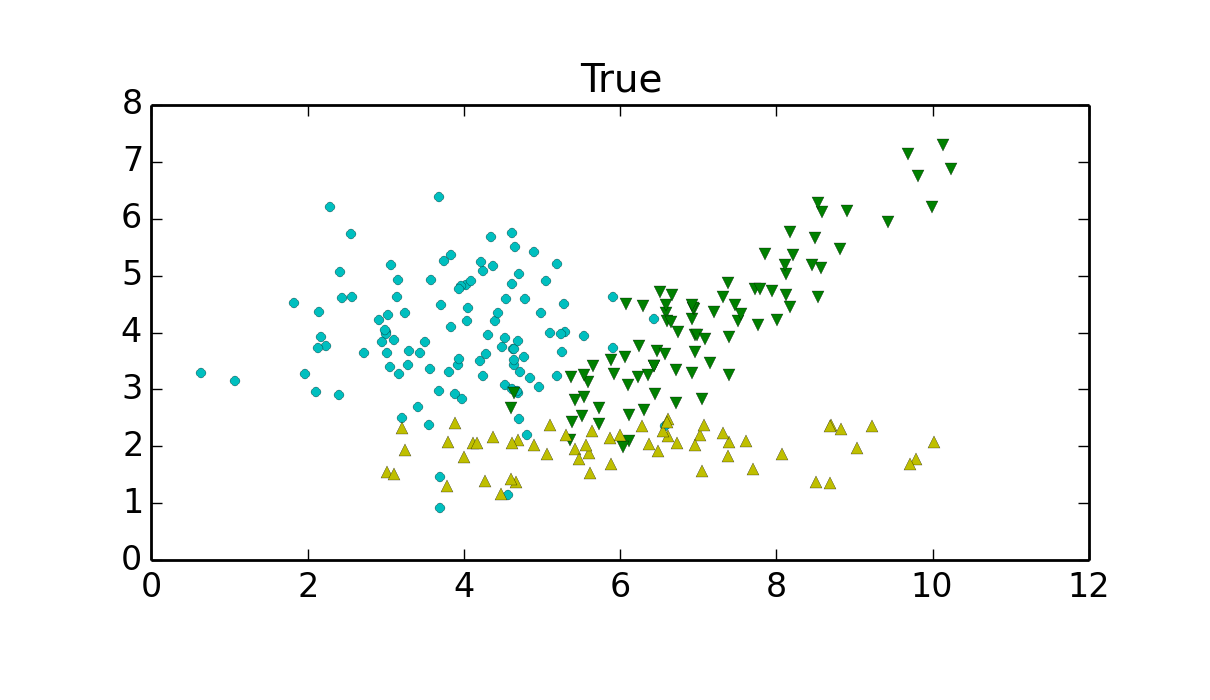}
\caption{Typical sample clusterings from the posterior for the MFM (left) and DPM (right), on $n = 250$ data points from a three-component Gaussian mixture; the bottom plot shows the true component assignments. Note the small extra clusters in the DPM samples (red squares).  Best viewed in color.
See Section \ref{section:simulation} for details.}
\label{figure:clustering}
\end{figure}

\subsubsection*{Clustering}

It seems that more often, mixture models are used for clustering and latent class discovery rather than density estimation. MFMs have a partition distribution that takes a very similar form to that of the Dirichlet process; see Section \ref{section:partitions}.  However, despite this similarity, the MFM partition distribution differs in two fundamental respects.  While the first is widely known, the second is far less often appreciated, and yet is perhaps even more important.

\begin{enumerate}
\item[(1)] The prior on the number of clusters $t$ is very different.  In an MFM, one has complete control over the prior on the number of components $k$, which in turn provides control over the prior on $t$. As the sample size $n$ grows, in an MFM the prior on $t$ converges to the prior on $k$ (in fact, $t$ converges to $k$ almost surely). In contrast, in a Dirichlet process, the prior on $t$ takes a particular parametric form and diverges at a $\log n$ rate.
\item[(2)] Given $t$, the prior on the cluster sizes is very different. In an MFM, most of the prior mass is on partitions in which the sizes of the clusters are all the same order of magnitude, while in a Dirichlet process, most of the prior mass is on partitions in which the sizes vary widely, with a few large clusters and many very small clusters.
\end{enumerate}

See Section \ref{section:asymptotics} for more precise descriptions of (1) and (2) in mathematical terms. (Note that while some authors use the terms ``cluster'' and ``component'' interchangeably, we use \emph{cluster} to refer to a group of data points, and \emph{component} to refer to one of the probability distributions in a mixture model.)

These prior differences carry over to noticeably different posterior clustering behavior.
For instance, Figure \ref{figure:clustering} displays typical clusterings sampled from the posterior, illustrating that DPM samples tend to have tiny ``extra'' clusters, while MFM samples do not.
%Sometimes larger scale differences in clustering occur as well; see Section \ref{section:gene-expression}.

It should also be mentioned that due to (2), MFMs ``dislike'' partitions with very small clusters, causing incremental Gibbs samplers to mix more slowly (empirically) when $n$ is large, however, this is easily remedied by using split-merge samplers \citep{Jain_2004,Jain_2007}; see Section \ref{section:inference}.

\subsubsection*{Mixing distribution and the number of components}

Assuming that the data is from a finite mixture, it is also sometimes of interest to infer the mixing distribution or the number of components, subject to the caveat that these inferences are meaningful only to the extent that the component distributions are correctly specified and the model is mixture identifiable.  While \citet{Nguyen_2013} has shown that under certain conditions, DPMs are consistent for the mixing distribution (in the Wasserstein metric), \citet{Miller_2014} have shown that the posterior on the number of clusters in a DPM is typically not consistent for the number of components. On the other hand, MFMs are consistent for the mixing distribution and the number of components (for Lebesgue almost-all parameter values) under very general conditions; this is a straightforward consequence of Doob's theorem \citep{Nobile_1994}.  The relative ease with which this consistency can be established for MFMs is due to the fact that in an MFM, the parameter space is a countable union of finite-dimensional spaces, rather than an infinite-dimensional space.

\begin{figure}
\centering
\includegraphics[trim=0cm 0cm 0cm 0cm, clip=true, width=0.49\textwidth]{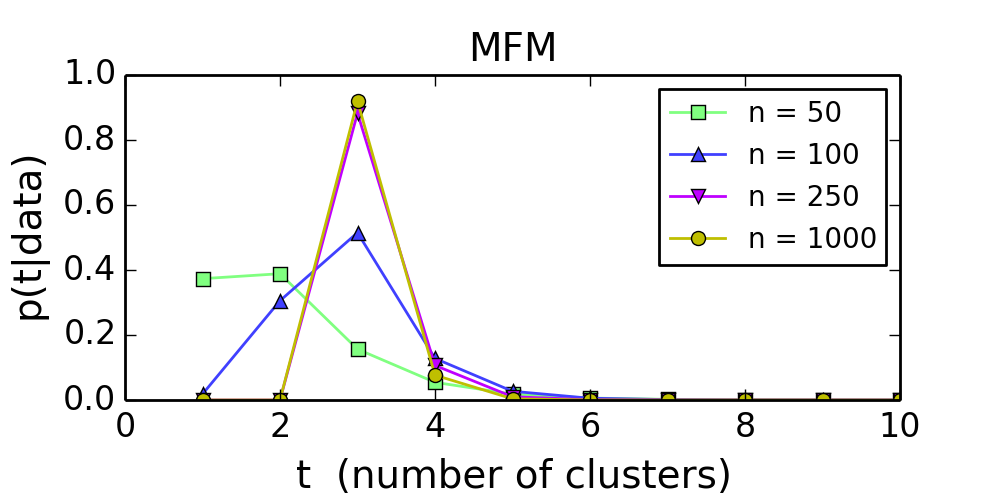}
\includegraphics[trim=0cm 0cm 0cm 0cm, clip=true, width=0.49\textwidth]{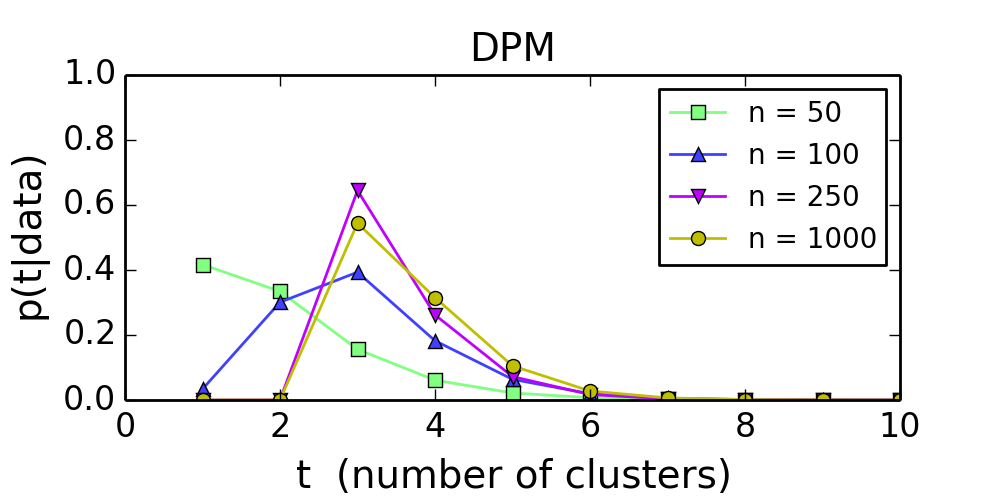} \\
\includegraphics[trim=0cm 0cm 0cm 0cm, clip=true, width=0.49\textwidth]{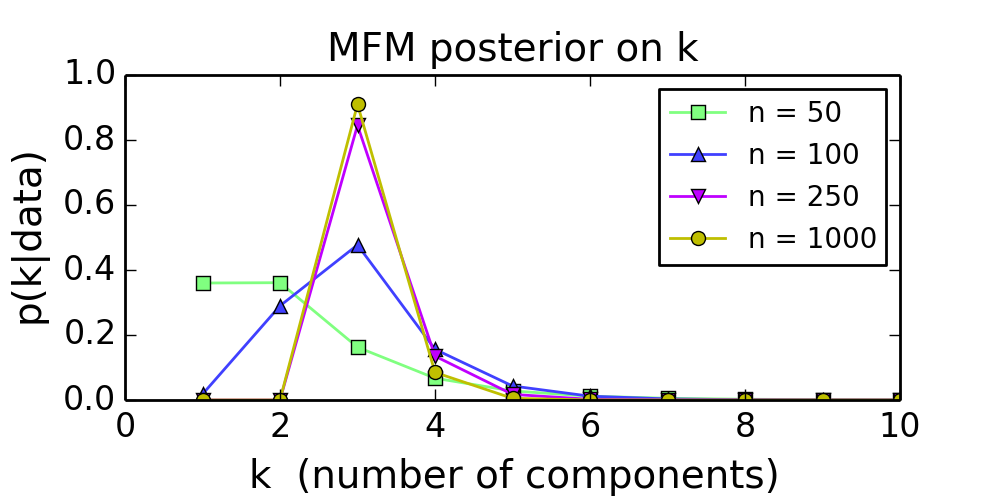}
\caption{Posterior on the number of clusters $t$ for the MFM (top left) and DPM (top right), and the posterior on the number of components $k$ for the MFM (bottom), on increasing amounts of data from a three-component Gaussian mixture.
See Section \ref{section:simulation} for details.}
\label{figure:tk-posteriors}
\end{figure}

These consistency/inconsistency properties are readily observed empirically---they are not simply large-sample phenomena. As seen in Figure \ref{figure:tk-posteriors}, the tendency of DPM samples to have tiny extra clusters causes the number of clusters $t$ to be somewhat inflated, apparently making the DPM posterior on $t$ fail to concentrate, while the MFM posterior on $t$ concentrates at the true value (see Section \ref{section:clusters-components-relationship}).  In addition to the number of clusters $t$, the MFM also permits inference for the number of components $k$ in a natural way (Figure \ref{figure:tk-posteriors}), while in the DPM the number of components is always infinite.  See Section \ref{section:discussion} for discussion regarding issues with estimating the number of components.

% Note to self: I think we may be able to recover the DP as a limiting case of \M, by taking \gamma to 0 and p_K diverging to infinity simultaneously, in an appropriate way.

\section{Model}
\label{section:model}

We consider the following well-known model:
\begin{align}
& K\sim p_K, \mbox{ where $p_K$ is a p.m.f.\ on $\{1,2,\dotsc\}$}\notag\\
& (\pi_1,\dotsc,\pi_k)\sim \mathrm{Dirichlet}_k(\gamma,\dotsc,\gamma), \mbox{ given $K = k$}\notag\\
& Z_1,\dotsc,Z_n\iid \pi, \mbox{ given $\pi$}\label{equation:model}\\
& \theta_1,\dotsc,\theta_k\iid H, \mbox{ given $K = k$}\notag\\
& X_j\sim f_{\theta_{Z_j}} \mbox{ independently for $j = 1,\dotsc,n$, given $\theta_{1:K},Z_{1:n}$.}\notag
\end{align}
Here, $H$ is a prior or ``base measure'' on $\Theta\subset\R^\ell$, and $\{f_\theta:\theta\in\Theta\}$ is a family of probability densities with respect to a sigma-finite measure $\zeta$ on $\X\subset\R^d$. (As usual, we give $\Theta$ and $\X$ the Borel sigma-algebra, and assume $(x,\theta)\mapsto f_\theta(x)$ is measurable.) We denote $x_{1:n} = (x_1,\ldots,x_n)$.  Typically, the values $X_1,\dotsc,X_n$ would be observed, and all other variables would be hidden/latent. We refer to this as a \emph{mixture of finite mixtures} (MFM) model.

It is important to note that we assume a symmetric Dirichlet with a single parameter $\gamma$ not depending on $k$. This assumption is key to deriving a simple form for the partition distribution and the other resulting properties. 
% future: Steve: "Alternatively, allow $\Gamma_k$, sample $k|(t,z|t)$."  (Not sure what this means.)
Assuming symmetry in the distribution of $\pi$ is quite natural, since the distribution of $X_1,\dotsc,X_n$ under any asymmetric distribution on $\pi$ would be the same as if this were replaced by its symmetrized version, i.e., if the entries of $\pi$ were uniformly permuted (although this would no longer necessarily be a Dirichlet distribution). 
% note to self: This is since the distribution of (\theta_1,\dotsc,\theta_k)$ is symmetric.
Assuming the same $\gamma$ for all $k$ is a genuine restriction, albeit a fairly natural one, often made in such models even when not strictly necessary \citep{Nobile_1994,Phillips_1996,Richardson_1997,Green_2001,Stephens_2000,Nobile_2007}.
% note to self: Phillips & Smith 1996 use a uniform prior on pi given k.
% note to self: Richardson 1997 and Green 2001 use the same gamma for all k (they call it delta) (although it does not appear to be strictly necessary).
% note to self: Nobile 1994,2007 typically uses a uniform distribution on pi given k.
% note to self: Stephens uses the same gamma for all k (although it does not appear to be strictly necessary).
Note that prior information about the relative sizes of the mixing weights $\pi_1,\dotsc,\pi_k$ can be introduced through $\gamma$---roughly speaking, small $\gamma$ favors lower entropy $\pi$'s, while large $\gamma$ favors higher entropy $\pi$'s.

Meanwhile, we put very few restrictions on $p_K$, the distribution of the number of components. For practical purposes, we need the infinite series $\sum_{k = 1}^\infty p_K(k)$ to converge to $1$ reasonably quickly, but any choice of $p_K$ arising in practice should not be a problem. For certain theoretical purposes---in particular, consistency for the number of components---it is desirable to have $p_K(k)>0$ for all $k\in\{1,2,\dotsc\}$.
% Note to self: we do not need to assume that $p_K(k)>0$ in some contiguous range of integers (i.e., if $k_1<k_2<k_3$ and $p(k_1),p(k_3)>0$ then $p(k_2)>0$ also), since this will always be true of $V_n(t)$ as a function of $t$.

For comparison, the Dirichlet process mixture (DPM) model with concentration parameter $\alpha>0$ and base measure $H$ is defined as follows, using Sethuraman's \citeyearpar{Sethuraman_1994} representation:
\begin{align*}
& B_1,B_2,\dotsc\iid\mathrm{Beta}(1,\alpha)\\
& Z_1,\dotsc,Z_n\iid \pi, \mbox{ given } \pi=(\pi_1,\pi_2,\dotsc) \mbox{ where } \pi_i = B_i\textstyle\prod_{j=1}^{i-1} (1 - B_j)\\
& \theta_1,\theta_2,\dotsc\iid H\\
& X_j\sim f_{\theta_{Z_j}} \mbox{ independently for $j = 1,\dotsc,n$, given $\theta_{1:\infty},Z_{1:n}$.}
\end{align*}

\section{Exchangeable partition distribution}
\label{section:partitions}

The primary observation on which our development relies is that the distribution on partitions induced by an MFM takes a form which is simple enough that it can be easily computed. 
Let $\C$ denote the unordered partition of $[n]:=\{1,\dotsc,n\}$ induced by $Z_1,\dotsc,Z_n$; in other words, $\C =\{E_i:|E_i|>0\}$ where $E_i =\{j: Z_j = i\}$ for $i\in\{1,2,\dotsc\}$.

\begin{theorem}
\label{theorem:EPPF}
Under the MFM (Equation \ref{equation:model}), the probability mass function of $\C$ is
\begin{align}\label{equation:EPPF}
p(\C) = V_n(t)\prod_{c\in\C} \gamma^{(|c|)}
\end{align}
where $t=|\C|$ is the number of parts in the partition, and
\begin{align}\label{equation:v}
 V_n(t) =\sum_{k = 1}^\infty\frac{k_{(t)}}{(\gamma k)^{(n)}}\,p_K(k).
\end{align}
\end{theorem}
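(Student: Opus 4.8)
The plan is to derive \eqref{equation:EPPF} by marginalizing in two stages --- first integrating out the mixing weights $\pi$ with $K=k$ held fixed, then summing over $K$. The starting observation is that $\C$ is a deterministic function of $Z_{1:n}$ alone, so the component parameters $\theta_{1:k}$ and the data $X_{1:n}$ are irrelevant to $p(\C)$ and can be dropped from the model entirely. Thus it suffices to compute the distribution of the partition induced by $Z_{1:n}$, where $K\sim p_K$, $\pi\mid K=k\sim\mathrm{Dirichlet}_k(\gamma,\dots,\gamma)$, and $Z_1,\dots,Z_n\iid\pi$.

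First I would handle a fixed $k$. Fix a label vector $z_{1:n}\in\{1,\dots,k\}^n$ and set $n_i=|\{j:z_j=i\}|$. Conditionally on $K=k$ and $\pi$, the $Z_j$ are i.i.d.\ $\pi$, so $p(z_{1:n}\mid\pi,K=k)=\prod_{i=1}^k\pi_i^{n_i}$; integrating against the symmetric Dirichlet density (a standard Dirichlet--multinomial computation using the Dirichlet normalizing constant) gives
\[
p(z_{1:n}\mid K=k)=\frac{\Gamma(\gamma k)}{\Gamma(\gamma k+n)}\prod_{i=1}^k\frac{\Gamma(\gamma+n_i)}{\Gamma(\gamma)}=\frac{1}{(\gamma k)^{(n)}}\prod_{i=1}^k\gamma^{(n_i)},
\]
with the convention $\gamma^{(0)}=1$ so that unused labels contribute a factor of $1$. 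Next I would count the label vectors inducing a given partition $\C$ of $[n]$ with $t=|\C|$ parts: such a vector is obtained by assigning a single distinct label in $\{1,\dots,k\}$ to each block, so there are $k(k-1)\cdots(k-t+1)=k_{(t)}$ of them when $k\ge t$ and none when $k<t$ (writing $k_{(t)}$ with the convention that it vanishes for $k<t$ covers both cases uniformly). For any such vector the nonzero multiplicities $\{n_i:n_i>0\}$ are exactly the block sizes $\{|c|:c\in\C\}$, so $\prod_{i=1}^k\gamma^{(n_i)}=\prod_{c\in\C}\gamma^{(|c|)}$, and summing the displayed formula over these vectors yields $p(\C\mid K=k)=\frac{k_{(t)}}{(\gamma k)^{(n)}}\prod_{c\in\C}\gamma^{(|c|)}$.

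Finally, $p(\C)=\sum_{k=1}^\infty p(\C\mid K=k)\,p_K(k)$, and pulling the $\C$-dependent product out of the sum gives exactly $V_n(t)\prod_{c\in\C}\gamma^{(|c|)}$ with $V_n(t)$ as in \eqref{equation:v}. To make this rigorous I would note that all quantities are nonnegative, so interchanging the sum over $k$ with the Dirichlet integral is justified by Tonelli's theorem, and I would check that $V_n(t)<\infty$: since a partition of $[n]$ has $t\le n$ parts and $k\ge1$, one has $k_{(t)}/(\gamma k)^{(n)}\le k^{t}/(\gamma k)^{n}=\gamma^{-n}k^{\,t-n}\le\gamma^{-n}$, so that $V_n(t)\le\gamma^{-n}$.

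I do not expect a genuine obstacle: the argument is a bookkeeping exercise built on the Dirichlet--multinomial identity. The only places needing care are getting the combinatorial count and the $\gamma^{(0)}=1$ convention right in the middle step, and the (easy) finiteness and Tonelli checks in the last step.
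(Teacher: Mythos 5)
Your proof is correct and follows essentially the same route as the paper's: Dirichlet--multinomial conjugacy to obtain $p(z_{1:n}\mid K=k)=\frac{1}{(\gamma k)^{(n)}}\prod_{c\in\C(z)}\gamma^{(|c|)}$, the count $k_{(t)}$ of label vectors inducing a given partition, and then marginalization over $K$. The extra remarks on Tonelli and the finiteness of $V_n(t)$ are fine but not needed beyond what the paper already implicitly uses.
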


All proofs have been collected in Appendix \ref{section:proofs}.  Here, $x^{(m)} =x(x+1)\cdots(x + m-1)$ and $x_{(m)} =x(x-1)\cdots(x - m+1)$, with $x^{(0)} = 1$ and $x_{(0)} = 1$ by convention.  We discuss computation of $V_n(t)$ in Section \ref{section:v}.  For comparison, under the DPM, the partition distribution induced by $Z_1,\ldots,Z_n$ is $p_\DPM(\C)=\frac{\alpha^t}{\alpha^{(n)}} \prod_{c\in\C} (|c|-1)!$ \citep{Antoniak_1974}.
% note to self: Antoniak proves this in the last line of page 1162, in the middle of another proof.

Viewed as a function of the part sizes $(|c|:c\in\C)$, Equation \ref{equation:EPPF} is an \emph{exchangeable partition probability function} (EPPF) in the terminology of \cite{Pitman_1995,Pitman_2006}, since it is a symmetric function of the part sizes. Consequently, $\C$ is an \emph{exchangeable random partition} of $[n]$; that is, its distribution is invariant under permutations of $[n]$ (alternatively, this can be seen directly from the definition of the model, since $Z_1,\dotsc,Z_n$ are exchangeable). 

More specifically, we observe that Equation \ref{equation:EPPF} is a member of the family of Gibbs partition distributions \citep{Pitman_2006}; this is also implied by the results of \cite{Gnedin_2006} characterizing the extreme points of the space of Gibbs partition distributions. Further results on Gibbs partitions are provided by \citet{Ho_2007}, \citet{Lijoi_2008}, \citet{Cerquetti_2008,Cerquetti_2011}, \citet{Gnedin_2010}, and \citet{Lijoi_2010}. However, the utility of this representation for inference in mixture models with a prior on the number of components does not seem to have been previously explored in the literature.

Due to Theorem \ref{theorem:EPPF}, we have the following equivalent representation of the model:
\begin{align}
& \C\sim p(\C), \mbox{ with $p(\C)$ as in Equation \ref{equation:EPPF}}\notag\\
& \phi_c\iid H  \mbox{ for $c\in\C$, given $\C$}\label{equation:model-C}\\
& X_j\sim f_{\phi_c} \mbox{ independently for $j \in c$, $c\in\C$, given $\phi,\C$,}\notag
\end{align}
where $\phi =(\phi_c: c\in\C)$ is a tuple of $t =|\C|$ parameters $\phi_c\in\Theta$, one for each part $c\in\C$.

This representation is particularly useful for doing inference, since one does not have to deal with cluster labels or empty components. The formulation of models starting from a partition distribution has been a fruitful approach, exemplified by the development of product partition models \citep{Hartigan_1990,Barry_1992,Quintana_2003,Dahl_2009,Park_2010,Mueller_2010,Mueller_2011}.

\subsection{Basic properties}
\label{section:basic}

We list here some basic properties of the MFM model. See Appendix \ref{section:proofs} for proofs.
%The derivations are all straightforward, but for completeness we provide them in Appendix \ref{section:proofs}.  
Denoting $x_c =(x_j: j\in c)$ and
$m(x_c) =\int_\Theta \big[\prod_{j\in c} f_\theta(x_j)\big]\,H(d\theta)$ (with the convention that $m(x_\emptyset) = 1$), we have
\begin{align}\label{equation:marginal-product}
p(x_{1:n}|\C) = \prod_{c\in\C} m(x_c).
\end{align}
The number of components $K$ and the number of clusters $T =|\C|$ are related by
\begin{align}
p(t|k) &= \frac{k_{(t)}}{(\gamma k)^{(n)}}\sum_{\C:|\C|= t}\prod_{c\in\C}\gamma^{(|c|)},\label{equation:ptk}\\
p(k|t) &= \frac{1}{V_n(t)}\frac{k_{(t)}}{(\gamma k)^{(n)}}\,p_K(k),\label{equation:pkt}
\end{align}
where in Equation \ref{equation:ptk}, the sum is over partitions $\C$ of $[n]$ such that $|\C|= t$.
The formula for $p(k|t)$ is required for doing inference about the number of components $K$ based on posterior samples of $\C$; fortunately, it is easy to compute.
We have the conditional independence relations
\begin{align}
\C & \perp K\mid T,\label{equation:CKT}\\
X_{1:n} & \perp K\mid T.\label{equation:XKT}
\end{align}

\subsection{The coefficients $V_n(t)$}
\label{section:v}

The following recursion is a special case of a more general result for Gibbs partitions \citep{Gnedin_2006}.

\begin{proposition}
The numbers $V_n(t)$ (Equation \ref{equation:v}) satisfy the recursion
\begin{align}\label{equation:recursion}
V_{n +1}(t+1) = V_n(t)/\gamma-(n/\gamma + t) V_{n +1}(t)
\end{align}
for any $0\leq t\leq n$ and $\gamma>0$.
\end{proposition}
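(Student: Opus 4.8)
The plan is to prove Equation \ref{equation:recursion} directly from the series representation in Equation \ref{equation:v}, reducing it to an elementary identity in the summation index $k$. Write each of $V_{n+1}(t+1)$, $V_n(t)/\gamma$, and $(n/\gamma+t)V_{n+1}(t)$ as a series over $k\ge 1$ against the common weights $p_K(k)$. Since $0\le t\le n$, every summand is nonnegative, and using $k_{(t)}\le k^t$ together with $(\gamma k)^{(n)}\ge(\gamma k)^n$ one sees that each of the three series is dominated termwise by $\gamma^{-n}p_K(k)$ (or $\gamma^{-(n+1)}p_K(k)$), hence converges absolutely because $\sum_k p_K(k)=1$; consequently the three series may be split, rescaled by constants, and recombined term by term. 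It therefore suffices to verify, for every $k\ge1$,
\[
\frac{k_{(t+1)}}{(\gamma k)^{(n+1)}} = \frac{1}{\gamma}\,\frac{k_{(t)}}{(\gamma k)^{(n)}} - \Bigl(\tfrac{n}{\gamma}+t\Bigr)\frac{k_{(t)}}{(\gamma k)^{(n+1)}}.
\]

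To check this, clear the denominator using $(\gamma k)^{(n+1)}=(\gamma k)^{(n)}(\gamma k+n)$ and apply the falling-factorial identity $k_{(t+1)}=(k-t)\,k_{(t)}$. The equation collapses to $(k-t)\,k_{(t)} = k_{(t)}\bigl[(\gamma k+n)/\gamma - n/\gamma - t\bigr]$, and the bracket simplifies to $k-t$, so both sides agree identically (as polynomials in $k$). Multiplying through by $p_K(k)$ and summing over $k$ then yields Equation \ref{equation:recursion}; the restriction $0\le t\le n$ is precisely what guarantees the series are finite. The only step requiring any care is this convergence/rearrangement bookkeeping, which is why I would establish the domination bound first; the algebra itself is immediate.

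As an alternative --- essentially the route behind the ``general result for Gibbs partitions'' cited just before the statement --- one can instead use consistency of the EPPF in Equation \ref{equation:EPPF}: restricting the random partition of $[n+1]$ to $[n]$ recovers the partition distribution on $[n]$, so fixing a partition $\C$ of $[n]$ and summing over whether element $n+1$ opens a new block or joins an existing block $c_0$ gives $V_n(t)\prod_{c\in\C}\gamma^{(|c|)} = \gamma\,V_{n+1}(t+1)\prod_{c\in\C}\gamma^{(|c|)} + V_{n+1}(t)\sum_{c_0\in\C}\frac{\gamma^{(|c_0|+1)}}{\gamma^{(|c_0|)}}\prod_{c\in\C}\gamma^{(|c|)}$. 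Dividing by $\prod_{c\in\C}\gamma^{(|c|)}$ and using $\gamma^{(1)}=\gamma$, $\gamma^{(m+1)}/\gamma^{(m)}=\gamma+m$, and $\sum_{c\in\C}|c|=n$ gives $V_n(t)=\gamma\,V_{n+1}(t+1)+(\gamma t+n)\,V_{n+1}(t)$, which rearranges to Equation \ref{equation:recursion}. I would present the direct computation as the main proof, since it is self-contained and does not invoke the EPPF consistency.
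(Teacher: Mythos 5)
Your main argument is exactly the paper's proof: the paper verifies the same termwise identity $k_{(t+1)} = (\gamma k + n)k_{(t)}/\gamma - (n/\gamma + t)k_{(t)}$ and plugs it into the series for $V_{n+1}(t+1)$, and your convergence bookkeeping via $k_{(t)}/(\gamma k)^{(n)} \leq \gamma^{-n}$ matches the bound the paper uses elsewhere in Section \ref{section:v}. Both the computation and your alternative EPPF-consistency derivation are correct, so there is nothing to fix.
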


This is easily seen by plugging the identity
$$k_{(t+1)} = (\gamma k + n)k_{(t)}/\gamma-(n/\gamma + t) k_{(t)} $$
into the expression for $V_{n+1}(t+1)$.
In the case of $\gamma = 1$, \cite{Gnedin_2010} has discovered a beautiful example of a distribution on $K$ for which both $p_K(k)$ and $V_n(t)$ have closed-form expressions.

In previous work on the MFM model, it has been common for $p_K$ to be chosen to be proportional to a Poisson distribution restricted to the positive integers or a subset thereof \citep{Phillips_1996,Stephens_2000,Nobile_2007}, and \cite{Nobile_2005} has proposed a theoretical justification for this choice.
Interestingly, the model has some nice mathematical properties if one instead chooses $K-1$ to be given a Poisson distribution, that is, $p_K(k) =\mathrm{Poisson}(k-1|\lambda)$ for some $\lambda>0$. One example of this arises here (for another example, see Section \ref{section:stick}): it turns out that if $p_K(k) =\mathrm{Poisson}(k-1|\lambda)$ and $\gamma= 1$ then
\begin{align}\label{equation:Poisson-V_n}
V_n(0) = \frac{1}{\lambda^n}\Big(1-\sum_{k=1}^n p_K(k)\Big).
\end{align}

However, to do inference, it is not necessary to choose $p_K$ to have any particular form. To do inference, we just need to be able to compute $p(\C)$, and in turn, we need to be able to compute $V_n(t)$. To this end, note that $k_{(t)}/(\gamma k)^{(n)}\leq k^t/(\gamma k)^n$, and thus the infinite series for $V_n(t)$ converges rapidly when $t\ll n$. It always converges to a finite value when $1\leq t\leq n$; this is clear from the fact that $p(\C)\in[0,1]$. This finiteness can also be seen directly from the series since $k^t/(\gamma k)^n \leq 1/\gamma^n$, and in fact, this shows that the series for $V_n(t)$ converges at least as rapidly (up to a constant) as the series $\sum_{k = 1}^\infty p_K(k)$ converges to $1$. 
% following Matt's suggestion, we could bound the approximation error in closed form
Hence, for any reasonable choice of $p_K$ (i.e., not having an extraordinarily heavy tail), $V_n(t)$ can easily be numerically approximated to a high level of precision. In practice, computing the required values of $V_n(t)$ takes a negligible amount of time.

%Note that $V_n(t)$ is often exceedingly small, making it necessary to use a standard technique such as the ``log-sum-exp trick'' in order to avoid numerical underflow (that is, represent each term in log space and use $\log(\exp(a) + \exp(b)) = \log(\exp(a-m) + \exp(b-m)) + m$, where $m =\max\{a,b\}$, to compute the log of the sum). This numerical approach to computing $V_n(t)$ is easy to implement, computationally fast, and generally applicable.
%; in Appendix \ref{section:code} we provide code for computing $V_n(t)$.
% todo: in final version, include code

% future: think about Matt's suggestion --- could we define $V_1(0),V_2(0),\dotsc$ instead of $p_K(k)$?
% Although we find it easier to use this numerical approach for all $V_n(t)$ values, it would also be possible to use the recursion in Equation \ref{equation:recursion} (taking care, again, regarding underflow) and numerically approximate only the required values of $V_n(0)$ (or use the expression for $V_n(0)$ in Equation \ref{equation:Poisson-V_n} if $K - 1\sim\mathrm{Poisson}(\lambda)$ and $\gamma = 1$).

\subsection{Self-consistent marginals}
\label{section:self-consistent}

For each $n = 1,2,\dotsc$, let $q_n(\C)$ denote the distribution on partitions of $[n]$ as defined above (Equation \ref{equation:EPPF}).  This family of partition distributions is preserved under marginalization, in the following sense.

\begin{proposition}
\label{proposition:self-consistent}
If $m<n$ then $q_m$ coincides with the marginal distribution on partitions of $[m]$ induced by $q_n$.
\end{proposition}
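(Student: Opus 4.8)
The plan is to reduce to the single-step case $n = m+1$ and then verify one-step consistency by a direct enumeration, invoking the recursion for $V_n(t)$ established just above. Since marginalizations compose, it suffices to show that for every $m\geq 1$, the distribution $q_m$ on partitions of $[m]$ is the pushforward of $q_{m+1}$ under the map that deletes the element $m+1$ from a partition of $[m+1]$ (discarding the resulting empty block, if there is one).

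So fix a partition $\C'$ of $[m]$ with $t = |\C'|$ blocks. The partitions $\C$ of $[m+1]$ whose restriction to $[m]$ equals $\C'$ are exactly: (i) the one obtained by adjoining the singleton $\{m+1\}$, which has $t+1$ blocks; and (ii) for each block $c\in\C'$, the one obtained by inserting $m+1$ into $c$, which has $t$ blocks. In case (i), $\prod_{c\in\C}\gamma^{(|c|)} = \gamma^{(1)}\prod_{c\in\C'}\gamma^{(|c|)} = \gamma\prod_{c\in\C'}\gamma^{(|c|)}$, so this partition contributes $V_{m+1}(t+1)\,\gamma\prod_{c\in\C'}\gamma^{(|c|)}$ to the marginal probability of $\C'$. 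In case (ii), inserting $m+1$ into a block of size $|c|$ turns the factor $\gamma^{(|c|)}$ into $\gamma^{(|c|+1)} = (\gamma+|c|)\gamma^{(|c|)}$; summing over $c\in\C'$ and using $\sum_{c\in\C'}(\gamma+|c|) = \gamma t + m$, the total contribution from case (ii) is $V_{m+1}(t)\,(\gamma t + m)\prod_{c\in\C'}\gamma^{(|c|)}$.

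Adding the two contributions, the probability that the marginal of $q_{m+1}$ assigns to $\C'$ equals
$$\bigl[\gamma\,V_{m+1}(t+1) + (\gamma t + m)\,V_{m+1}(t)\bigr]\prod_{c\in\C'}\gamma^{(|c|)},$$
which is $V_m(t)\prod_{c\in\C'}\gamma^{(|c|)} = q_m(\C')$ precisely when $\gamma V_{m+1}(t+1) = V_m(t) - (\gamma t + m)V_{m+1}(t)$, i.e.\ $V_{m+1}(t+1) = V_m(t)/\gamma - (m/\gamma + t)V_{m+1}(t)$ --- which is exactly Equation \ref{equation:recursion}. That closes the argument. (The indices stay in range: $1\leq t\leq m$ forces $2\leq t+1\leq m+1$, so every $V$ appearing is one of the quantities already shown to be well defined.)

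There is no real obstacle here; the only thing to watch is the bookkeeping of the rising-factorial weights when $m+1$ is absorbed into an existing block, together with the identity $\sum_{c\in\C'}|c| = m$. I would present this short computation as the main proof, since it dovetails with the recursion just derived. It is worth noting, though, that the statement also follows with no computation at all from the model of Equation \ref{equation:model}: there $Z_1,\dots,Z_n$ are i.i.d.\ given $\pi$, so the partition of $[m]$ induced by $Z_1,\dots,Z_m$ is simultaneously the restriction to $[m]$ of the partition of $[n]$ induced by $Z_1,\dots,Z_n$ and, by Theorem \ref{theorem:EPPF} applied with sample size $m$, a draw from $q_m$; hence $q_m$ is the marginal of $q_n$. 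I would include this as a remark.
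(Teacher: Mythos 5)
Your proof is correct, but your primary route is genuinely different from the paper's. The paper proves the proposition in one line, by exactly the argument you relegate to a closing remark: under the hierarchical model of Equation \ref{equation:model}, the partition of $[m]$ induced by $Z_1,\dotsc,Z_m$ has the same law whether the model is instantiated with $m$ or with $n\geq m$ observations, so $q_m$ is automatically the marginal of $q_n$. Your main argument instead verifies one-step consistency intrinsically at the level of the EPPF: you enumerate the partitions of $[m+1]$ restricting to a given $\C'$, track how the rising-factorial weights transform (the factor $\gamma$ for a new singleton, the factor $\gamma+|c|$ for absorption into $c$, summing to $\gamma V_{m+1}(t+1)+(\gamma t+m)V_{m+1}(t)$), and observe that the identity needed is precisely the recursion of Equation \ref{equation:recursion}. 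The bookkeeping is all correct, the reduction to $n=m+1$ is legitimate since restrictions compose, and the index ranges check out. What the computational route buys is independence from the underlying latent-variable model: it shows that \emph{any} weights $V_n(t)$ satisfying Equation \ref{equation:recursion} yield a consistent family of Gibbs-type partition distributions, which is the more general statement underlying \cite{Gnedin_2006}; it also makes transparent why the restaurant process of Theorem \ref{theorem:restaurant} has the transition probabilities it does, since your two cases are exactly its two moves. What the paper's argument buys is brevity and the absence of any computation. Including both, as you propose, is a reasonable presentation, though the paper's choice of leading with the model-based observation is the more economical one.
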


In other words, drawing a sample from $q_n$ and removing elements $m+1,\dotsc,n$ from it yields a sample from $q_m$. This can be seen directly from the model definition (Equation \ref{equation:model}), since $\C$ is the partition induced by the $Z$'s, and the distribution of $Z_{1:m}$ is the same when the model is defined with any $n\geq m$. This property is sometimes referred to as \emph{consistency in distribution} \citep{Pitman_2006}.
% future: should we say that many other partition distributions also have this property, including the Chinese restaurant process? --- check this before inserting it
% see Pitman, section 2.2

% future: check that the Kolmogorov extension stuff applies here
By Kolmogorov's extension theorem (e.g., \citealp{Durrett_1996}), it is well-known that this implies the existence of a unique probability distribution on partitions of the positive integers $\Z_{>0}=\{1,2,\dotsc\}$ such that the marginal distribution on partitions of $[n]$ is $q_n$ for all $n\in\{1,2,\dotsc\}$.
A random partition of $\Z_{>0}$ from such a distribution is a \emph{combinatorial stochastic process}; for background, see \cite{Pitman_2006}.

\section{Restaurant process, stick-breaking, and random measure representations}
\label{section:representations}

\subsection{P\'{o}lya urn scheme / Restaurant process}
\label{section:restaurant}

\cite{Pitman_1996} considered a general class of urn schemes, or restaurant processes, corresponding to exchangeable partition probability functions (EPPFs). The following scheme for the MFM falls into this general class.

\begin{theorem}
\label{theorem:restaurant}
The following process generates partitions $\C_1,\C_2,\dotsc$ such that for any $n\in\{1,2,\dotsc\}$, the probability mass function of $\C_n$ is given by Equation \ref{equation:EPPF}.
\begin{itemize}
\item Initialize with a single cluster consisting of element $1$ alone: $\C_1 =\{\{1\}\}$.
\item For $n = 2,3,\dotsc$, element $n$ is placed in \dots
\begin{itemize}
\item[] an existing cluster $c\in\C_{n-1}$ with probability $\propto |c|+\gamma $
\item[] a new cluster with probability $\displaystyle\propto \frac{V_n(t+1)}{V_n(t)}\gamma$
\end{itemize}
where $t =|\C_{n-1}|$.
\end{itemize}
\end{theorem}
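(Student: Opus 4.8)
The plan is to prove the claim by induction on $n$, verifying at each step that the restaurant process's one-step transition probabilities coincide with the conditional probabilities obtained from the EPPF of Theorem \ref{theorem:EPPF}; an equivalent and even shorter route is to appeal directly to the self-consistency of marginals (Proposition \ref{proposition:self-consistent}). Before starting the induction I would note that the process is well-defined: since each $V_n(t)$ with $1\le t\le n$ is a series of strictly positive terms (and finite, by Theorem \ref{theorem:EPPF}), all the quoted weights $|c|+\gamma$ and $\gamma\,V_n(t+1)/V_n(t)$ are strictly positive, so they normalize to a genuine distribution over the at most $t+1$ available choices for element $n$. For the base case $n=1$, the process deterministically produces $\C_1=\{\{1\}\}$, and Equation \ref{equation:EPPF} gives $p(\C_1)=V_1(1)\,\gamma^{(1)}=\gamma\,V_1(1)$; since $V_1(1)=\sum_k \tfrac{k_{(1)}}{(\gamma k)^{(1)}}p_K(k)=\tfrac1\gamma\sum_k p_K(k)=\tfrac1\gamma$, this equals $1$, as it must.

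For the inductive step, fix a partition $\C$ of $[n]$ with $t$ blocks and let $\C'$ denote the unique partition of $[n-1]$ obtained by deleting element $n$ from its block in $\C$. By construction $\C_n=\C$ can arise only through $\C_{n-1}=\C'$ followed by the move that places $n$ correctly, so $\Pr(\C_n=\C)=\Pr(\C_{n-1}=\C')\cdot(\text{transition probability})$, and by the inductive hypothesis $\Pr(\C_{n-1}=\C')=V_{n-1}(|\C'|)\prod_{c\in\C'}\gamma^{(|c|)}$. I would then split into two cases. If $n$ belongs to a block of size $s\ge 2$ in $\C$, then $|\C'|=t$, the relevant move is ``join an existing cluster of size $s-1$'' with weight $(s-1)+\gamma$, and the identity $\gamma^{(s)}=(\gamma+s-1)\,\gamma^{(s-1)}$ makes the extra factor in $\prod_{c\in\C}\gamma^{(|c|)}$ cancel that weight. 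If instead $\{n\}$ is a singleton block of $\C$, then $|\C'|=t-1$, the relevant move is ``start a new cluster'' with weight $\gamma\,V_n(t)/V_n(t-1)$, and the factor $\gamma^{(1)}=\gamma$ contributed by the new block cancels the $\gamma$ in that weight. In either case the result is $V_n(t)\prod_{c\in\C}\gamma^{(|c|)}$, up to the reciprocal of the normalizing constant of the transition probabilities.

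The only point needing a genuine computation is that this normalizing constant equals $V_{n-1}(t)/V_n(t)$, i.e.\ that $\sum_{c\in\C_{n-1}}(|c|+\gamma)+\gamma\,V_n(t+1)/V_n(t)=V_{n-1}(t)/V_n(t)$; since $\sum_c|c|=n-1$, this is exactly the recursion $V_{n-1}(t)=(n-1+\gamma t)\,V_n(t)+\gamma\,V_n(t+1)$, which is Equation \ref{equation:recursion} with $n-1$ in place of $n$. I expect this to be the one step that is not purely mechanical, though it is already established. Alternatively, one can bypass the recursion entirely: Proposition \ref{proposition:self-consistent} says the $q_n$-probabilities of the partitions of $[n]$ extending $\C'$ sum to $q_{n-1}(\C')$, so the ratios $q_n(\C)/q_{n-1}(\C')$ are automatically bona fide conditional probabilities that sum to one, and matching them term by term with the restaurant weights (exactly the cancellations above, after factoring out the common $V_n(\cdot)/V_{n-1}(\cdot)$) closes the induction with no normalization check at all.
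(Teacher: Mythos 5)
Your argument is correct, and the key cancellations ($\gamma^{(s)}=(\gamma+s-1)\gamma^{(s-1)}$ for joining a cluster, $\gamma^{(1)}=\gamma$ for a new one) are exactly the ones the paper uses. The difference is in how normalization is handled. Your primary route is a direct induction in which you must verify that the transition weights sum to $V_{n-1}(t)/V_n(t)$, which you correctly reduce to the recursion $V_{n-1}(t)=(n-1+\gamma t)V_n(t)+\gamma V_n(t+1)$ (Equation \ref{equation:recursion} shifted by one). The paper instead works with the random partition $\C_\infty$ of $\Z_{>0}$ guaranteed by Proposition \ref{proposition:self-consistent} and Kolmogorov extension, writes $p(\C_n\mid\C_{n-1})\propto q_n(\C_n)\,I(\C_n\setminus n=\C_{n-1})$, and reads off the weights after dividing by $V_n(t)$; normalization is automatic because these are genuine conditional probabilities. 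That is precisely the ``alternative'' you sketch in your last sentence, so your write-up subsumes the paper's proof and adds a self-contained verification that does not need the extension theorem; what the paper's route buys is avoiding the recursion entirely, and what yours buys is an explicit check that the weights form a distribution without appealing to the existence of $\C_\infty$.

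One small imprecision: it is not true that $V_n(t)$ is a series of strictly positive terms --- the terms with $k<t$ vanish (since $k_{(t)}=0$), and $p_K$ may have bounded support, in which case $V_n(t)=0$ for $t$ beyond that support and the new-cluster weight $\gamma V_n(t+1)/V_n(t)$ can be zero or, worse, $0/0$. The fix is the one the paper makes explicit: positivity of $V_n(t)$ is equivalent to the existence of some $k\geq t$ with $p_K(k)>0$ (independent of $n$), so any state $\C_{n-1}$ reachable with positive probability has $V_{n-1}(t)>0$ and hence $V_n(t)>0$, making the division legitimate; a zero weight for the new-cluster move is then harmless. This does not affect the substance of your induction.
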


Clearly, this bears a close resemblance to the Chinese restaurant process (i.e., the Blackwell--MacQueen urn process), in which the $n$th element is placed in an existing cluster $c$ with probability $\propto |c|$ or a new cluster with probability $\propto \alpha$ (the concentration parameter) \citep{Blackwell_1973,aldous1985exchangeability}.

% exchangeability---the same is true for any customer, given the others

\subsection{Random discrete measures}
\label{section:discrete-measure}

The MFM can also be formulated starting from a distribution on discrete measures that is analogous to the Dirichlet process. With $K$, $\pi$, and $\theta_{1:K}$ as in Equation \ref{equation:model}, let
$$ G =\sum_{i = 1}^K \pi_i \delta_{\theta_i} $$
where $\delta_\theta$ is the unit point mass at $\theta$. Let us denote the distribution of $G$ by $\M(p_K,\gamma,H)$. Note that $G$ is a random discrete measure over $\Theta$. If $H$ is continuous (i.e., $H(\{\theta\}) = 0$ for all $\theta\in\Theta$), then with probability $1$, the number of atoms is $K$; otherwise, there may be fewer than $K$ atoms.
If we take $X_1,\dotsc,X_n|G$ i.i.d.\ from the resulting mixture, namely, 
$$f_G(x):=\int f_\theta(x)G(d\theta) = \sum_{i = 1}^K \pi_i f_{\theta_i}(x), $$
then the distribution of $X_{1:n}$ is the same as before.
So, in this notation, the MFM model is:
\begin{align*}
& G\sim\M(p_K,\gamma,H)\\
& X_1,\dotsc,X_n\iid f_G, \mbox{ given $G$.}
\end{align*}

This random discrete measure perspective is connected to work on \emph{species sampling models} \citep{Pitman_1996} in the following way. When $H$ is continuous, we can construct a species sampling model by letting $G\sim\M(p_K,\gamma,H)$ and modeling the observed data as $\beta_1,\dotsc,\beta_n\sim G$. We refer to \cite{Pitman_1996}, \cite{Hansen_2000}, \cite{Ishwaran_2003}, \cite{Lijoi_2005b,Lijoi_2007}, and \cite{Lijoi_2008} for more background on species sampling models and further examples.
The following posterior predictive rule for this particular model follows the form of Pitman's general rule; note the close relationship to the restaurant process (Theorem \ref{theorem:restaurant} above).

\begin{theorem}
\label{theorem:species-predictive}
If $H$ is continuous, then $\beta_1\sim H$ and the distribution of $\beta_n$ given $\beta_1,\dotsc,\beta_{n-1}$ is proportional to
\begin{align}\label{equation:species-predictive}
\frac{V_n(t+1)}{V_n(t)}\gamma H + \sum_{i = 1}^t (n_i+\gamma)\delta_{\beta_i^*},
\end{align}
where $\beta^*_1,\dotsc,\beta^*_t$ are the distinct values taken by $\beta_1,\dotsc,\beta_{n-1}$, and $n_i =\#\big\{j\in[n-1]:\beta_j =\beta_i^*\big\}$.
\end{theorem}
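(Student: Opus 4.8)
The plan is to derive the predictive rule for $\beta_n \mid \beta_{1:n-1}$ directly from the partition-based formulation of the model, using Theorem~\ref{theorem:EPPF} together with the restaurant process of Theorem~\ref{theorem:restaurant}. The key point is that, since $H$ is continuous, the distinct values $\beta_1^*,\dotsc,\beta_t^*$ among $\beta_1,\dotsc,\beta_{n-1}$ are almost surely all distinct, so observing $\beta_{1:n-1}$ is equivalent to observing the partition $\C_{n-1}$ of $[n-1]$ (which $j$'s share a common value) together with the i.i.d.\ labels $\phi_c \iid H$ attached to each block $c \in \C_{n-1}$; this is exactly the representation in Equation~\ref{equation:model-C} with $f_\theta$ replaced by the point mass $\delta_\theta$. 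First I would make this correspondence precise: $\beta_j = \phi_{c(j)}$ where $c(j) \in \C_{n-1}$ is the block containing $j$, and $\beta_n$ is obtained by first deciding, via the restaurant process, whether element $n$ joins an existing block $c$ or forms a new block, and then setting $\beta_n = \phi_c$ (existing) or $\beta_n \sim H$ (new).

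Next I would read off the conditional law of $\beta_n$. Conditioning on $\beta_{1:n-1}$ fixes both $\C_{n-1}$ (with $t = |\C_{n-1}|$ blocks of sizes $n_1,\dotsc,n_t$, where $n_i = \#\{j \in [n-1] : \beta_j = \beta_i^*\}$) and the values $\phi_c = \beta_i^*$ for the block $c$ corresponding to $\beta_i^*$. By Theorem~\ref{theorem:restaurant}, element $n$ is added to the existing block of size $n_i$ with probability proportional to $n_i + \gamma$, in which case $\beta_n = \beta_i^*$, contributing the atom $(n_i + \gamma)\delta_{\beta_i^*}$; and element $n$ starts a new block with probability proportional to $\frac{V_n(t+1)}{V_n(t)}\gamma$, in which case $\beta_n$ is a fresh draw from $H$, contributing the continuous part $\frac{V_n(t+1)}{V_n(t)}\gamma H$. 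Summing these contributions gives precisely Equation~\ref{equation:species-predictive}, and the $n = 1$ case $\beta_1 \sim H$ is immediate since $\C_1$ has a single block whose label is drawn from $H$. The only subtlety worth spelling out is that the restaurant-process weights in Theorem~\ref{theorem:restaurant} are conditioned only on $\C_{n-1}$, not on the $\phi$'s, but since $\C_{n-1} \perp (\phi_c : c \in \C_{n-1})$ under Equation~\ref{equation:model-C} and the map from $\beta_{1:n-1}$ to $(\C_{n-1}, \phi)$ is a.s.\ a bijection, conditioning on $\beta_{1:n-1}$ is the same as conditioning on $(\C_{n-1}, \phi)$, and the placement of element $n$ depends on this pair only through $\C_{n-1}$.

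The main obstacle is not any hard computation but rather making the identification ``$\beta_{1:n-1} \leftrightarrow (\C_{n-1}, \phi)$ a.s.'' rigorous and checking that the conditional independence needed to invoke the restaurant-process weights genuinely holds. Concretely, one must verify: (i) continuity of $H$ implies $P(\beta_i^* = \beta_j^* \text{ for some } i \neq j) = 0$, so the partition is recoverable from the data; and (ii) under the joint law of $(\C_{1:\infty}, \phi, \beta_{1:\infty})$ built from Theorem~\ref{theorem:restaurant} and $\phi_c \iid H$, the conditional distribution of the block-assignment of element $n$ given $\beta_{1:n-1}$ agrees with its conditional distribution given $\C_{n-1}$ alone. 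Both are routine but deserve a sentence each; once they are in place, the formula drops out by matching terms, exactly as Pitman's general species-sampling predictive rule specializes to this EPPF. An alternative, slightly more computational route would bypass the restaurant process and instead compute $p(\beta_n \in A \mid \beta_{1:n-1})$ by writing it as a ratio of marginal likelihoods $m(\cdot)$ built from Equation~\ref{equation:EPPF} with $f_\theta = \delta_\theta$; this also works but is messier, so I would present the partition/restaurant-process argument as the main line.
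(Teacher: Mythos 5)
Your proposal is correct and follows essentially the same route as the paper's proof: translate $\beta_{1:n-1}$ into the pair $(\C_{n-1},\phi)$ via the partition representation, use continuity of $H$ to make this identification almost surely invertible, and then read the predictive weights off the restaurant process of Theorem \ref{theorem:restaurant}. The conditional-independence subtlety you flag (that the placement of element $n$ depends on $(\C_{n-1},\phi)$ only through $\C_{n-1}$) is handled somewhat more tersely in the paper, so spelling it out as you do is a reasonable refinement rather than a deviation.
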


For comparison, when $G\sim\mathrm{DP}(\alpha H)$ instead, the distribution of $\beta_n$ given $\beta_1,\dotsc,\beta_{n-1}$ is proportional to $\alpha H + \sum_{j = 1}^{n-1} \delta_{\beta_j} = \alpha H +\sum_{i = 1}^t n_i\beta_i^*$, since $G|\beta_1,\dotsc,\beta_{n-1}\sim\mathrm{DP}(\alpha H +\sum_{j = 1}^{n-1}\delta_{\beta_j})$ \citep{Ferguson_1973,Blackwell_1973}.

% future: For inference, look at Ishwaran_2003. They propose inference techniques for general species sampling mixtures. We might want to cite them for another possible inference algorithm.

\subsection{Stick-breaking representation}
\label{section:stick}

The Dirichlet process has an elegant stick-breaking representation for the mixture weights $\pi_1,\pi_2,\dotsc$ \citep{Sethuraman_1994,Sethuraman_1981}. This extraordinarily clarifying perspective has inspired a number of  nonparametric models \citep{MacEachern_1999,MacEachern_2000,Hjort_2000,Ishwaran_2000,Ishwaran_2001b,Griffin_2006,Dunson_2008,Chung_2009,Rodriguez_2011,Broderick_2012}, has provided insight into the properties of related models \citep{Favaro_2012,Teh_2007,Thibaux_2007,Paisley_2010}, and has been used to develop efficient inference algorithms \citep{Ishwaran_2001b,Blei_2006,Papaspiliopoulos_2008,Walker_2007b,Kalli_2011}.

In a certain special case---namely, when $p_K(k) =\mathrm{Poisson}(k-1|\lambda)$ and $\gamma = 1$---we have noticed that the MFM also has an interesting representation that we describe using the stick-breaking analogy, although it is somewhat different in nature. This is another example of the nice mathematical properties resulting from this choice of $p_K$ and $\gamma$.
Consider the following procedure: 
\begin{quote}
\emph{Take a unit-length stick, and break off i.i.d.\ $\mathrm{Exponential}(\lambda)$ pieces until you run out of stick.}
\end{quote}
In other words, let $\epsilon_1,\epsilon_2,\dotsc\iid\mathrm{Exponential}(\lambda)$, define $\tilde K =\min\{j:\sum_{i = 1}^j\epsilon_i \geq 1\}$, and set $\tilde\pi_i =\epsilon_i$ for $i = 1,\dotsc,\tilde K - 1$ and $\tilde\pi_{\tilde K} = 1-\sum_{i = 1}^{\tilde K-1}\tilde\pi_i$.

\begin{proposition}
\label{proposition:stick}
The stick lengths $\tilde\pi$ have the same distribution as the mixture weights $\pi$ in the MFM model when $p_K(k) =\mathrm{Poisson}(k-1|\lambda)$ and $\gamma = 1$.
\end{proposition}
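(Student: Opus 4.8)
The plan is to show that the pair $(\tilde K,\tilde\pi)$ --- where $\tilde K$ is the number of pieces and $\tilde\pi=(\tilde\pi_1,\dots,\tilde\pi_{\tilde K})$ --- has the same joint distribution as $(K,\pi)$ in the MFM of Equation \ref{equation:model} with $p_K(k)=\mathrm{Poisson}(k-1\mid\lambda)$ and $\gamma=1$; since everything downstream in the model depends on $\pi$ only through its distribution, this is exactly what is needed. Because both descriptions sample the weight vector conditionally on the number of components, it suffices to check: (i) $\tilde K\sim p_K$, i.e.\ $P(\tilde K=k)=e^{-\lambda}\lambda^{k-1}/(k-1)!$; and (ii) conditionally on $\tilde K=k$, the vector $(\tilde\pi_1,\dots,\tilde\pi_k)$ is $\mathrm{Dirichlet}_k(1,\dots,1)$, i.e.\ uniform on the simplex $\{\pi\in\R^k:\pi_i>0,\ \sum_i\pi_i=1\}$.

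Both facts will fall out of one elementary computation. Write $S_j=\sum_{i=1}^j\epsilon_i$, with $S_0=0$. Since the $\epsilon_i$ are strictly positive, the partial sums are strictly increasing, so $\{\tilde K=k\}$ is exactly $\{S_{k-1}<1\le S_k\}$ (the boundary events $\{S_j=1\}$ are null because the $\epsilon_i$ are continuous, so the conditioning is unambiguous). Starting from the joint density $\lambda^k e^{-\lambda(\epsilon_1+\dots+\epsilon_k)}$ on $(0,\infty)^k$ and integrating out $\epsilon_k$ over $[\,1-s,\infty)$, where $s=\epsilon_1+\dots+\epsilon_{k-1}<1$, the factor $e^{-\lambda s}$ cancels and one is left with the \emph{constant} $\lambda^{k-1}e^{-\lambda}$ on the corner simplex $\{\epsilon_i>0,\ \sum_{i=1}^{k-1}\epsilon_i<1\}$. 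Integrating this constant over that simplex (volume $1/(k-1)!$) yields (i), and the constancy of the restricted density shows that, given $\tilde K=k$, the vector $(\epsilon_1,\dots,\epsilon_{k-1})=(\tilde\pi_1,\dots,\tilde\pi_{k-1})$ is uniform on the corner simplex. Appending the slack coordinate, i.e.\ applying the map $(\epsilon_1,\dots,\epsilon_{k-1})\mapsto(\epsilon_1,\dots,\epsilon_{k-1},\,1-\sum_{i=1}^{k-1}\epsilon_i)$, carries this uniform law to the uniform law on the probability simplex, which is $\mathrm{Dirichlet}_k(1,\dots,1)$, giving (ii). As a sanity check, (i) can also be read directly off the Poisson-process picture: $S_1,S_2,\dots$ are the arrival times of a rate-$\lambda$ Poisson process, and $\tilde K-1$ counts the arrivals in $(0,1)$, hence is $\mathrm{Poisson}(\lambda)$.

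I do not anticipate a genuine obstacle: the whole argument is the single integral above together with the standard identification of ``uniform on a corner simplex, then append the slack coordinate'' with $\mathrm{Dirichlet}(1,\dots,1)$. The only things to be careful about are expressing $\{\tilde K=k\}$ correctly via monotonicity of the partial sums, checking the degenerate case $k=1$ (where $\tilde\pi_1=1$ deterministically and $P(\tilde K=1)=e^{-\lambda}$), and noting that the boundary events have probability zero so that the conditioning in (ii) makes sense. The final write-up should be only a few lines.
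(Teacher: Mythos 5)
Your argument is correct and is exactly the computation the paper has in mind: the paper gives no detailed proof, remarking only that the result ``is a consequence of a standard construction for Poisson processes'' (namely that $S_1,S_2,\dotsc$ are the arrival times of a rate-$\lambda$ Poisson process, so $\tilde K-1$ is the $\mathrm{Poisson}(\lambda)$ count in $(0,1)$ and, given that count, the spacings are uniform on the simplex, i.e.\ $\mathrm{Dirichlet}(1,\dotsc,1)$). Your single integral of $\lambda^k e^{-\lambda\sum_i\epsilon_i}$ over $\epsilon_k\in[1-s,\infty)$ is a self-contained verification of precisely those two standard facts, and the boundary/degenerate-case checks are all in order.
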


This is a consequence of a standard construction for Poisson processes. This suggests a way of generalizing the MFM model: take any sequence of nonnegative random variables $(\epsilon_1,\epsilon_2,\dotsc)$ (not necessarily independent or identically distributed) such that $\sum_{i = 1}^\infty\epsilon_i>1$ with probability $1$, and define $\tilde K$ and $\tilde\pi$ as above. Although the distribution of $\tilde K$ and $\tilde\pi$ may be complicated, in some cases it might still be possible to do inference based on the stick-breaking representation. This might be an interesting way to introduce different kinds of prior information on the mixture weights, however, we have not explored this possibility.

% future: take a look at the stick-breaking inference algorithms for DPMs/etc.---does it look like this would be possible?

% future: Matt's suggestion: consider generalizing to a renewal process (i.i.d. \epsilon's)

\section{Asymptotics}
\label{section:asymptotics}

In this section, we consider the asymptotics of $V_n(t)$, the asymptotic relationship between the number of components and the number of clusters, and the approximate form of the conditional distribution on cluster sizes given the number of clusters.

\subsection{Asymptotics of $V_n(t)$}
\label{section:v-asymptotics}

Recall that $V_n(t) =\sum_{k = 1}^\infty\frac{k_{(t)}}{(\gamma k)^{(n)}}\,p_K(k)$ (Equation \ref{equation:v})
for $1\leq t\leq n$, with $\gamma>0$ and $p_K$ a p.m.f.\ on $\{1,2,\dotsc\}$.

\begin{theorem}
\label{theorem:v-asymptotics}
For any $t\in\{1,2,\dotsc\}$, if $p_K(t)>0$ then
\begin{align}
V_n(t)\sim\frac{t_{(t)}}{(\gamma t)^{(n)}}\,p_K(t) \sim \frac{t!}{n!}\,\frac{\Gamma(\gamma t)}{n^{\gamma t - 1}}\,p_K(t)
\end{align}
as $n\to\infty$.
\end{theorem}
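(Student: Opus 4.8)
The plan is to analyze the series $V_n(t) = \sum_{k=1}^\infty \frac{k_{(t)}}{(\gamma k)^{(n)}}\, p_K(k)$ term by term and show that as $n\to\infty$ the $k=t$ term dominates. First I would observe that the $k=t$ term is exactly $\frac{t_{(t)}}{(\gamma t)^{(n)}}\,p_K(t)$, which is nonzero by the hypothesis $p_K(t)>0$; note that $k_{(t)} = 0$ for $k < t$, so those terms vanish outright and the sum effectively runs over $k\geq t$. The goal is then to prove
\begin{align*}
\frac{V_n(t)}{\frac{t_{(t)}}{(\gamma t)^{(n)}}\,p_K(t)} = 1 + \frac{1}{p_K(t)}\sum_{k=t+1}^\infty \frac{k_{(t)}}{t_{(t)}}\,\frac{(\gamma t)^{(n)}}{(\gamma k)^{(n)}}\,p_K(k) \longrightarrow 1.
\end{align*}

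The key estimate is on the ratio $r_k(n) := \frac{(\gamma t)^{(n)}}{(\gamma k)^{(n)}}$ for $k > t$. Writing this out, $r_k(n) = \prod_{j=0}^{n-1}\frac{\gamma t + j}{\gamma k + j}$, which is a product of $n$ factors each strictly less than $1$ and decreasing in $n$; so for fixed $k>t$, $r_k(n)\to 0$. More importantly I need a uniform-in-$k$ bound to justify interchanging the limit with the sum. The cleanest approach: compare consecutive ratios. For $k = t+1$, $r_{t+1}(n) = \prod_{j=0}^{n-1}\frac{\gamma t+j}{\gamma t+\gamma+j}$. Since each factor satisfies $\frac{\gamma t+j}{\gamma t+\gamma+j}\leq \frac{\gamma t+j}{\gamma t+1+j}$ when $\gamma\geq 1$ — and more carefully one can bound $r_{t+1}(n) = O(n^{-\gamma})$ using $\prod_{j=0}^{n-1}(1 - \frac{\gamma}{\gamma k+j})$ and $\log(1-x)\leq -x$ together with $\sum_{j=0}^{n-1}\frac{1}{\gamma t + \gamma + j} \sim \log n$ — we get $r_{t+1}(n)\to 0$ polynomially. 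Then since $r_k(n) \leq r_{t+1}(n)$ for all $k\geq t+1$ (the ratio $\frac{(\gamma t)^{(n)}}{(\gamma k)^{(n)}}$ is decreasing in $k$), and since $\frac{k_{(t)}}{t_{(t)}} = \binom{k}{t}\big/\binom{t}{t}\cdot\frac{1}{\,\cdot\,}$... actually $k_{(t)} = k!/(k-t)!$ grows only polynomially in $k$ (degree $t$), one must check that $\sum_{k>t} k_{(t)}\, r_k(n)\, p_K(k)$ is controlled. Here I would use $r_k(n)\leq r_{t+1}(n)$ only after first extracting enough decay: write $r_k(n) = r_{t+1}(n)\cdot\frac{(\gamma(t+1))^{(n)}}{(\gamma k)^{(n)}}$ and bound the remaining factor by $\big(\frac{\gamma(t+1)}{\gamma k}\big)^{?}$ — or more simply, note $\frac{(\gamma(t+1))^{(n)}}{(\gamma k)^{(n)}}\leq \big(\frac{t+1}{k}\big)^{\gamma t}$ for $n$ large enough (in fact $\leq 1$ always), which together with $k_{(t)} \leq k^t$ and $\gamma t < $ ... hmm, this needs $\gamma > 1$ to make $k^t (t+1/k)^{\gamma t} p_K(k)$ summable without using the tail of $p_K$. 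The safe route, valid for all $\gamma>0$: use that $\frac{k_{(t)}}{(\gamma k)^{(n)}} \leq \frac{k^t}{(\gamma k)^n}\leq \gamma^{-n}$ (already noted in Section~\ref{section:v}), so the tail $\sum_{k\geq M} \frac{k_{(t)}}{(\gamma k)^{(n)}} p_K(k)\leq \gamma^{-n}\sum_{k\geq M}p_K(k)$, and handle the finite range $t < k < M$ by the termwise limit $r_k(n)\to 0$. Dividing by the reference term $\frac{t_{(t)}}{(\gamma t)^{(n)}}p_K(t) \asymp \gamma^{-n} n^{-(\gamma t -1)}$ shows both pieces are $o(1)$ — the finite sum because each term is $o(1)$, the tail because it gains a factor $n^{\gamma t - 1}\sum_{k\geq M}p_K(k)$ which can be made small by choosing $M$ large first... but wait, $n^{\gamma t-1}$ blows up. So the tail argument must instead compare $\frac{(\gamma t)^{(n)}}{(\gamma k)^{(n)}}$ to something summable; the correct bound is $r_k(n) \leq \big(\frac{\Gamma(\gamma k)}{\Gamma(\gamma t)}\big)^{-1}\cdot C$ type — I'll use the asymptotic $\frac{t_{(t)}}{(\gamma t)^{(n)}}\sim \frac{t!\,\Gamma(\gamma t)}{n!\, n^{\gamma t-1}}$ (from Stirling/standard ratio asymptotics for $\frac{\Gamma(n+a)}{\Gamma(a)} = a^{(n)}$) and the analogous one for each $k$, getting $\frac{k_{(t)}(\gamma t)^{(n)}}{t_{(t)}(\gamma k)^{(n)}} \sim \frac{k_{(t)}}{t_{(t)}}\cdot\frac{\Gamma(\gamma t)}{\Gamma(\gamma k)}\, n^{\gamma t - \gamma k}\to 0$, and for the uniform bound use monotonicity of $n\mapsto r_k(n)$ plus a single large-$n$ threshold.

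The main obstacle, then, is making the dominated-convergence / uniform-tail argument rigorous for \emph{all} $\gamma>0$ without assuming anything about the tail of $p_K$ beyond summability: the naive bound $\frac{k_{(t)}}{(\gamma k)^{(n)}}\leq\gamma^{-n}$ loses the crucial polynomial decay in $k$, while the sharp per-$k$ asymptotic $\sim c_k n^{-\gamma k}$ is not obviously uniform in $k$. I expect the resolution is the elementary monotonicity fact that $r_k(n+1) \leq r_k(n)$ and $r_k(n)$ is decreasing in $k$ for $k\geq t$, so that $\sum_{k>t} k_{(t)} r_k(n) p_K(k)$ is dominated, for $n\geq n_0$, by $\sum_{k>t} k_{(t)} r_k(n_0) p_K(k)$ with $n_0$ chosen so that $r_{t+1}(n_0) < 1$ and then this dominating series is finite (since $r_k(n_0)\leq \gamma^{-n_0}$ makes $k_{(t)} r_k(n_0) p_K(k)\leq \gamma^{-n_0} k^t p_K(k)$, still not obviously summable...) — so in fact I will instead take $n_0$ large enough that $r_k(n_0)\leq C k^{-\gamma t - 2}$ uniformly (provable since $r_k(n_0)$ decays like $k^{-\gamma n_0}$ in $k$ for fixed $n_0$, and $\gamma n_0 > \gamma t + 2$ for $n_0$ large), which makes $k_{(t)} r_k(n_0) p_K(k)\leq C k^{t - \gamma t - 2}p_K(k)$ summable, and then dominated convergence applies to the rescaled sum since each rescaled term tends to $0$. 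I would then finish by invoking the standard ratio asymptotic $\frac{a^{(n)}}{n!}\sim \frac{1}{\Gamma(a)}n^{a-1}$ (equivalently $a^{(n)} = \Gamma(n+a)/\Gamma(a)$ and Stirling) with $a = \gamma t$ to rewrite $\frac{t_{(t)}}{(\gamma t)^{(n)}} = \frac{t!}{(\gamma t)^{(n)}}\sim \frac{t!}{n!}\cdot\frac{\Gamma(\gamma t)}{n^{\gamma t-1}}$, giving the second asymptotic equivalence in the statement.
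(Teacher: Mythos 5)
Your proposal is correct and follows essentially the same route as the paper's proof: isolate the $k=t$ term, use the Stirling ratio asymptotic $x^{(n)}/n!\sim n^{x-1}/\Gamma(x)$ for both equivalences, and kill the normalized remainder $\sum_{k>t}b_{nk}$, where $b_{nk}=(\gamma t)^{(n)}\frac{k_{(t)}}{(\gamma k)^{(n)}}p_K(k)$, by combining the termwise limit $b_{nk}\to 0$ with monotonicity of $b_{nk}$ in $n$ and a dominated-convergence lemma. The one simplification you missed is that finiteness of the dominating series is immediate from $\sum_{k>t}b_{n_0k}\le(\gamma t)^{(n_0)}V_{n_0}(t)<\infty$ (using the already-established bound $k_{(t)}/(\gamma k)^{(n_0)}\le k^{t-n_0}\gamma^{-n_0}$), so your tail estimates are unnecessary --- and note, incidentally, that for fixed $n_0$ the ratio $(\gamma t)^{(n_0)}/(\gamma k)^{(n_0)}$ decays like $k^{-n_0}$, not $k^{-\gamma n_0}$, so the correct choice is $n_0>t+2$ rather than $\gamma n_0>\gamma t+2$.
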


In particular, $V_n(t)$ has a simple interpretation, asymptotically---it behaves like the $k = t$ term in the series.

\subsection{Relationship between the number of clusters and number of components}
\label{section:clusters-components-relationship}

In the MFM, it is perhaps intuitively clear that, under the prior at least, the number of clusters $T =|\C|$ behaves very similarly to the number of components $K$ when $n$ is large. It turns out that under the posterior they also behave very similarly for large $n$. 

\begin{theorem}
\label{theorem:clusters-components-relationship}
Let $x_1,x_2,\dotsc\in\X$ and $k\in\{1,2,\dotsc\}$. If $p_K(1),\dotsc,p_K(k)>0$ then
$$ \big|p(T = k\mid x_{1:n}) - p(K = k\mid x_{1:n})\big| \longrightarrow 0 $$
as $n\to\infty$.
\end{theorem}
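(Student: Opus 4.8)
The plan is to reduce everything to a statement about the ratio $V_n(t)/V_n(t+1)$ and the posterior on $T$, using the equivalent partition-based formulation in Equation \ref{equation:model-C} together with the key observation that, conditionally on the data, the discrepancy between the laws of $T$ and $K$ is governed entirely by the formula for $p(k\mid t)$ in Equation \ref{equation:pkt}. First I would write $p(T=j\mid x_{1:n})$ and $p(K=j\mid x_{1:n})$ each as a ratio of marginal likelihoods: summing Equation \ref{equation:marginal-product} weighted by $p(\C)$ over partitions with $|\C|=j$ gives, for the numerator of $p(T=j\mid x_{1:n})$, the quantity $V_n(j)\, S_n(j)$ where $S_n(j) := \sum_{\C:|\C|=j}\bigl(\prod_{c\in\C}\gamma^{(|c|)}\bigr) m(x_c)$ is a purely combinatorial/data-dependent factor not involving $p_K$; similarly $p(K=j\mid x_{1:n}) = \sum_{t} p(K=j\mid T=t)\,p(T=t\mid x_{1:n})$ by the conditional independence $X_{1:n}\perp K\mid T$ from Equation \ref{equation:XKT}. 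So both posteriors are built from the same ``data part'' $S_n(j)$, and the only difference is the weight sequence: $V_n(j)$ for $T$, versus the $p_K$-mixture of $k_{(j)}/(\gamma k)^{(n)}$ for $K$.

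Next I would invoke Theorem \ref{theorem:v-asymptotics}: since $p_K(1),\dots,p_K(k)>0$, for each fixed $j\le k$ we have $V_n(j)\sim \frac{j_{(j)}}{(\gamma j)^{(n)}}p_K(j)$, i.e.\ the $k=j$ term dominates the series. The heuristic is then that for $j\le k$, $p(T=j\mid x_{1:n})$ and $p(K=j\mid x_{1:n})$ have asymptotically the same ratios among themselves, and the total mass on $\{j>k\}$ becomes negligible in a suitable sense, or at least is matched between the two. Concretely, I would show that for $1\le i<j\le k$,
\begin{align*}
\frac{p(T=j\mid x_{1:n})}{p(T=i\mid x_{1:n})} = \frac{V_n(j)}{V_n(i)}\cdot\frac{S_n(j)}{S_n(i)},\qquad
\frac{p(K=j\mid x_{1:n})}{p(K=i\mid x_{1:n})} = \frac{\sum_{t}p(K=j\mid T=t)\,V_n(t)S_n(t)}{\sum_{t}p(K=i\mid T=t)\,V_n(t)S_n(t)},
\end{align*}
and argue that the latter ratio tends to the same limit: using Equation \ref{equation:pkt}, $p(K=j\mid T=t)\propto \frac{j_{(t)}}{(\gamma j)^{(n)}}p_K(j)$, which vanishes unless $t\le j$, and by Theorem \ref{theorem:v-asymptotics} the sum over $t$ is asymptotically dominated by $t=j$ for the numerator (and $t=i$ for the denominator), collapsing it to the same expression $\frac{V_n(j)}{V_n(i)}\frac{S_n(j)}{S_n(i)}$ up to $1+o(1)$. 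One also needs that $p(T>k\mid x_{1:n})$ and $p(K>k\mid x_{1:n})$ either both go to zero or are controlled; here I would use that the event $T\le k$ implies $K\ge T$ but the tail contributions factor through the same $V_n(t)S_n(t)$ weights, so the normalizing constants agree asymptotically and the difference of the two probabilities at each fixed value $\le k$ tends to $0$.

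The main obstacle I anticipate is controlling the tail $\sum_{t>k} V_n(t) S_n(t)$ (and the analogous sum for $K$) relative to the $t\le k$ terms — Theorem \ref{theorem:v-asymptotics} only gives clean asymptotics for \emph{fixed} $t$, and $S_n(t)$ involves the data in a way that is not obviously bounded, so the argument cannot simply be ``the $t=j$ term wins term-by-term.'' The cleanest route is probably to avoid normalizing constants altogether: show directly that $\bigl|p(T=j\mid x_{1:n})-p(K=j\mid x_{1:n})\bigr|\to 0$ for each fixed $j\le k$ by writing $p(K=j\mid x_{1:n}) = \sum_{t\le j} p(K=j\mid T=t) p(T=t\mid x_{1:n})$, noting $p(K=j\mid T=j)\to 1$ (from Equation \ref{equation:pkt} and Theorem \ref{theorem:v-asymptotics}, since the $k=t=j$ term dominates $V_n(j)$, forcing $p(K=j\mid T=j) = \frac{1}{V_n(j)}\frac{j_{(j)}}{(\gamma j)^{(n)}}p_K(j)\to 1$), and $p(K=j\mid T=t)\to 0$ for $t<j$. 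Then $p(K=j\mid x_{1:n}) = p(T=j\mid x_{1:n})(1+o(1)) + \sum_{t<j}o(1)\,p(T=t\mid x_{1:n})$, and since all the $p(T=t\mid x_{1:n})$ are bounded by $1$, the difference is $o(1)$. This sidesteps the tail issue entirely, at the cost of needing the two limits $p(K=j\mid T=j)\to1$ and $p(K=j\mid T=t)\to0$ ($t<j$) stated uniformly over the finitely many relevant pairs, which follows from Theorem \ref{theorem:v-asymptotics} since $p_K(1),\dots,p_K(k)>0$.
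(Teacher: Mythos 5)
Your final paragraph is exactly the paper's proof: decompose $p(K=k\mid x_{1:n})=\sum_{t\le k}p_n(K=k\mid T=t)\,p(T=t\mid x_{1:n})$ using $X_{1:n}\perp K\mid T$, note that $p_n(K=t\mid T=t)=\frac{1}{V_n(t)}\frac{t_{(t)}}{(\gamma t)^{(n)}}p_K(t)\to 1$ by Theorem \ref{theorem:v-asymptotics} (hence $p_n(K=k\mid T=t)\to I(k=t)$ for $t\le k$), and bound the posterior probabilities by $1$. The ratio-of-normalizing-constants route in your first two paragraphs is an unnecessary detour with the tail-control gap you yourself identify, but the ``cleanest route'' you settle on is correct and is precisely the argument in the paper.
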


\subsection{Distribution of the cluster sizes under the prior}
\label{section:part-sizes}

Here, we examine one of the major differences between the MFM and DPM priors. Roughly speaking, under the prior, the MFM prefers all clusters to be the same order of magnitude, while the DPM prefers having a few large clusters and many very small clusters.  In the following calculations, we quantify the preceding statement more precisely. (See \cite{Green_2001} for informal observations along these lines.) Interestingly, these prior influences remain visible in certain aspects of the posterior, even in the limit as $n$ goes to infinity, as shown by the inconsistency of DPMs for the number of components in a finite mixture \citep{Miller_2014}.

Let $\C$ be the partition of $[n]$ in the MFM model (Equation \ref{equation:EPPF}), and let $A =(A_1,\dotsc,A_T)$ be the ordered partition of $[n]$ obtained by randomly ordering the parts of $\C$, uniformly among the $T!$ possible choices, where $T =|\C|$. Then
$$ p(A) = \frac{p(\C)}{|\C|!} =\frac{1}{t!} V_n(t)\prod_{i = 1}^t\gamma^{(|A_i|)},$$
where $t =|\C|$. 
Now, let $S =(S_1,\dotsc,S_T)$ be the vector of part sizes of $A$, that is, $S_i =|A_i|$. Then
$$ p(S = s) =\sum_{A: S(A) = s} p(A) = V_n(t)\,\frac{n!}{t!}\prod_{i = 1}^t\frac{\gamma^{(s_i)}}{s_i!} $$
for $s\in\Delta_t$, $t\in\{1,\ldots,n\}$, where $\Delta_t =\{s\in\Z^t:\sum_i s_i = n,\,s_i\geq 1\,\forall i\}$ (i.e., the $t$-part compositions of $n$).
For any $x>0$, writing $x^{(m)}/m! =\Gamma(x + m)/(m!\,\Gamma(x))$ and using Stirling's approximation, we have
$x^{(m)} / m! \sim m^{x - 1} / \Gamma(x)$ as $m\to\infty$. This yields the approximations
$$ p(S = s) \approx \frac{V_n(t)}{\Gamma(\gamma)^t}\,\frac{n!}{t!}\,\prod_{i = 1}^t s_i^{\gamma - 1} 
\approx \frac{p_K(t)}{n^{\gamma t-1}}\,\frac{\Gamma(\gamma t)}{\Gamma(\gamma)^t}\,\prod_{i = 1}^t s_i^{\gamma - 1}$$
(using Theorem \ref{theorem:v-asymptotics} in the second step), and
$$ p(S = s\mid T = t)\approx \kappa\prod_{i = 1}^t s_i^{\gamma - 1} $$
for $s\in\Delta_t$, where $\kappa$ is a normalization constant. Thus $p(s|t)$, although a discrete distribution, has approximately the same shape as a symmetric $t$-dimensional Dirichlet distribution. This would be obvious if we were conditioning on the number of components $K$, and it makes intuitive sense when conditioning on $T$, since $K$ and $T$ are essentially the same for large $n$.

It is very interesting to compare this to the corresponding distributions for Dirichlet process mixtures. In the  DPM, we have $p_\DPM(\C) =\frac{\alpha^t}{\alpha^{(n)}}\prod_{c\in\C}(|c|- 1)!$, and $p_\DPM(A) = p_\DPM(\C)/|\C|!$ as before, so for $s\in\Delta_t$, $t\in\{1,\ldots,n\}$,
$$ p_\DPM(S =s) =  \frac{n!}{\alpha^{(n)}}\,\frac{\alpha^t}{t!}\,s_1^{-1}\cdots s_t^{-1}$$
and
$$ p_\DPM(S = s\mid T = t)\propto s_1^{-1}\cdots s_t^{-1}, $$
which has the same shape as a $t$-dimensional Dirichlet distribution with all the parameters taken to $0$ (noting that this is normalizable since $\Delta_t$ is finite). Asymptotically in $n$, $p_\DPM(s|t)$ puts all of its mass in the ``corners'' of the discrete simplex $\Delta_t$, while under the MFM, $p(s|t)$ remains more evenly dispersed.

\section{Inference algorithms}
\label{section:inference}

As shown by the results of Sections \ref{section:partitions} and \ref{section:representations}, MFMs have many of the same properties as DPMs. As a result, much of the extensive body of work on MCMC samplers for DPMs can be directly applied to MFMs, including samplers for conjugate and non-conjugate cases, as well as split-merge samplers.

When $H$ is a conjugate prior for $\{f_\theta\}$, such that the marginal likelihood $m(x_c) =\int_\Theta \big[\prod_{j\in c} f_\theta(x_j)\big]\,H(d\theta)$ can be easily computed, the following Gibbs sampling algorithm can be used to sample from the posterior on partitions, $p(\C|x_{1:n})$. Given a partition $\C$, let $\C\setminus j$ denote the partition obtained by removing element $j$ from $\C$.
\begin{enumerate}
\item Initialize $\C =\{[n]\}$ (i.e., one cluster). %elements $1,\ldots,n$ are placed in a single cluster, all together).
\item Repeat the following $N$ times, to obtain $N$ samples.
\begin{itemize}
\item[] For $j = 1,\dotsc,n$:  Remove element $j$ from $\C$, and place it \dots
\begin{itemize}
\item[] in $c\in\C\setminus j$ with probability $\displaystyle\propto (|c|+\gamma)\frac{m(x_{c\,\cup j})}{m(x_c)}$
\item[] in a new cluster with probability $\displaystyle\propto \gamma\, \frac{V_n(t+1)}{V_n(t)}\,m(x_j)$
\end{itemize}
where $t=|\C\setminus j|$.
\end{itemize}
\end{enumerate}
This is a direct adaptation of ``Algorithm 3'' for DPMs \citep{MacEachern_1994,Neal_1992,Neal_2000}. The only differences are that in Algorithm 3, $|c|+\gamma$ is replaced by $|c|$, and $\gamma V_n(t+1)/V_n(t)$ is replaced by $\alpha$ (the concentration parameter). Thus, the differences between the MFM and DPM versions of the algorithm are precisely the same as the differences between their respective restaurant processes.
Computing the required values of $V_n(t)$ takes a negligible amount of time compared to running the sampler.
In order for the algorithm to be valid, the Markov chain needs to be irreducible, and to achieve this it is necessary to have $\big\{t\in\{1,2,\dotsc\}:V_n(t)>0\big\}$ be a block of consecutive integers. In fact, it turns out that this is always the case (and this block includes $t = 1$), since for any $k$ such that $p_K(k)>0$, we have $V_n(t)>0$ for all $t = 1,\dotsc,k$.

When $H$ is a non-conjugate prior (and $m(x_c)$ cannot be easily computed), a clever auxiliary variable technique referred to as ``Algorithm 8'' can be used for inference in the DPM \citep{Neal_2000,MacEachern_1998}. Making the same substitutions as above, we can apply Algorithm 8 to perform inference in the MFM as well; see \citet{Miller_thesis} for details.

A well-known issue with incremental Gibbs samplers such as these, however, when applied to DPMs, is that the mixing can be somewhat slow, since it may take a long time to create or destroy substantial clusters by moving one element at a time. With MFMs, this issue seems to be exacerbated, since MFMs tend to put small probability (compared with DPMs) on partitions with tiny clusters (see Section \ref{section:part-sizes}), making it difficult for the sampler to move through these regions of the space. 

To deal with this issue, split-merge samplers for DPMs have been developed, in which a large number of elements can be reassigned in a single move \citep{Dahl_2003,Dahl_2005,Jain_2004,Jain_2007}. In the same way as the incremental samplers, one can directly apply these split-merge samplers (both conjugate and non-conjugate) to MFMs, using the properties described in Sections \ref{section:partitions} and \ref{section:representations}. More generally, it seems likely that any partition-based MCMC sampler for DPMs could be applied to MFMs as well.

In Section \ref{section:empirical}, we apply the Jain--Neal split-merge samplers coupled with incremental Gibbs samplers, in both conjugate and non-conjugate settings.

% In the course of doing inference, $n$ is usually fixed and the MCMC sampling methods we describe below (see Section \ref{section:inference}) typically only require one to compute $V_n(t)$ for $t = 1,\dotsc,t_{\max}$ for some relatively small $t_{\max}$, since higher values of $t$ have such low probability that in practice they are never visited by the sampler. Consequently, it suffices to precompute $V_n(1),\dotsc,V_n(t_{\max})$ for some suitably chosen $t_{\max}$, and this takes a negligible amount of time compared to running the sampler. 

\section{Empirical demonstrations}
\label{section:empirical}

% \subsection{Inference}
% \label{section:inference}

In this section, we demonstrate the MFM on simulated and real datasets. All of the examples below involve Gaussian component densities, but of course our approach is not limited to mixtures of Gaussians.

\subsection{Simulation example}
\label{section:simulation}

In the introduction, we presented several figures comparing the MFM and DPM on data from a three-component bivariate Gaussian mixture, illustrating the behavior of the MFM with respect to density estimation, clustering, and inference for the number of components. Here, we provide the details of the data, model, and method of inference for this simulation example.

\subsubsection*{Data}

The data distribution is 
$\sum_{i = 1}^3 w_i \N(\mu_i,C_i)$ where $w =(0.45,0.3,0.25)$, $\mu_1 =\matrixsmall{4\\4}$, $\mu_2 =\matrixsmall{7\\4}$, $\mu_3 =\matrixsmall{6\\2}$, 
$C_1 = \matrixsmall{1 & 0\\ 0 & 1}$, 
$C_2 = R\matrixsmall{2.5 & 0\\ 0 & 0.2}R^\T$ where 
$R =\matrixsmall{\cos\rho & \, -\sin\rho \\ \sin\rho & \cos\rho}$ with $\rho=\pi/4$, and
$C_3 = \matrixsmall{3 & 0\\ 0 & 0.1}$.

\subsubsection*{Model}

The component densities are multivariate normal,
$f_\theta(x) = f_{\mu,\Lambda}(x) =\N(x|\mu,\Lambda^{-1})$
and the base measure (prior) $H$ on $\theta =(\mu,\Lambda)$ is
$\mu\sim\N(\widehat\mu,\widehat C)$, $\Lambda\sim\Wishart_d(V,\nu)$ 
independently, where $\widehat\mu$ is the sample mean, $\widehat C$ is the sample covariance, $\nu=d=2$, and $V=\widehat C^{-1}/\nu$.  Here, $\Wishart_d(\Lambda|V,\nu)\propto |\det \Lambda|^{(\nu-d-1)/2} \exp\big(-\tfrac{1}{2} \mathrm{tr}(V^{-1}\Lambda)\big)$. Note that this is a data-dependent prior.

For the MFM, we take $K\sim\mathrm{Geometric}(r)$ ($p_K(k) = (1-r)^{k-1} r$ for $k = 1,2,\dotsc$) with $r = 0.1$, and we choose $\gamma = 1$ for the finite-dimensional Dirichlet parameters.
For the DPM, we put an $\mathrm{Exponential}(1)$ prior on the concentration parameter, $\alpha$.

Note that taking $\mu$ and $\Lambda$ to be independent results in a non-conjugate prior. This prior is appropriate when the location of the data is not informative about the covariance (and vice versa).

\subsubsection*{Inference}

For both the MFM and DPM, we use the non-conjugate split-merge sampler of \citet{Jain_2007}, coupled with Algorithm 8 of \citet{Neal_2000} (using a single auxiliary variable) for incremental Gibbs updates to the partition. Specifically, following \citet{Jain_2007}, we use the (5,1,1,5) scheme: 5 intermediate scans to reach the split launch state, 1 split-merge move per iteration, 1 incremental Gibbs scan per iteration, and 5 intermediate moves to reach the merge launch state. Gibbs updates to the DPM concentration parameter $\alpha$ are made using Metropolis--Hastings moves.

Five independent datasets were used for each $n \in\{50,100,250,1000\}$, and for each model (MFM and DPM), the sampler was run for 5,000 burn-in iterations and 95,000 sample iterations (for a total of 100,000). Judging by traceplots and running averages of various statistics, this appeared to be sufficient for mixing. The cluster sizes were recorded after each iteration, and to reduce memory storage requirements, the full state of the chain was recorded only once every 100 iterations. For each run, the seed of the random number generator was initialized to the same value for both the MFM and DPM.

For a dataset of size $n$, the sampler used for these experiments took approximately $8\times 10^{-6}\,n$ seconds per iteration, using a 2.80 GHz processor with 6 GB of RAM.

\subsubsection*{Results}

As described in the introduction, the results of this simulation empirically indicate that on data from a finite mixture, MFMs and DPMs are consistent for the density (Figures \ref{figure:density-estimates} and \ref{figure:Hellinger}), DPM clusterings tend to have small extra clusters while MFM clusterings do not (Figure \ref{figure:clustering}), and MFMs are consistent for the number of components while DPMs are not (Figure \ref{figure:tk-posteriors}).  This is what we expect from theory (although to be precise, the inconsistency result of \citet{Miller_2014} only applies to the case of fixed concentration parameter $\alpha$).
These results are not too surprising, since when the data distribution is a finite mixture from the assumed family, the MFM is correctly specified, while the DPM is not.  On data from an infinite mixture, one would expect the DPM to have certain advantages.
See Appendix \ref{section:formulas} for formulas for computing the posterior on $k$ and the density estimates.

\subsection{Galaxy dataset}
\label{section:galaxy}

The galaxy dataset \citep{Roeder_1990} is a standard benchmark for mixture models, consisting of measurements of the velocities of $82$ galaxies in the Corona Borealis region; see Figure \ref{figure:galaxy}.
The purpose of this example is to demonstrate agreement between our method and published results using reversible jump MCMC with the same model, and also to show that using hyperpriors presents no difficulties.

\subsubsection*{Model}

To enable comparison, we use exactly the same model as \citet{Richardson_1997}.
The component densities are univariate normal,
$f_\theta(x) = f_{\mu,\lambda}(x) =\N(x|\mu,\lambda^{-1})$,
and the base measure $H$ on $\theta =(\mu,\lambda)$ is
$\mu\sim\N(\mu_0,\sigma_0^2)$, $\lambda\sim\Ga(a,b)$
independently (where $\Ga(\lambda|a,b)\propto \lambda^{a-1} e^{-b \lambda}$). Further, a hyperprior is placed on $b$, by taking $b\sim\Ga(a_0,b_0)$. The remaining parameters are set to
$\mu_0 =(\max\{x_i\}+\min\{x_i\})/2$, $\sigma_0 =\max\{x_i\}-\min\{x_i\}$, $a = 2$, $a_0 = 0.2$, and $b_0 = 10/\sigma_0^2$. Note that the parameters $\mu_0$, $\sigma_0$, and $b_0$ are functions of the observed data $x_1,\dotsc,x_n$.
See \citet{Richardson_1997} for the rationale behind these parameter choices. (Note: This choice of $\sigma_0$ may be a bit too large, affecting the posteriors on the number of clusters and components, however, we stick with it to enable comparisons to \citet{Richardson_1997}.)
For the MFM, following \citet{Richardson_1997}, we take $K\sim\mathrm{Uniform}\{1,\allowbreak\dotsc,30\}$ and $\gamma = 1$. For the DPM, we take $\alpha\sim\mathrm{Exponential}(1)$.

\subsubsection*{Inference}

As before, we use the non-conjugate split-merge sampler of \citet{Jain_2007} coupled with Algorithm 8 of \citet{Neal_2000}, and Gibbs updates to the DPM concentration parameter $\alpha$ are made using Metropolis--Hastings.
We use Gibbs sampling to handle the hyperprior on $b$ (i.e., append $b$ to the state of the Markov chain, run the sampler given $b$ as usual, and periodically sample $b$ given everything else). More general hyperprior structures can be handled similarly. In all other respects, the same inference algorithm as in Section \ref{section:simulation} was used.

We do not restrict the parameter space in any way (e.g., forcing the component means to be ordered to obtain identifiability, as was done by \citet{Richardson_1997}). All of the quantities we consider are invariant to the labeling of the clusters. See \citet{Jasra_2005} for discussion on this point.

The sampler was run for 5,000 burn-in iterations, and 45,000 sample iterations.  This appeared to be more than sufficient for mixing. Cluster sizes were recorded after each iteration, and the full state of the chain was recorded every 50 iterations.
Each iteration took approximately $8\times 10^{-6}\,n$ seconds, with $n = 82$.

\subsubsection*{Results}

Figure \ref{figure:galaxy} shows the estimated densities and the posteriors on the number of clusters and components.  Comparing this with Figure 2(c) of \citet{Richardson_1997}, we see that our MFM density estimate is visually indistinguishable from theirs (as it should be, since we are using the same model with the same parameters).

\begin{figure}
\centering
\includegraphics[trim=0.8cm 0.5cm 1cm 0.2cm, clip=true, width=0.49\textwidth]{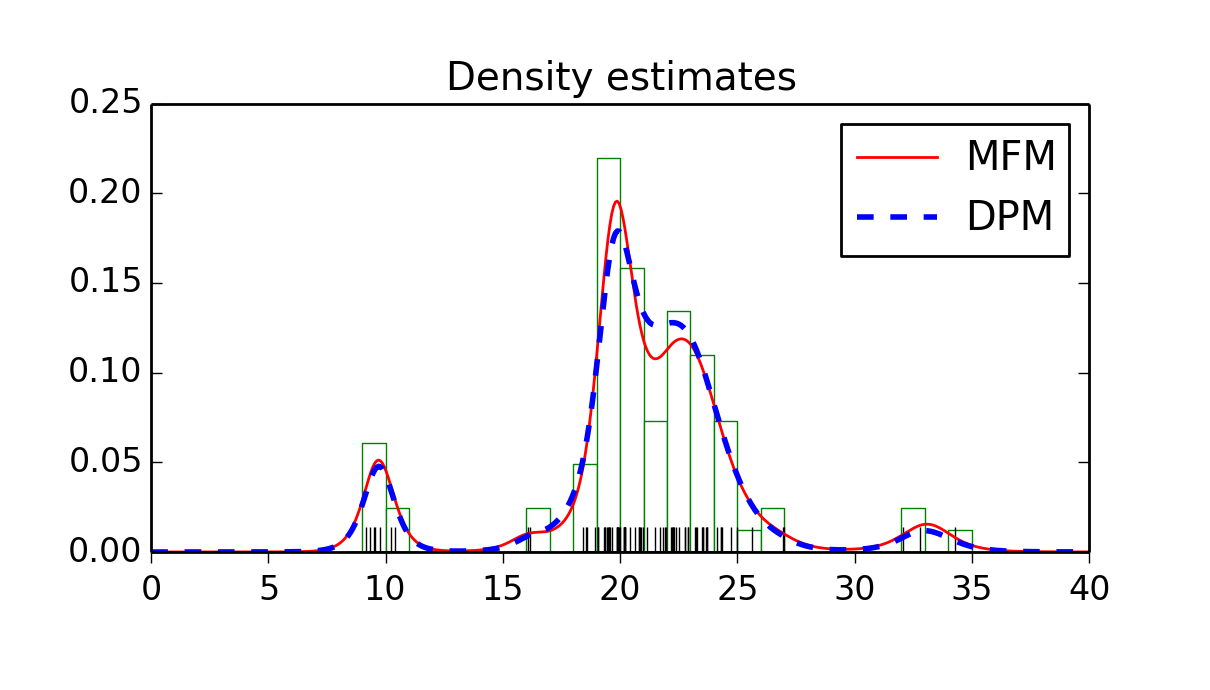}
\includegraphics[trim=0.8cm 0.5cm 1cm 0.2cm, clip=true, width=0.49\textwidth]{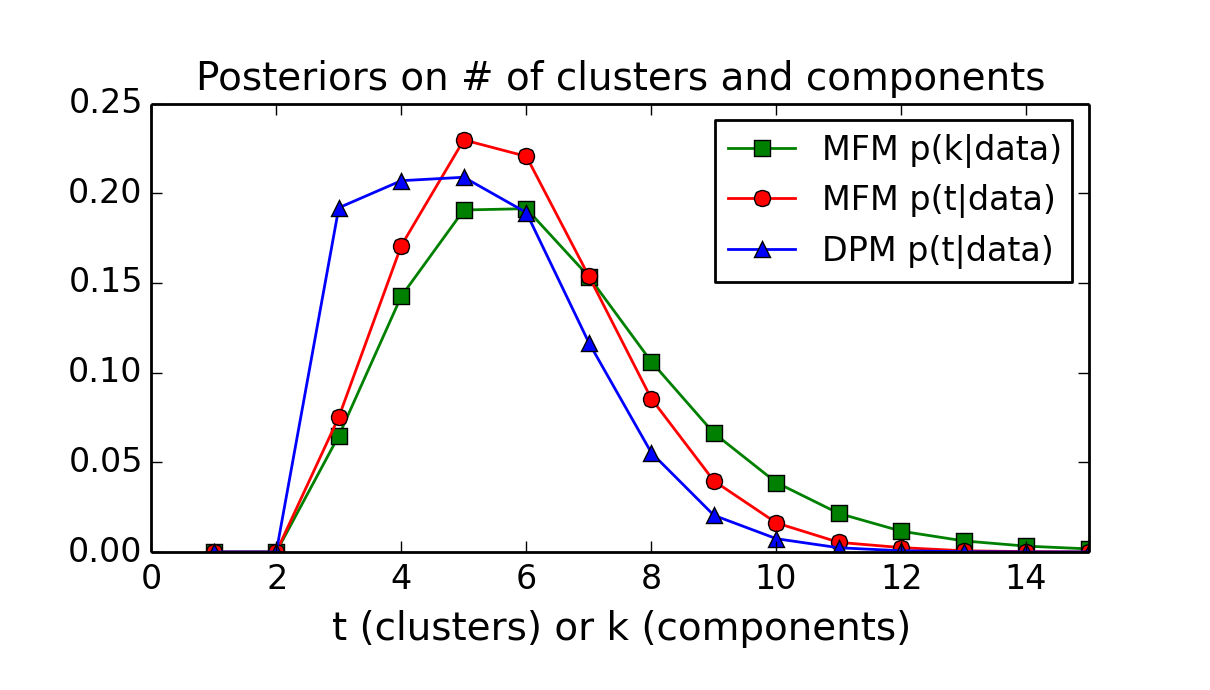}
\caption{Results on the galaxy dataset. Left: Histogram of the data (green bars), rug plot of the data (black ticks), and estimated densities using the MFM (red solid line) and DPM (blue dashed line). Right: MFM and DPM posteriors on the number of clusters ($t$), along with the MFM posterior on the number of components ($k$).}
\label{figure:galaxy}
\end{figure}

Table \ref{table:galaxy} compares our estimate of the MFM posterior on the number of components $k$ with the results of \citet{Richardson_1997}. Again, the results are very close, as expected.

\begin{table}[!ht]
\centering
\caption{Estimate of the MFM posterior on $k$ for the galaxy dataset.}%
\label{table:galaxy}%
\begin{tabular}{|c|c|c|c|c|c|c|c|}%
\hline
$k$ & 1 & 2 & 3 & 4 & 5 & 6 & 7 \\
\hline
Here & 0.000 & 0.000 & 0.065 & 0.143 & 0.191 & 0.191 & 0.153 \\
R\&G & 0.000 & 0.000 & 0.061 & 0.128 & 0.182 & 0.199 & 0.160 \\
\hline
\end{tabular}\\[1em]
\begin{tabular}{|c|c|c|c|c|c|c|c|}%
\hline
8 & 9 & 10 & 11 & 12 & 13 & 14 & 15 \\
\hline
0.106 & 0.066 & 0.039 & 0.021 & 0.012 & 0.006 & 0.003 & 0.002 \\
0.109 & 0.071 & 0.040 & 0.023 & 0.013 & 0.006 & 0.003 & 0.002 \\
\hline
\end{tabular}
\end{table}

\subsection{Discriminating cancer types using gene expression data}
\label{section:gene-expression}

In cancer research, gene expression profiling---that is, measuring the degree to which each gene is expressed by a given tissue sample under given conditions---enables the identification of distinct subtypes of cancer, leading to greater understanding of the mechanisms underlying cancers as well as potentially providing patient-specific diagnostic tools.  In gene expression datasets, there are typically a small number of very high-dimensional data points, each consisting of the gene expression levels in a given tissue sample under given conditions.

One approach to analyzing gene expression data is to use Gaussian mixture models to identify clusters which may represent distinct cancer subtypes \citep{Yeung_2001,mclachlan2002mixture,Medvedovic_2002,Medvedovic_2004,deSouto_2008,Rasmussen_2009,McNicholas_2010}. In fact, in a comparative study of seven clustering methods on 35 cancer gene expression datasets with known ground truth, \citet{deSouto_2008} found that finite mixtures of Gaussians provided the best results---when the number of components $k$ was set to the true value.  However, in practice, choosing an appropriate value of $k$ can be difficult.  Using the methods developed in this paper, the MFM provides a principled approach to inferring the clusters even when $k$ is unknown, as well as doing inference for $k$, provided that the components are well-modeled by Gaussians. (However, see Section \ref{section:discussion} for some potential pitfalls.)

%First, we consider a particular dataset in detail, and then we report results on a collection of benchmark datasets.
The purpose of this example is to demonstrate that our approach can work well even in very high-dimensional settings, and may provide a useful tool for this application. It should be emphasized that we are not cancer scientists, so the results reported here should not be interpreted as scientifically relevant, but simply as a proof-of-concept.

\subsubsection*{Data}

We apply the MFM to gene expression data collected by \citet{Armstrong_2001} in a study of leukemia subtypes. \citet{Armstrong_2001} measured gene expression levels in samples from 72 patients who were known to have one of two leukemia types, acute lymphoblastic leukemia (ALL) or acute myelogenous leukemia (AML), and they found that a previously undistinguished subtype of ALL, which they termed mixed-lineage leukemia (MLL), could be distinguished from conventional ALL and AML based on the gene expression profiles.

We use the preprocessed data provided by \citet{deSouto_2008}, which they filtered to include only genes with expression levels differing by at least 3-fold in at least 30 samples, relative to their mean expression level across all samples.  The resulting dataset consists of 72 samples and 1081 genes per sample, i.e., $n = 72$ and $d = 1081$. Following standard practice, we take the base-2 logarithm of the data before analysis, and normalize each dimension to have zero mean and unit variance.

\subsubsection*{Model}

For simplicity, we use multivariate Gaussian component densities with diagonal covariance matrices, i.e., the dimensions are independent univariate Gaussians, and we place independent conjugate priors on each dimension.  Thus, for each component, for $i = 1,\dotsc,d$, dimension $i$ is $\N(\mu_i,\lambda_i^{-1})$,
with $\lambda_i\sim\Ga(a,b)$ and $\mu_i|\lambda_i\sim\N(0,(c\lambda_i)^{-1})$.
We choose $a=1$, $b=1$, and $c=1$. (Recall that the data is zero mean, unit variance in each dimension.)
For the MFM, $K\sim\mathrm{Geometric}(0.1)$ and $\gamma = 1$, and for the DPM, $\alpha\sim\mathrm{Exponential}(1)$.
These are all simply default settings and have not been tailored to the problem; a careful scientific investigation would involve thorough prior elicitation, sensitivity analysis, and model checking.

\subsubsection*{Inference}

Given the partition $\C$ of the data into clusters, the parameters can be integrated out analytically since the prior is conjugate. Thus, for both the MFM and DPM, we use the split-merge sampler of \citet{Jain_2004} for conjugate priors, coupled with Algorithm 3 of \citet{Neal_2000}. Following \citet{Jain_2004}, we use the (5,1,1) scheme: 5 intermediate scans to reach the split launch state, 1 split-merge move per iteration, and 1 incremental Gibbs scan per iteration.

Due to the high-dimensionality of the parameters, this has far better mixing time than sampling the parameters, as is done in reversible jump MCMC.

The sampler was run for 1,000 burn-in iterations, and 19,000 sample iterations.  This appears to be many more iterations than required for burn-in and mixing in this particular example---in fact, only 5 to 10 iterations are required to separate the clusters, and the results are indistinguishable when using only 10 burn-in and 190 sample iterations.
The full state of the chain was recorded every 20 iterations.
Each iteration took approximately $1.3\times 10^{-3}\,n$ seconds, with $n = 72$.

\subsubsection*{Results}

The posteriors on the number of clusters $t$ are concentrated at $3$ (see Figure \ref{figure:gene}), in agreement with the division into ALL, MLL, and AML determined by \citet{Armstrong_2001}. The MFM posterior on $k$ is shifted slightly to the right because there are a small number of observations; this accounts for uncertainty regarding the possibility of additional components that were not observed in the given data.

\begin{figure}
\centering
\includegraphics[trim=0.8cm 0.5cm 1cm 0.2cm, clip=true, width=0.49\textwidth]{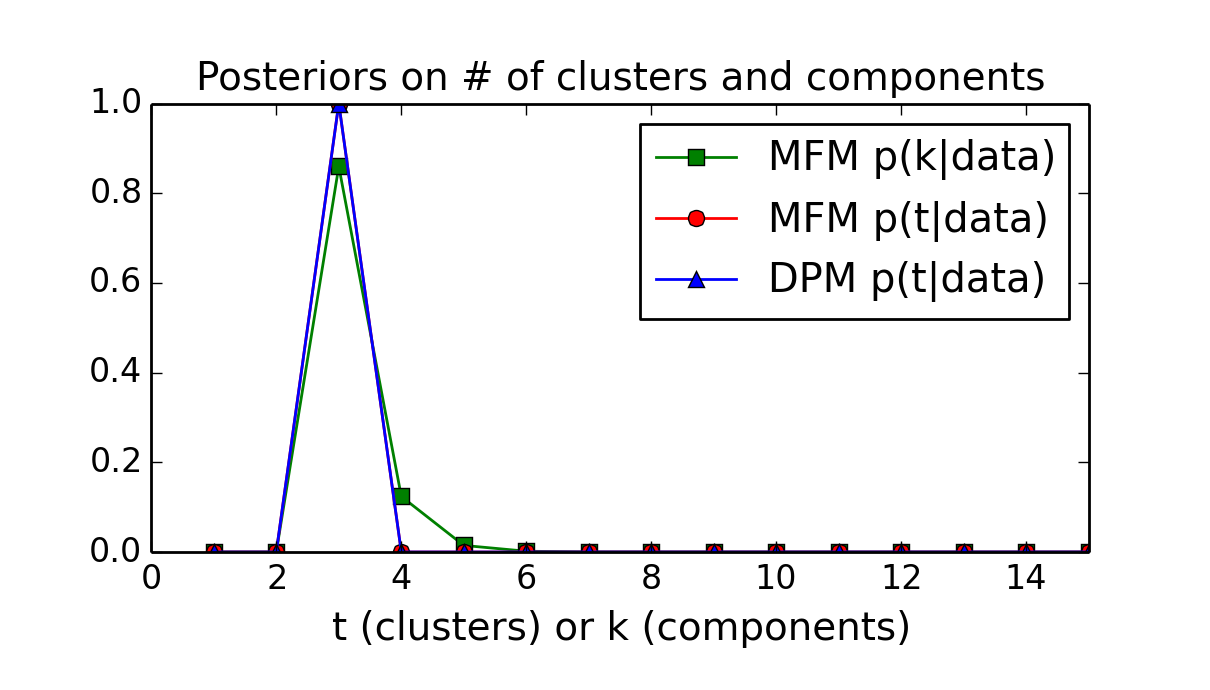}
\includegraphics[trim=7cm 3cm 5cm 0cm, clip=true, width=0.49\textwidth]{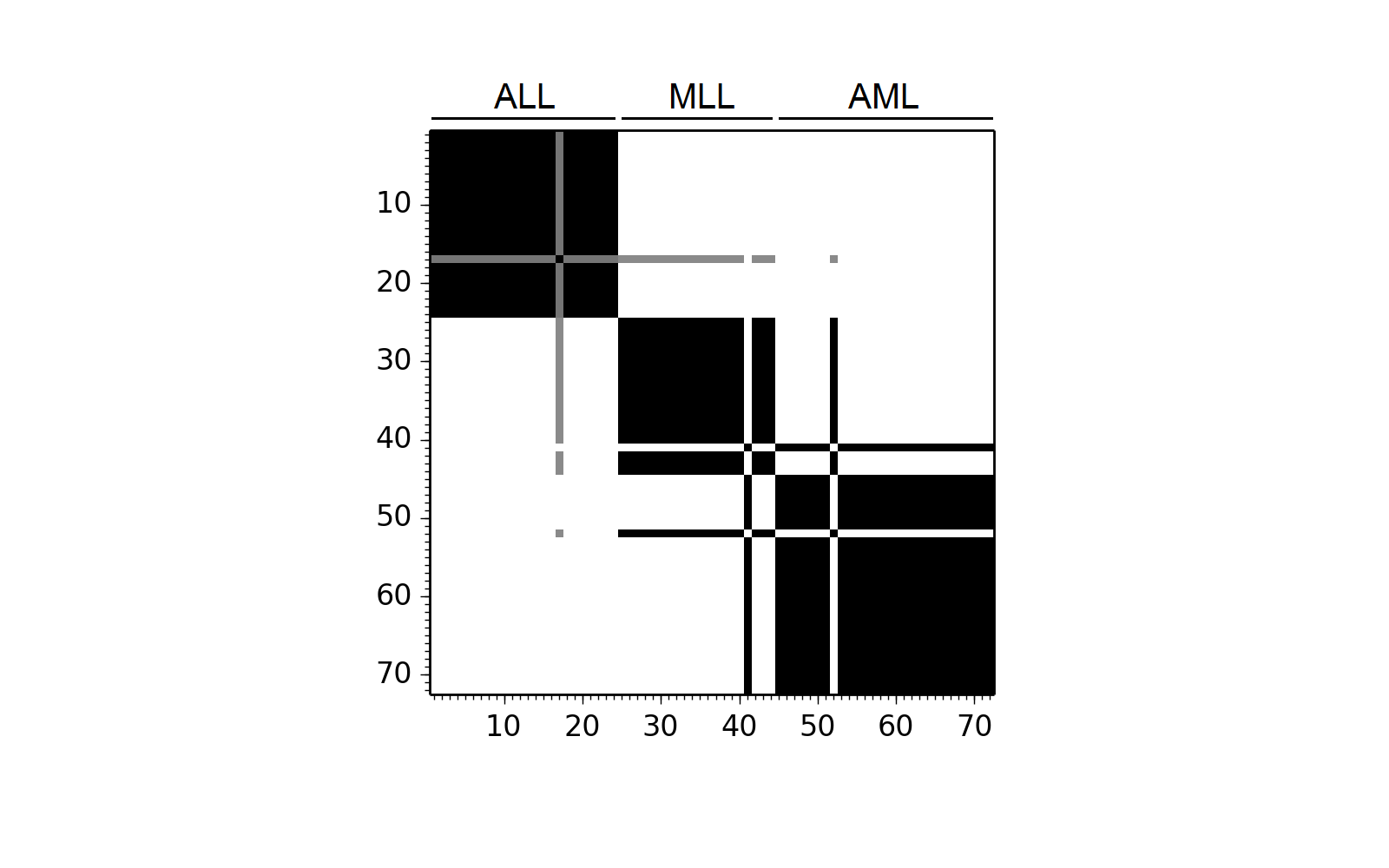}
\caption{Results on the leukemia gene expression dataset. Left: Posteriors on the number of clusters and components. Right: MFM pairwise probability matrix (the DPM matrix is the same).  See the text for discussion.}
\label{figure:gene}
\end{figure}

Figure \ref{figure:gene} also shows the MFM pairwise probability matrix, that is, the matrix in which entry $(i,j)$ is the posterior probability that data points $i$ and $j$ belong to the same cluster; in the figure, white is probability 0, black is probability 1. (The DPM matrix, not shown, is indistinguishable from the MFM matrix.)
The rows and columns of the matrix are ordered according to ground truth, such that 1--24 are ALL, 25--44 are MLL, and 45--72 are AML.
The model has clearly separated the subjects into these three groups, with a small number of exceptions: subject 41 is clustered with the AML subjects instead of MLL, subject 52 with the MLL subjects instead of AML, and subject 17 is about 50\% ALL and 50\% MLL.

\section{Discussion}
\label{section:discussion}

Due to the fact that inference for the number of components is a topic of high interest in many research communities, it seems prudent to make some cautionary remarks in this regard.
Many approaches have been proposed for estimating the number of components \citep{Henna_1985,Keribin_2000,Leroux_1992,Ishwaran_2001,James_2001,Henna_2005,Woo_2006,Woo_2007}. In theory, the MFM model provides a Bayesian approach to consistently estimating the number of components, making it a potentially attractive method of assessing the heterogeneity of the data.  However, there are several possible pitfalls to consider, some of which are more obvious than others. 
%(See \cite{Aitkin_2001} for a detailed discussion in the context of the galaxy dataset.) 

An obvious potential issue is that in many applications, the clusters which one wishes to distinguish are purely notional (for example, perhaps, clusters of images or documents), and a mixture model is used for practical purposes, rather than because the data is actually thought to arise from a mixture. Clearly, in such cases, inference for the ``true'' number of components is meaningless. On the other hand, in some applications, the data definitely comes from a mixture (for example, extracellular recordings of multiple neurons)---so there is in reality a true number of components---however, usually the form of the mixture components is far from clear. 

More subtle issues are that the posteriors on $k$ and $t$ can be
\begin{enumerate}
\item strongly affected by the base measure $H$, and
\item sensitive to misspecification of the family of component distributions $\{f_\theta\}$.
\end{enumerate}
Issue (1) can be seen, for instance, in the case of normal mixtures: it might seem desirable to choose the prior on the component means to have large variance in order to be less informative, however, this causes the posteriors on $k$ and $t$ to favor smaller values \citep{Richardson_1997,Stephens_2000,Jasra_2005}. The basic mechanism at play here is the same as in the Bartlett--Lindley paradox, and shows up in many Bayesian model selection problems.
% future: Steve:  "Here, a non-informative prior generates an improper posterior!"
With some care, this issue can be dealt with by varying the base measure $H$ and observing the effect on the posterior---that is, by performing a sensitivity analysis---for instance, see \cite{Richardson_1997}.

Issue (2) is more serious---in practice, we typically cannot expect our choice of $\{f_\theta:\theta\in\Theta\}$ to contain the true component densities (assuming the data is even from a mixture). When the model is misspecified in this way, the posteriors of $k$ and $t$ can be severely affected and depend strongly on $n$. For instance, if the model uses mixtures of Gaussians, and the true data distribution is not a finite mixture of Gaussians, then these posteriors can be expected to diverge to infinity as $n$ increases. Consequently, the effects of misspecification need to be carefully considered if these posteriors are to be used as measures of heterogeneity.  Steps toward addressing the issue of robustness have been taken by \cite{Woo_2006,Woo_2007} and \cite{rodriguez2014univariate}, however, this is an important problem demanding further study.

Despite these issues, sample clusterings and estimates of the number of components or clusters can provide a useful tool for exploring complex datasets, particularly in the case of high-dimensional data that cannot easily be visualized. It should always be borne in mind, though, that the results can be interpreted as being correct only to the extent that the model assumptions are correct.

\section*{Acknowledgments}

We are very grateful to Steve MacEachern for many helpful suggestions.
This work was supported in part by the National Science Foundation (NSF) grants DMS-1007593 and DMS-1309004, by the National Institute of Mental Health (NIMH) grant R01MH102840, and by the Defense Advanced Research Projects Agency (DARPA) contract FA8650-11-1-715.

\appendix

\section{Formulas for some posterior quantities}
\label{section:formulas}

Below are some details regarding computation of the posterior on $k$ and of the density estimates.

\subsubsection*{Posterior on the number of components $k$}

The posterior on $t=|\C|$ is easily estimated from posterior samples of $\C$. 
To compute the MFM posterior on $k$, note that
$$ p(k|x_{1:n}) =\sum_{t = 1}^\infty p(k|t,x_{1:n}) p(t|x_{1:n}) =\sum_{t = 1}^n p(k|t)p(t|x_{1:n}),$$
by Equation \ref{equation:XKT} and the fact that $t$ cannot exceed $n$. Using this and the formula for $p(k|t)$ given by Equation \ref{equation:pkt}, it is simple to transform our estimate of the posterior on $t$ into an estimate of the posterior on $k$. 
For the DPM, the posterior on the number of components $k$ is always trivially a point mass at infinity.

\subsubsection*{Density estimates}

Using the restaurant process (Theorem \ref{theorem:restaurant}), it is straightforward to show that if $\C$ is a partition of $[n]$ and $\phi =(\phi_c: c\in\C)$ then
% future: following Matt's comment: should we include more hints for this? or prove it in the "proofs and details" section?
\begin{align}\label{equation:posterior-predictive-C-phi}
p(x_{n +1}\mid\C,\phi,x_{1:n}) \propto 
\frac{V_{n+1}(t+1)}{V_{n +1}(t)}\gamma\, m(x_{n +1})+\sum_{c\in\C}(|c|+\gamma) f_{\phi_c}(x_{n +1})
\end{align}
where $t =|\C|$, and, using the recursion for $V_n(t)$ (Equation \ref{equation:recursion}), this is normalized when multiplied by $V_{n+1}(t)/V_n(t)$. Further,
\begin{align}\label{equation:posterior-predictive-C}
p(x_{n +1}\mid\C,x_{1:n}) \propto 
\frac{V_{n+1}(t+1)}{V_{n +1}(t)}\gamma\, m(x_{n +1})+\sum_{c\in\C}(|c|+\gamma) \frac{m(x_{c\,\cup \{n+1\}})}{m(x_c)},
\end{align}
with the same normalization constant. Therefore, when the single-cluster marginals $m(x_c)$ can be easily computed, Equation \ref{equation:posterior-predictive-C} can be used to estimate the posterior predictive density $p(x_{n +1}|x_{1:n})$ based on samples from $\C\mid x_{1:n}$. When $m(x_c)$ cannot be easily computed, Equation \ref{equation:posterior-predictive-C-phi} can be used to estimate $p(x_{n +1}|x_{1:n})$ based on samples from $\C,\phi\mid x_{1:n}$, along with samples $\theta_1,\dotsc,\theta_N\iid H$ to approximate $m(x_{n+1})\approx \frac{1}{N}\sum_{i = 1}^N f_{\theta_i}(x_{n +1})$.
(Thanks to Steve MacEachern for pointing out how to handle $m(x_{n+1})$ here.)

The posterior predictive density is, perhaps, the most natural estimate of the density.
However, following \cite{Green_2001}, a simpler way to obtain a natural estimate is by assuming that element $n+1$ is added to an existing cluster; this will be very similar to the posterior predictive density when $n$ is sufficiently large. To this end, we define 
$p_*(x_{n +1}\mid\C,\phi,x_{1:n}) = p(x_{n +1}\mid\C,\phi,x_{1:n},|\C_{n +1}|=|\C|)$, where $\C_{n+1}$ is the partition of $[n+1]$, and observe that
$$ p_*(x_{n +1}\mid\C,\phi,x_{1:n}) = \sum_{c\in\C}\frac{|c|+\gamma}{n +\gamma t}\, f_{\phi_c}(x_{n +1}) $$
where $t =|\C|$ \citep{Green_2001}. Using this, we can estimate the density by
\begin{align}\label{equation:density-estimate}
%p_*(x_{n +1}\mid x_{1:n})\approx 
\frac{1}{N}\sum_{i = 1}^N p_*(x_{n +1}\mid\C^{(i)},\phi^{(i)},x_{1:n}),
\end{align}
where $(\C^{(1)},\phi^{(1)}),\dotsc,(\C^{(N)},\phi^{(N)})$ are samples from $\C,\phi\mid x_{1:n}$.
The corresponding expressions for the DPM are all very similar, using its restaurant process instead.
The density estimates shown in this paper are obtained using this approach. 

These formulas are conditional on additional parameters such as $\gamma$ for the MFM, and $\alpha$ for the DPM. If priors are placed on such parameters and they are sampled along with $\C$ and $\phi$ given $x_{1:n}$, then the posterior predictive density can be estimated using the same formulas as above, but also using the posterior samples of these additional parameters.

\section{Proofs}
\label{section:proofs}

\begin{proof}[Proof of Theorem \ref{theorem:EPPF}]
Letting $E_i =\{j: z_j = i\}$, and writing $\C(z)$ for the partition induced by $z=(z_1,\dotsc,z_n)$, by Dirichlet-multinomial conjugacy we have
\begin{align*}
p(z|k) &=\int p(z|\pi) p(\pi|k) d\pi 
% &= \int\Big[\prod_{j=1}^n \pi_{z_j}\Big]\frac{\Gamma(k\gamma)}{\Gamma(\gamma)^k}\prod_{i=1}^k \pi_i^{\gamma-1}d\pi\\
 = \frac{\Gamma(k\gamma)}{\Gamma(\gamma)^k}\frac{\prod_{i = 1}^k \Gamma(|E_i| +\gamma)}{\Gamma(n + k\gamma)}
=\frac{1}{(k\gamma)^{(n)}}\prod_{c\in\C(z)} \gamma^{(|c|)},
\end{align*}
for $z\in[k]^n$, provided that $p_K(k)>0$.
It follows that for any partition $\C$ of $[n]$,
\begin{align}
p(\C|k) &=\sum_{z\in[k]^n\,:\,\C(z) =\C} p(z|k) \notag\\
&=\#\Big\{z\in[k]^n:\C(z) =\C\Big\}\,\frac{1}{(\gamma k)^{(n)}}\prod_{c\in\C}\gamma^{(|c|)}\notag\\
&=\frac{k_{(t)}}{(\gamma k)^{(n)}}\prod_{c\in\C}\gamma^{(|c|)},\label{equation:pCk}
\end{align}
where $t =|\C|$, since $\#\big\{z\in[k]^n:\C(z) =\C\big\}={k\choose t} t! = k_{(t)}$.
Finally,
\begin{align*}
p(\C)&=\sum_{k = 1}^\infty p(\C|k) p_K(k)
=\Big(\prod_{c\in\C}\gamma^{(|c|)}\Big)\sum_{k = 1}^\infty\frac{k_{(t)}}{(\gamma k)^{(n)}}\,p_K(k) 
= V_n(t)\prod_{c\in\C}\gamma^{(|c|)},
\end{align*}
with $V_n(t)$ as in Equation \ref{equation:v}.
\end{proof}

\begin{proof}[Proof of Equation \ref{equation:model-C}]
Theorem \ref{theorem:EPPF} shows that the distribution of $\C$ is as shown.
Next, note that instead of sampling only $\theta_1,\dotsc,\theta_k\iid H$ given $K = k$, we could simply sample $\theta_1,\theta_2,\dotsc\iid H$ independently of $K$, and the distribution of $X_{1:n}$ would be the same.
Now, $Z_{1:n}$ determines which subset of the i.i.d.\ variables $\theta_1,\theta_2,\dotsc$ will actually be used, and the indices of this subset are independent of $\theta_1,\theta_2,\dotsc$; hence, denoting these random indices $I_1<\cdots<I_T$, we have that $\theta_{I_1},\dotsc,\theta_{I_T}|Z_{1:n}$ are i.i.d.\ from $H$. 
For  $c\in\C$, let $\phi_c =\theta_{I_i}$ where $i$ is such that $c =\{j: z_j = I_i\}$. This completes the proof.
\end{proof}

\begin{proof}[Proof of the properties in Section \ref{section:basic}]
\label{section:basic-derivation}

Abbreviate $x = x_{1:n}$, $z = z_{1:n}$, and $\theta =\theta_{1:k}$, and assume $p(z,k)>0$.
Letting $E_i =\{j: z_j = i\}$, we have $p(x|\theta,z,k) = \prod_{i = 1}^k\prod_{j\in E_i} f_{\theta_i}(x_j)$
and
\begin{align*}
p(x|z,k) &= \int_{\Theta^k} p(x|\theta,z,k) p(d\theta|k)
= \prod_{i = 1}^k \int_\Theta \Big[\prod_{j\in E_i} f_{\theta_i}(x_j)\Big] H(d\theta_i)\\
&= \prod_{i = 1}^k m(x_{E_i})=\prod_{c\in\C(z)}m(x_c).
\end{align*}
Since this last expression depends only on $z,k$ through $\C =\C(z)$, we have
$p(x|\C) = \prod_{c\in\C} m(x_c)$,
% $$\prod_{c\in\C} m(x_c) = p(x|z,k) = p(x|z) = p(x|\C), $$
establishing Equation \ref{equation:marginal-product}.
Next, recall that
$p(\C|k) =\frac{k_{(t)}}{(\gamma k)^{(n)}}\prod_{c\in\C}\gamma^{(|c|)}$ (where $t =|\C|$)
from Equation \ref{equation:pCk}, and thus
$$ p(t|k)=\sum_{\C:|\C|=t} p(\C|k) =\frac{k_{(t)}}{(\gamma k)^{(n)}}\sum_{\C:|\C|= t}\prod_{c\in\C}\gamma^{(|c|)},$$
(where the sum is over partitions $\C$ of $[n]$ such that $|\C|= t$) establishing Equation \ref{equation:ptk}.
Equation \ref{equation:pkt} follows, since
$$ p(k|t)\propto p(t|k) p(k) \propto \frac{k_{(t)}}{(\gamma k)^{(n)}}\,p_K(k), $$
(provided $p(t)>0$) and the normalizing constant is precisely $V_n(t)$.
To see that $\C\perp K\mid T$ (Equation \ref{equation:CKT}), note that if $t =|\C|$ then
$$ p(\C|t,k) =\frac{p(\C,t|k)}{p(t|k)} =\frac{p(\C|k)}{p(t|k)},$$
(provided $p(t,k)>0$) and due to the form of $p(\C|k)$ and $p(t|k)$ just above, this quantity does not depend on $k$; hence, $p(\C|t,k) = p(\C|t)$. To see that $X \perp K\mid T$ (Equation \ref{equation:XKT}), note that $X\perp K\mid\C$; using this in addition to $\C\perp K\mid T$, we have
$$ p(x|t,k)  =\sum_{\C:|\C|= t} p(x|\C,t,k)p(\C|t,k) = \sum_{\C:|\C|= t} p(x|\C,t) p(\C|t) = p(x|t). $$
\end{proof}

\begin{proof}[Proof of Theorem \ref{theorem:restaurant}]
Let $\C_\infty$ be the random partition of $\Z_{>0}$ as in Section \ref{section:self-consistent}, and for $n\in\{1,2,\dotsc\}$, let $\C_n$ be the partition of $[n]$ induced by $\C_\infty$. Then
$$ p(\C_n|\C_{n-1},\dotsc,\C_1) = p(\C_n|\C_{n-1}) \propto q_n(\C_n)\,I(\C_n\setminus n = \C_{n-1}), $$
where $\C\setminus n$ denotes $\C$ with element $n$ removed, and $I(\cdot)$ is the indicator function ($I(E) = 1$ if $E$ is true, and $I(E) = 0$ otherwise).
Recalling that $q_n(\C_n) = V_n(|\C_n|)\prod_{c\in\C_n}\gamma^{(|c|)}$ (Equation \ref{equation:EPPF}), we have, letting $t =|\C_{n-1}|$,
$$ p(\C_n|\C_{n-1}) \propto \branch{V_n(t+1)\gamma}{\mbox{$n$ is a singleton in $\C_n$, i.e., $\{n\}\in\C_n$}}
                                   {V_n(t)(\gamma +|c|)}{c\in\C_{n-1}\mbox{ and } c\cup\{n\}\in\C_n,}$$
for $\C_n$ such that $\C_n\setminus n =\C_{n-1}$ (and $p(\C_n|\C_{n-1})=0$ otherwise).  
With probability $1$, $q_{n - 1}(\C_{n-1})>0$, thus $V_{n-1}(t)>0$ and hence also $V_n(t)>0$, so we can divide through by $V_n(t)$ to get the result.
\end{proof}

\begin{proof}[Proof of Theorem \ref{theorem:species-predictive}]
Let $G\sim\M(p_K,\gamma,H)$ and let $\beta_1,\dotsc,\beta_n\iid G$, given $G$. Then the joint distribution of $(\beta_1,\dotsc,\beta_n)$ (with $G$ marginalized out) is the same as $(\theta_{Z_1},\dotsc,\theta_{Z_n})$ in the original model (Equation \ref{equation:model}). Let $\C_n$ denote the partition induced by $Z_1,\dotsc,Z_n$ as usual, and for $c\in\C_n$, define $\phi_c=\theta_I$ where $I$ is such that $c=\{j : Z_j = I\}$; then, as in the proof of Equation \ref{equation:model-C}, $(\phi_c: c\in\C_n)$ are i.i.d.\ from $H$, given $\C_n$.

Therefore, we have the following equivalent construction for $(\beta_1,\dotsc,\beta_n)$:
\begin{align*}
& \C_n\sim q_n, \mbox{ with $q_n$ as in Section \ref{section:self-consistent}}\\
& \phi_c\iid H \mbox{ for $c\in\C_n$, given $\C_n$} \\
& \beta_j =\phi_c \mbox{ for $j\in c$, $c\in\C_n$, given $\C_n,\phi$.}
\end{align*}
Due to the self-consistency property of $q_1,q_2,\dotsc$ (Proposition \ref{proposition:self-consistent}), we can sample $\C_n,(\phi_c: c\in\C_n),\beta_{1:n}$ sequentially for $n = 1,2,\dotsc$ by sampling from the restaurant process for $\C_n|\C_{n-1}$, sampling $\phi_{\{n\}}$ from $H$ if $n$ is placed in a cluster by itself (or setting $\phi_{c\,\cup\{n\}}=\phi_c$ if $n$ is added to $c\in\C_{n-1}$), and setting $\beta_n$ accordingly. 

In particular, if the base measure $H$ is continuous, then the $\phi$'s are distinct with probability $1$, so conditioning on $\beta_{1:n-1}$ is the same as conditioning on $\C_{n-1},(\phi_c:c\in\C_{n-1}),\beta_{1:n-1}$, and hence we can sample $\beta_n|\beta_{1:n-1}$ in the same way as was just described. In view of the form of the restaurant process (Theorem \ref{theorem:restaurant}), the result follows.
\end{proof}

% \begin{proof}[Proof of Proposition \ref{proposition:stick}]
% \todo{If this is added back in, update to accomodate new definition of $\tilde K$ using $\geq$ instead.} _
% Let $\epsilon_1,\epsilon_2,\dotsc\iid\mathrm{Exponential}(\lambda)$, $\tilde K =\min\{j:\sum_{i = 1}^j\epsilon_i>1\}$, $\tilde\pi_i =\epsilon_i$ for $i = 1,\dotsc,\tilde K - 1$ and $\tilde\pi_{\tilde K} = 1-\sum_{i = 1}^{\tilde K-1}\tilde\pi_i$. 

% Let $S_j =\sum_{i = 1}^j \epsilon_i$ and define $N(\tau) =\#\{j: S_j\leq \tau\}$. Then $N(\tau)$ is a Poisson process with rate $\lambda$ (e.g., \citealp{Durrett_1996}). Hence, $N(\tau)\sim\mathrm{Poisson}(\lambda\tau)$, and in particular, since $\tilde K = N(1) +1$ a.s., we have $\tilde K - 1\sim\mathrm{Poisson}(\lambda)$.

% It is a standard result that $(S_1,\dotsc,S_m)|N(\tau) = m$ has the same joint distribution as the order statistics $(U_{(1)},\dotsc,U_{(m)})$ of i.i.d.\ uniform random variables $U_1,\dotsc,U_m\sim\mathrm{Uniform}[0,\tau]$. Considering the case of $\tau = 1$, the conditional density of $(S_1,\dotsc,S_m)|N(1) = m$ is therefore $m!\, I(0<s_1<\cdots<s_m<1)$, and since the change of variables from $(S_1,\dotsc,S_m)$ to $(\epsilon_1,\dotsc,\epsilon_m)$ has Jacobian equal to $1$, it follows that $(\epsilon_1,\dotsc,\epsilon_m)|N(1) = m$ has density $m!\, I(\sum_{i = 1}^m \epsilon<1,\, \epsilon_i>0 \,\forall i)$. Therefore, taking $k = m +1$, we see that $\tilde\pi|\tilde K= k$ has the uniform $k$-dimensional Dirichlet distribution.
% \end{proof}

We use the following elementary result in the proof of Theorem \ref{theorem:v-asymptotics}; it is a special case of the dominated convergence theorem.

\begin{proposition}\label{proposition:sums}
For $j = 1,2,\dotsc$, let $a_{1j}\geq a_{2j}\geq\cdots\geq 0$ such that $a_{ij}\to 0$ as $i\to\infty$. If $\sum_{j = 1}^\infty a_{1j}<\infty$ then $\sum_{j = 1}^\infty a_{ij}\to 0$ as $i\to\infty$.
\end{proposition}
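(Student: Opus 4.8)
The plan is to prove this by the standard tail-splitting ($\varepsilon/2$) argument that underlies the dominated convergence theorem in this discrete setting; no measure theory is needed, since the ``measure'' here is just counting measure on $\{1,2,\dotsc\}$ and the dominating function is the summable sequence $(a_{1j})_j$.

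First I would fix $\varepsilon>0$ and use the hypothesis $\sum_{j=1}^\infty a_{1j}<\infty$ to choose $N$ with $\sum_{j>N} a_{1j}<\varepsilon/2$. The assumption $a_{1j}\geq a_{2j}\geq\cdots\geq 0$ gives $0\leq a_{ij}\leq a_{1j}$ for every $i$ and $j$, so the tail bound $\sum_{j>N} a_{ij}\leq\sum_{j>N} a_{1j}<\varepsilon/2$ holds uniformly in $i$. This is the only place the ``decreasing in $i$'' hypothesis is used. Next I would handle the head: for each of the finitely many indices $j\in\{1,\dotsc,N\}$ we have $a_{ij}\to 0$ as $i\to\infty$, so I can pick $M$ with $a_{ij}<\varepsilon/(2N)$ for all $i\geq M$ and all $j\leq N$, giving $\sum_{j=1}^N a_{ij}<\varepsilon/2$. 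Adding the head and tail bounds yields $\sum_{j=1}^\infty a_{ij}<\varepsilon$ for all $i\geq M$, which is the claim.

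There is essentially no obstacle here; the only subtlety is that the tail estimate must be made uniform in $i$ (via the domination by $(a_{1j})_j$) \emph{before} one takes $i\to\infty$ in the finite head. Interchanging the limit in $i$ with the infinite sum directly, without this uniform control, is precisely the trap the argument is designed to avoid, and it is also what makes the statement a genuine (if elementary) special case of dominated convergence rather than a triviality.
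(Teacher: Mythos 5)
Your proof is correct, and it takes essentially the same route as the paper: the paper simply remarks that the proposition is a special case of the dominated convergence theorem (with counting measure and dominating sequence $(a_{1j})_j$), and your $\varepsilon/2$ tail-splitting argument is exactly the standard proof of that theorem unfolded in this discrete setting, with the monotonicity hypothesis used only to supply the domination $a_{ij}\leq a_{1j}$. Nothing is missing.
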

% \begin{proof}
% Letting $\epsilon_R =\sum_{j>R} a_{1j}$, we have $\epsilon_R\to 0$ as $R\to\infty$. For any $R>0$,
% $\limsup_i \sum_{j = 1}^\infty a_{ij} \leq \limsup_i \sum_{j = 1}^R a_{ij} + \epsilon_R = \epsilon_R$.
% Taking $R\to\infty$ gives the result.
% \end{proof}

\begin{proof}[Proof of Theorem \ref{theorem:v-asymptotics}]
For any $x>0$, writing $x^{(n)}/n! =\Gamma(x + n)/(n!\,\Gamma(x))$ and using Stirling's approximation, we have
$$\frac{x^{(n)}}{n!}\sim\frac{n^{x - 1}}{\Gamma(x)}$$
as $n\to\infty$.
Therefore, the $k = t$ term of $V_n(t)$ is
$$\frac{t_{(t)}}{(\gamma t)^{(n)}}\,p_K(t) \sim \frac{t!}{n!}\,\frac{\Gamma(\gamma t)}{n^{\gamma t-1}}\,p_K(t). $$
The first $t - 1$ terms of $V_n(t)$ are $0$, so to prove the result, we need to show that the rest of the series, divided by the $k = t$ term, goes to $0$. (Recall that we have assumed $p_K(t)>0$.) To this end, let
$$ b_{nk} = (\gamma t)^{(n)}\frac{k_{(t)}}{(\gamma k)^{(n)}}\,p_K(k). $$
We must show that $\sum_{k = t +1}^\infty b_{nk}\to 0$ as $n\to\infty$. We apply Proposition \ref{proposition:sums} with $a_{ij} = b_{t + i,t + j}$. For any $k>t$, $b_{1k}\geq b_{2k}\geq\cdots\geq 0$. Further, for any $k>t$,
$$\frac{(\gamma t)^{(n)}}{(\gamma k)^{(n)}}\sim\frac{n^{\gamma t - 1}}{\Gamma(\gamma t)}\,\frac{\Gamma(\gamma  k)}{n^{\gamma k - 1}}\longrightarrow 0 $$
as $n\to\infty$, hence, $b_{nk}\to 0$ as $n\to\infty$ (for any $k>t$). Finally, observe that $\sum_{k = t +1}^\infty b_{nk}\leq(\gamma t)^{(n)} V_n(t)<\infty$ for any $n\geq t$. Therefore, by Proposition \ref{proposition:sums}, $\sum_{k = t +1}^\infty b_{nk}\to 0$ as $n\to\infty$. This proves the result.
\end{proof}

\begin{proof}[Proof of Theorem \ref{theorem:clusters-components-relationship}]
For any $t\in\{1,\dotsc,k\}$,
\begin{align}\label{equation:pkt-limit}
p_n(K = t\mid T = t)= \frac{1}{V_n(t)}\frac{t_{(t)}}{(\gamma t)^{(n)}}\,p_K(t)\longrightarrow 1
\end{align}
as $n\to \infty$ (where $p_n$ denotes the MFM distribution with $n$ samples), by Equation \ref{equation:pkt} and Theorem \ref{theorem:v-asymptotics}. For any $n\geq k$,
\begin{align*}%\label{equation:clusters-proof}
p(K = k\mid x_{1:n}) =\sum_{t = 1}^k p(K = k\mid T = t,x_{1:n})\,p(T = t \mid x_{1:n}),
\end{align*}
and note that by Equations \ref{equation:XKT} and \ref{equation:pkt-limit}, $p(K = k\mid T = t,x_{1:n}) = p_n(K = k\mid T = t)\longrightarrow I(k = t)$ for $t\leq k$. The result follows.
\end{proof}

% \section{Julia code for computing $V_n(t)$}
% \label{section:code}
% \input{code}

\bibliographystyle{abbrvnatcap}
\bibliography{paper}

\begin{thebibliography}{97}
\providecommand{\natexlab}[1]{#1}
\providecommand{\url}[1]{\texttt{#1}}
\expandafter\ifx\csname urlstyle\endcsname\relax
  \providecommand{\doi}[1]{doi: #1}\else
  \providecommand{\doi}{doi: \begingroup \urlstyle{rm}\Url}\fi

\bibitem[Aldous(1985)]{aldous1985exchangeability}
D.~J. Aldous.
\newblock \emph{Exchangeability and related topics}.
\newblock Springer, 1985.

\bibitem[Antoniak(1974)]{Antoniak_1974}
C.~E. Antoniak.
\newblock Mixtures of Dirichlet processes with applications to Bayesian
  nonparametric problems.
\newblock \emph{The Annals of Statistics}, 2\penalty0 (6):\penalty0 1152--1174,
  1974.

\bibitem[Armstrong et~al.(2001)Armstrong, Staunton, Silverman, Pieters, den
  Boer, Minden, Sallan, Lander, Golub, and Korsmeyer]{Armstrong_2001}
S.~A. Armstrong, J.~E. Staunton, L.~B. Silverman, R.~Pieters, M.~L. den Boer,
  M.~D. Minden, S.~E. Sallan, E.~S. Lander, T.~R. Golub, and S.~J. Korsmeyer.
\newblock MLL translocations specify a distinct gene expression profile that
  distinguishes a unique leukemia.
\newblock \emph{Nature Genetics}, 30\penalty0 (1):\penalty0 41--47, 2001.

\bibitem[Barry and Hartigan(1992)]{Barry_1992}
D.~Barry and J.~A. Hartigan.
\newblock Product partition models for change point problems.
\newblock \emph{The Annals of Statistics}, pages 260--279, 1992.

\bibitem[Blackwell and MacQueen(1973)]{Blackwell_1973}
D.~Blackwell and J.~B. MacQueen.
\newblock Ferguson distributions via P{\'o}lya urn schemes.
\newblock \emph{The Annals of Statistics}, pages 353--355, 1973.

\bibitem[Blei and Jordan(2006)]{Blei_2006}
D.~M. Blei and M.~I. Jordan.
\newblock Variational inference for Dirichlet process mixtures.
\newblock \emph{Bayesian Analysis}, 1\penalty0 (1):\penalty0 121--143, 2006.

\bibitem[Blei et~al.(2003)Blei, Ng, and Jordan]{Blei_2003b}
D.~M. Blei, A.~Y. Ng, and M.~I. Jordan.
\newblock Latent Dirichlet allocation.
\newblock \emph{Journal of Machine Learning Research}, 3:\penalty0 993--1022,
  2003.

\bibitem[Broderick et~al.(2012)Broderick, Jordan, and Pitman]{Broderick_2012}
T.~Broderick, M.~I. Jordan, and J.~Pitman.
\newblock Beta processes, stick-breaking and power laws.
\newblock \emph{Bayesian Analysis}, 7\penalty0 (2):\penalty0 439--476, 2012.

\bibitem[Bush and MacEachern(1996)]{Bush_1996}
C.~A. Bush and S.~N. MacEachern.
\newblock A semiparametric Bayesian model for randomised block designs.
\newblock \emph{Biometrika}, 83\penalty0 (2):\penalty0 275--285, 1996.

\bibitem[Cerquetti(2008)]{Cerquetti_2008}
A.~Cerquetti.
\newblock Generalized Chinese restaurant construction of exchangeable Gibbs
  partitions and related results.
\newblock \emph{arXiv:0805.3853}, 2008.

\bibitem[Cerquetti(2011)]{Cerquetti_2011}
A.~Cerquetti.
\newblock Conditional $\alpha$-diversity for exchangeable Gibbs partitions
  driven by the stable subordinator.
\newblock \emph{arXiv:1105.0892}, 2011.

\bibitem[Chung and Dunson(2009)]{Chung_2009}
Y.~Chung and D.~B. Dunson.
\newblock Nonparametric Bayes conditional distribution modeling with variable
  selection.
\newblock \emph{Journal of the American Statistical Association}, 104\penalty0
  (488), 2009.

\bibitem[Dahl(2003)]{Dahl_2003}
D.~B. Dahl.
\newblock An improved merge-split sampler for conjugate Dirichlet process
  mixture models.
\newblock \emph{Technical Report, Department of Statistics, University of
  Wisconsin -- Madison}, 2003.

\bibitem[Dahl(2005)]{Dahl_2005}
D.~B. Dahl.
\newblock Sequentially-allocated merge-split sampler for conjugate and
  nonconjugate Dirichlet process mixture models.
\newblock \emph{Journal of Computational and Graphical Statistics}, 11, 2005.

\bibitem[Dahl(2009)]{Dahl_2009}
D.~B. Dahl.
\newblock Modal clustering in a class of product partition models.
\newblock \emph{Bayesian Analysis}, 4\penalty0 (2):\penalty0 243--264, 2009.

\bibitem[de~Souto et~al.(2008)de~Souto, Costa, de~Araujo, Ludermir, and
  Schliep]{deSouto_2008}
M.~C. de~Souto, I.~G. Costa, D.~S. de~Araujo, T.~B. Ludermir, and A.~Schliep.
\newblock Clustering cancer gene expression data: a comparative study.
\newblock \emph{BMC Bioinformatics}, 9\penalty0 (1):\penalty0 497, 2008.

\bibitem[Dunson and Park(2008)]{Dunson_2008}
D.~B. Dunson and J.-H. Park.
\newblock Kernel stick-breaking processes.
\newblock \emph{Biometrika}, 95\penalty0 (2):\penalty0 307--323, 2008.

\bibitem[Durrett(1996)]{Durrett_1996}
R.~Durrett.
\newblock \emph{Probability: Theory and Examples}, volume~2.
\newblock Cambridge University Press, 1996.

\bibitem[Escobar and West(1995)]{Escobar_1995}
M.~D. Escobar and M.~West.
\newblock Bayesian density estimation and inference using mixtures.
\newblock \emph{Journal of the American Statistical Association}, 90\penalty0
  (430):\penalty0 577--588, 1995.

\bibitem[Favaro et~al.(2012)Favaro, Lijoi, and Pruenster]{Favaro_2012}
S.~Favaro, A.~Lijoi, and I.~Pruenster.
\newblock On the stick-breaking representation of normalized inverse Gaussian
  priors.
\newblock \emph{Biometrika}, 99\penalty0 (3):\penalty0 663--674, 2012.

\bibitem[Ferguson(1973)]{Ferguson_1973}
T.~S. Ferguson.
\newblock A Bayesian analysis of some nonparametric problems.
\newblock \emph{The Annals of Statistics}, pages 209--230, 1973.

\bibitem[Ghosal and Van~der Vaart(2007)]{Ghosal_2007}
S.~Ghosal and A.~Van~der Vaart.
\newblock Posterior convergence rates of Dirichlet mixtures at smooth
  densities.
\newblock \emph{The Annals of Statistics}, 35\penalty0 (2):\penalty0 697--723,
  2007.

\bibitem[Gnedin(2010)]{Gnedin_2010}
A.~Gnedin.
\newblock A species sampling model with finitely many types.
\newblock \emph{Elect. Comm. Probab.}, 15:\penalty0 79--88, 2010.

\bibitem[Gnedin and Pitman(2006)]{Gnedin_2006}
A.~Gnedin and J.~Pitman.
\newblock Exchangeable Gibbs partitions and Stirling triangles.
\newblock \emph{Journal of Mathematical Sciences}, 138\penalty0 (3):\penalty0
  5674--5685, 2006.

\bibitem[Green(1995)]{green1995reversible}
P.~J. Green.
\newblock Reversible jump Markov chain Monte Carlo computation and Bayesian
  model determination.
\newblock \emph{Biometrika}, 82\penalty0 (4):\penalty0 711--732, 1995.

\bibitem[Green and Richardson(2001)]{Green_2001}
P.~J. Green and S.~Richardson.
\newblock Modeling heterogeneity with and without the Dirichlet process.
\newblock \emph{Scandinavian Journal of Statistics}, 28\penalty0 (2):\penalty0
  355--375, June 2001.

\bibitem[Griffin and Steel(2006)]{Griffin_2006}
J.~E. Griffin and M.~J. Steel.
\newblock Order-based dependent Dirichlet processes.
\newblock \emph{Journal of the American Statistical Association}, 101\penalty0
  (473):\penalty0 179--194, 2006.

\bibitem[Hansen and Pitman(2000)]{Hansen_2000}
B.~Hansen and J.~Pitman.
\newblock Prediction rules for exchangeable sequences related to species
  sampling.
\newblock \emph{Statistics \& Probability Letters}, 46\penalty0 (3):\penalty0
  251--256, 2000.

\bibitem[Hartigan(1990)]{Hartigan_1990}
J.~A. Hartigan.
\newblock Partition models.
\newblock \emph{Communications in Statistics -- Theory and Methods},
  19\penalty0 (8):\penalty0 2745--2756, 1990.

\bibitem[Henna(1985)]{Henna_1985}
J.~Henna.
\newblock On estimating of the number of constituents of a finite mixture of
  continuous distributions.
\newblock \emph{Annals of the Institute of Statistical Mathematics},
  37\penalty0 (1):\penalty0 235--240, 1985.

\bibitem[Henna(2005)]{Henna_2005}
J.~Henna.
\newblock Estimation of the number of components of finite mixtures of
  multivariate distributions.
\newblock \emph{Annals of the Institute of Statistical Mathematics},
  57\penalty0 (4):\penalty0 655--664, 2005.

\bibitem[Hjort(2000)]{Hjort_2000}
N.~L. Hjort.
\newblock Bayesian analysis for a generalised Dirichlet process prior.
\newblock \emph{Technical Report, University of Oslo}, 2000.

\bibitem[Ho et~al.(2007)Ho, James, and Lau]{Ho_2007}
M.-W. Ho, L.~F. James, and J.~W. Lau.
\newblock Gibbs partitions (EPPF's) derived from a stable subordinator are Fox
  H and Meijer G transforms.
\newblock \emph{arXiv:0708.0619}, 2007.

\bibitem[Ishwaran and James(2001)]{Ishwaran_2001b}
H.~Ishwaran and L.~F. James.
\newblock Gibbs sampling methods for stick-breaking priors.
\newblock \emph{Journal of the American Statistical Association}, 96\penalty0
  (453), 2001.

\bibitem[Ishwaran and James(2003)]{Ishwaran_2003}
H.~Ishwaran and L.~F. James.
\newblock Generalized weighted Chinese restaurant processes for species
  sampling mixture models.
\newblock \emph{Statistica Sinica}, 13\penalty0 (4):\penalty0 1211--1236, 2003.

\bibitem[Ishwaran and Zarepour(2000)]{Ishwaran_2000}
H.~Ishwaran and M.~Zarepour.
\newblock Markov chain Monte Carlo in approximate Dirichlet and beta
  two-parameter process hierarchical models.
\newblock \emph{Biometrika}, 87\penalty0 (2):\penalty0 371--390, 2000.

\bibitem[Ishwaran et~al.(2001)Ishwaran, James, and Sun]{Ishwaran_2001}
H.~Ishwaran, L.~F. James, and J.~Sun.
\newblock Bayesian model selection in finite mixtures by marginal density
  decompositions.
\newblock \emph{Journal of the American Statistical Association}, 96\penalty0
  (456), 2001.

\bibitem[Jain and Neal(2004)]{Jain_2004}
S.~Jain and R.~M. Neal.
\newblock A split-merge Markov chain Monte Carlo procedure for the Dirichlet
  process mixture model.
\newblock \emph{Journal of Computational and Graphical Statistics}, 13\penalty0
  (1), 2004.

\bibitem[Jain and Neal(2007)]{Jain_2007}
S.~Jain and R.~M. Neal.
\newblock Splitting and merging components of a nonconjugate Dirichlet process
  mixture model.
\newblock \emph{Bayesian Analysis}, 2\penalty0 (3):\penalty0 445--472, 2007.

\bibitem[James et~al.(2001)James, Priebe, and Marchette]{James_2001}
L.~F. James, C.~E. Priebe, and D.~J. Marchette.
\newblock Consistent estimation of mixture complexity.
\newblock \emph{The Annals of Statistics}, pages 1281--1296, 2001.

\bibitem[Jasra et~al.(2005)Jasra, Holmes, and Stephens]{Jasra_2005}
A.~Jasra, C.~Holmes, and D.~Stephens.
\newblock Markov chain Monte Carlo methods and the label switching problem in
  Bayesian mixture modeling.
\newblock \emph{Statistical Science}, pages 50--67, 2005.

\bibitem[Kalli et~al.(2011)Kalli, Griffin, and Walker]{Kalli_2011}
M.~Kalli, J.~E. Griffin, and S.~G. Walker.
\newblock Slice sampling mixture models.
\newblock \emph{Statistics and Computing}, 21\penalty0 (1):\penalty0 93--105,
  2011.

\bibitem[Keribin(2000)]{Keribin_2000}
C.~Keribin.
\newblock Consistent estimation of the order of mixture models.
\newblock \emph{Sankhya Ser. A}, 62\penalty0 (1):\penalty0 49--66, 2000.

\bibitem[Kruijer et~al.(2010)Kruijer, Rousseau, and Van~der
  Vaart]{Kruijer_2010}
W.~Kruijer, J.~Rousseau, and A.~Van~der Vaart.
\newblock Adaptive Bayesian density estimation with location-scale mixtures.
\newblock \emph{Electronic Journal of Statistics}, 4:\penalty0 1225--1257,
  2010.

\bibitem[Leroux(1992)]{Leroux_1992}
B.~G. Leroux.
\newblock Consistent estimation of a mixing distribution.
\newblock \emph{The Annals of Statistics}, 20\penalty0 (3):\penalty0
  1350--1360, 1992.

\bibitem[Lijoi and Pr{\"u}nster(2010)]{Lijoi_2010}
A.~Lijoi and I.~Pr{\"u}nster.
\newblock Models beyond the Dirichlet process.
\newblock \emph{Bayesian Nonparametrics}, 28:\penalty0 80, 2010.

\bibitem[Lijoi et~al.(2005)Lijoi, Mena, and Pr{\"u}nster]{Lijoi_2005b}
A.~Lijoi, R.~H. Mena, and I.~Pr{\"u}nster.
\newblock Hierarchical mixture modeling with normalized inverse-Gaussian
  priors.
\newblock \emph{Journal of the American Statistical Association}, 100\penalty0
  (472):\penalty0 1278--1291, 2005.

\bibitem[Lijoi et~al.(2007)Lijoi, Mena, and Pr{\"u}nster]{Lijoi_2007}
A.~Lijoi, R.~H. Mena, and I.~Pr{\"u}nster.
\newblock Bayesian nonparametric estimation of the probability of discovering
  new species.
\newblock \emph{Biometrika}, 94\penalty0 (4):\penalty0 769--786, 2007.

\bibitem[Lijoi et~al.(2008)Lijoi, Pr{\"u}nster, and Walker]{Lijoi_2008}
A.~Lijoi, I.~Pr{\"u}nster, and S.~G. Walker.
\newblock Bayesian nonparametric estimators derived from conditional Gibbs
  structures.
\newblock \emph{The Annals of Applied Probability}, 18\penalty0 (4):\penalty0
  1519--1547, 2008.

\bibitem[Liu(1994)]{Liu_1994}
J.~S. Liu.
\newblock The collapsed Gibbs sampler in Bayesian computations with
  applications to a gene regulation problem.
\newblock \emph{Journal of the American Statistical Association}, 89\penalty0
  (427):\penalty0 958--966, 1994.

\bibitem[MacEachern(1994)]{MacEachern_1994}
S.~N. MacEachern.
\newblock Estimating normal means with a conjugate style Dirichlet process
  prior.
\newblock \emph{Communications in Statistics -- Simulation and Computation},
  23\penalty0 (3):\penalty0 727--741, 1994.

\bibitem[MacEachern(1998)]{MacEachern_1998b}
S.~N. MacEachern.
\newblock Computational methods for mixture of Dirichlet process models.
\newblock In \emph{Practical nonparametric and semiparametric Bayesian
  statistics}, pages 23--43. Springer, 1998.

\bibitem[MacEachern(1999)]{MacEachern_1999}
S.~N. MacEachern.
\newblock Dependent nonparametric processes.
\newblock In \emph{ASA Proceedings of the Section on Bayesian Statistical
  Science}, pages 50--55, 1999.

\bibitem[MacEachern(2000)]{MacEachern_2000}
S.~N. MacEachern.
\newblock Dependent Dirichlet processes.
\newblock \emph{Unpublished manuscript, Department of Statistics, The Ohio
  State University}, 2000.

\bibitem[MacEachern and M{\"u}ller(1998)]{MacEachern_1998}
S.~N. MacEachern and P.~M{\"u}ller.
\newblock Estimating mixture of Dirichlet process models.
\newblock \emph{Journal of Computational and Graphical Statistics}, 7\penalty0
  (2):\penalty0 223--238, 1998.

\bibitem[McLachlan et~al.(2002)McLachlan, Bean, and Peel]{mclachlan2002mixture}
G.~J. McLachlan, R.~W. Bean, and D.~Peel.
\newblock A mixture model-based approach to the clustering of microarray
  expression data.
\newblock \emph{Bioinformatics}, 18\penalty0 (3):\penalty0 413--422, 2002.

\bibitem[McNicholas and Murphy(2010)]{McNicholas_2010}
P.~D. McNicholas and T.~B. Murphy.
\newblock Model-based clustering of microarray expression data via latent
  Gaussian mixture models.
\newblock \emph{Bioinformatics}, 26\penalty0 (21):\penalty0 2705--2712, 2010.

\bibitem[Medvedovic and Sivaganesan(2002)]{Medvedovic_2002}
M.~Medvedovic and S.~Sivaganesan.
\newblock Bayesian infinite mixture model based clustering of gene expression
  profiles.
\newblock \emph{Bioinformatics}, 18\penalty0 (9):\penalty0 1194--1206, 2002.

\bibitem[Medvedovic et~al.(2004)Medvedovic, Yeung, and
  Bumgarner]{Medvedovic_2004}
M.~Medvedovic, K.~Y. Yeung, and R.~E. Bumgarner.
\newblock Bayesian mixture model based clustering of replicated microarray
  data.
\newblock \emph{Bioinformatics}, 20\penalty0 (8):\penalty0 1222--1232, 2004.

\bibitem[Miller(2014)]{Miller_thesis}
J.~W. Miller.
\newblock \emph{Nonparametric and Variable-Dimension Bayesian Mixture Models:
  Analysis, Comparison, and New Methods}.
\newblock PhD thesis, Division of Applied Mathematics, Brown University, 2014.

\bibitem[Miller and Harrison(2014)]{Miller_2014}
J.~W. Miller and M.~T. Harrison.
\newblock Inconsistency of Pitman--Yor process mixtures for the number of
  components.
\newblock \emph{Journal of Machine Learning Research}, 15:\penalty0 3333--3370,
  2014.

\bibitem[M{\"u}ller and Quintana(2010)]{Mueller_2010}
P.~M{\"u}ller and F.~Quintana.
\newblock Random partition models with regression on covariates.
\newblock \emph{Journal of Statistical Planning and Inference}, 140\penalty0
  (10):\penalty0 2801--2808, 2010.

\bibitem[M{\"u}ller et~al.(2011)M{\"u}ller, Quintana, and Rosner]{Mueller_2011}
P.~M{\"u}ller, F.~Quintana, and G.~L. Rosner.
\newblock A product partition model with regression on covariates.
\newblock \emph{Journal of Computational and Graphical Statistics}, 20\penalty0
  (1), 2011.

\bibitem[Neal(1992)]{Neal_1992}
R.~M. Neal.
\newblock Bayesian mixture modeling.
\newblock In \emph{Maximum Entropy and Bayesian Methods}, pages 197--211.
  Springer, 1992.

\bibitem[Neal(2000)]{Neal_2000}
R.~M. Neal.
\newblock Markov chain sampling methods for Dirichlet process mixture models.
\newblock \emph{Journal of Computational and Graphical Statistics}, 9\penalty0
  (2):\penalty0 249--265, 2000.

\bibitem[Nguyen(2013)]{Nguyen_2013}
X.~L. Nguyen.
\newblock Convergence of latent mixing measures in finite and infinite mixture
  models.
\newblock \emph{The Annals of Statistics}, 41\penalty0 (1):\penalty0 370--400,
  2013.

\bibitem[Nobile(1994)]{Nobile_1994}
A.~Nobile.
\newblock \emph{Bayesian Analysis of Finite Mixture Distributions.}
\newblock PhD thesis, Department of Statistics, Carnegie Mellon University,
  Pittsburgh, PA, 1994.

\bibitem[Nobile(2005)]{Nobile_2005}
A.~Nobile.
\newblock Bayesian finite mixtures: a note on prior specification and posterior
  computation.
\newblock \emph{Technical Report, Department of Statistics, University of
  Glasgow}, 2005.

\bibitem[Nobile and Fearnside(2007)]{Nobile_2007}
A.~Nobile and A.~T. Fearnside.
\newblock Bayesian finite mixtures with an unknown number of components: The
  allocation sampler.
\newblock \emph{Statistics and Computing}, 17\penalty0 (2):\penalty0 147--162,
  2007.

\bibitem[Pagel and Meade(2004)]{pagel2004phylogenetic}
M.~Pagel and A.~Meade.
\newblock A phylogenetic mixture model for detecting pattern-heterogeneity in
  gene sequence or character-state data.
\newblock \emph{Systematic Biology}, 53\penalty0 (4):\penalty0 571--581, 2004.

\bibitem[Paisley et~al.(2010)Paisley, Zaas, Woods, Ginsburg, and
  Carin]{Paisley_2010}
J.~W. Paisley, A.~K. Zaas, C.~W. Woods, G.~S. Ginsburg, and L.~Carin.
\newblock A stick-breaking construction of the beta process.
\newblock In \emph{Proceedings of the 27th International Conference on Machine
  Learning}, pages 847--854, 2010.

\bibitem[Papaspiliopoulos and Roberts(2008)]{Papaspiliopoulos_2008}
O.~Papaspiliopoulos and G.~O. Roberts.
\newblock Retrospective Markov chain Monte Carlo methods for Dirichlet process
  hierarchical models.
\newblock \emph{Biometrika}, 95\penalty0 (1):\penalty0 169--186, 2008.

\bibitem[Park and Dunson(2010)]{Park_2010}
J.-H. Park and D.~B. Dunson.
\newblock Bayesian generalized product partition model.
\newblock \emph{Statistica Sinica}, 20:\penalty0 1203--1226, 2010.

\bibitem[Phillips and Smith(1996)]{Phillips_1996}
D.~B. Phillips and A.~F.~M. Smith.
\newblock Bayesian model comparison via jump diffusions.
\newblock In \emph{Markov chain Monte Carlo in Practice}, pages 215--239.
  Springer, 1996.

\bibitem[Pitman(1995)]{Pitman_1995}
J.~Pitman.
\newblock Exchangeable and partially exchangeable random partitions.
\newblock \emph{Probability Theory and Related Fields}, 102\penalty0
  (2):\penalty0 145--158, 1995.

\bibitem[Pitman(1996)]{Pitman_1996}
J.~Pitman.
\newblock Some developments of the Blackwell-MacQueen urn scheme.
\newblock \emph{Lecture Notes-Monograph Series}, pages 245--267, 1996.

\bibitem[Pitman(2006)]{Pitman_2006}
J.~Pitman.
\newblock \emph{Combinatorial Stochastic Processes}.
\newblock Springer--Verlag, Berlin, 2006.

\bibitem[Pritchard et~al.(2000)Pritchard, Stephens, and
  Donnelly]{Pritchard_2000}
J.~K. Pritchard, M.~Stephens, and P.~Donnelly.
\newblock Inference of population structure using multilocus genotype data.
\newblock \emph{Genetics}, 155\penalty0 (2):\penalty0 945--959, 2000.

\bibitem[Quintana and Iglesias(2003)]{Quintana_2003}
F.~A. Quintana and P.~L. Iglesias.
\newblock Bayesian clustering and product partition models.
\newblock \emph{Journal of the Royal Statistical Society: Series B (Statistical
  Methodology)}, 65\penalty0 (2):\penalty0 557--574, 2003.

\bibitem[Rasmussen et~al.(2009)Rasmussen, de~la Cruz, Ghahramani, and
  Wild]{Rasmussen_2009}
C.~E. Rasmussen, B.~J. de~la Cruz, Z.~Ghahramani, and D.~L. Wild.
\newblock Modeling and visualizing uncertainty in gene expression clusters
  using Dirichlet process mixtures.
\newblock \emph{IEEE/ACM Transactions on Computational Biology and
  Bioinformatics}, 6\penalty0 (4):\penalty0 615--628, 2009.

\bibitem[Reynolds et~al.(2000)Reynolds, Quatieri, and Dunn]{Reynolds_2000}
D.~A. Reynolds, T.~F. Quatieri, and R.~B. Dunn.
\newblock Speaker verification using adapted Gaussian mixture models.
\newblock \emph{Digital Signal Processing}, 10\penalty0 (1):\penalty0 19--41,
  2000.

\bibitem[Richardson and Green(1997)]{Richardson_1997}
S.~Richardson and P.~J. Green.
\newblock On Bayesian analysis of mixtures with an unknown number of
  components.
\newblock \emph{Journal of the Royal Statistical Society: Series B (Statistical
  Methodology)}, 59\penalty0 (4):\penalty0 731--792, 1997.

\bibitem[Rodriguez and Dunson(2011)]{Rodriguez_2011}
A.~Rodriguez and D.~B. Dunson.
\newblock Nonparametric Bayesian models through probit stick-breaking
  processes.
\newblock \emph{Bayesian Analysis}, 6\penalty0 (1), 2011.

\bibitem[Rodr{\'\i}guez and Walker(2014)]{rodriguez2014univariate}
C.~E. Rodr{\'\i}guez and S.~G. Walker.
\newblock Univariate Bayesian nonparametric mixture modeling with unimodal
  kernels.
\newblock \emph{Statistics and Computing}, 24\penalty0 (1):\penalty0 35--49,
  2014.

\bibitem[Roeder(1990)]{Roeder_1990}
K.~Roeder.
\newblock Density estimation with confidence sets exemplified by superclusters
  and voids in the galaxies.
\newblock \emph{Journal of the American Statistical Association}, 85\penalty0
  (411):\penalty0 617--624, 1990.

\bibitem[Sethuraman(1994)]{Sethuraman_1994}
J.~Sethuraman.
\newblock A constructive definition of Dirichlet priors.
\newblock \emph{Statistica Sinica}, 4:\penalty0 639--650, 1994.

\bibitem[Sethuraman and Tiwari(1981)]{Sethuraman_1981}
J.~Sethuraman and R.~C. Tiwari.
\newblock Convergence of Dirichlet measures and the interpretation of their
  parameter.
\newblock \emph{Technical Report, Department of Statistics, Florida State
  University}, 1981.

\bibitem[Stauffer and Grimson(1999)]{stauffer1999adaptive}
C.~Stauffer and W.~E.~L. Grimson.
\newblock Adaptive background mixture models for real-time tracking.
\newblock In \emph{IEEE Computer Society Conference on Computer Vision and
  Pattern Recognition}, volume~2. IEEE, 1999.

\bibitem[Stephens(2000)]{Stephens_2000}
M.~Stephens.
\newblock Bayesian analysis of mixture models with an unknown number of
  components---An alternative to reversible jump methods.
\newblock \emph{The Annals of Statistics}, 28\penalty0 (1):\penalty0 40--74,
  2000.

\bibitem[Teh et~al.(2007)Teh, G{\"o}r{\"u}r, and Ghahramani]{Teh_2007}
Y.~W. Teh, D.~G{\"o}r{\"u}r, and Z.~Ghahramani.
\newblock Stick-breaking construction for the Indian buffet process.
\newblock In \emph{International Conference on Artificial Intelligence and
  Statistics}, pages 556--563, 2007.

\bibitem[Thibaux and Jordan(2007)]{Thibaux_2007}
R.~Thibaux and M.~I. Jordan.
\newblock Hierarchical beta processes and the Indian buffet process.
\newblock In \emph{International Conference on Artificial Intelligence and
  Statistics}, pages 564--571, 2007.

\bibitem[Walker(2007)]{Walker_2007b}
S.~G. Walker.
\newblock Sampling the Dirichlet mixture model with slices.
\newblock \emph{Communications in Statistics -- Simulation and Computation},
  36\penalty0 (1):\penalty0 45--54, 2007.

\bibitem[West(1992)]{West_1992}
M.~West.
\newblock Hyperparameter estimation in Dirichlet process mixture models.
\newblock \emph{ISDS Discussion Paper \#92-A03, Duke University}, 1992.

\bibitem[West et~al.(1994)West, M{\"u}ller, and Escobar]{West_1994}
M.~West, P.~M{\"u}ller, and M.~D. Escobar.
\newblock Hierarchical priors and mixture models, with application in
  regression and density estimation.
\newblock In P.~Freeman and A.~F. Smith, editors, \emph{Aspects of Uncertainty:
  A Tribute to D.V. Lindley}, pages 363--386. Wiley, 1994.

\bibitem[Woo and Sriram(2006)]{Woo_2006}
M.-J. Woo and T.~N. Sriram.
\newblock Robust estimation of mixture complexity.
\newblock \emph{Journal of the American Statistical Association}, 101\penalty0
  (476), 2006.

\bibitem[Woo and Sriram(2007)]{Woo_2007}
M.-J. Woo and T.~N. Sriram.
\newblock Robust estimation of mixture complexity for count data.
\newblock \emph{Computational Statistics and Data Analysis}, 51\penalty0
  (9):\penalty0 4379--4392, 2007.

\bibitem[Yeung et~al.(2001)Yeung, Fraley, Murua, Raftery, and
  Ruzzo]{Yeung_2001}
K.~Y. Yeung, C.~Fraley, A.~Murua, A.~E. Raftery, and W.~L. Ruzzo.
\newblock Model-based clustering and data transformations for gene expression
  data.
\newblock \emph{Bioinformatics}, 17\penalty0 (10):\penalty0 977--987, 2001.

\end{thebibliography}

\end{document}